\documentclass[aps,prresearch,superscriptaddress,twocolumn,10pt,floatfix]{revtex4-2}

\usepackage{extarrows}
\usepackage{mathtools}
\usepackage{algpseudocode,algorithmicx,algorithm}
\usepackage{array}
\usepackage{booktabs}
\usepackage{etoolbox}
\usepackage{tcolorbox}
\usepackage[utf8]{inputenc}
\usepackage[T1]{fontenc}
\usepackage{graphicx}
\usepackage{amsthm}
\usepackage{amsmath}
\usepackage{mathdots}
\usepackage{verbatim}
\usepackage{pifont,url}
\usepackage{enumitem,amssymb}
\usepackage{multirow}
\usepackage{appendix}
\usepackage{epstopdf}
\usepackage{booktabs}
\usepackage{makecell}
\usepackage{diagbox}
\usepackage{setspace}
\usepackage{indentfirst}
\usepackage{qcircuit}

\usepackage{mathrsfs}
\usepackage{latexsym,bm}
\usepackage{amsmath,amsfonts,amsmath,amssymb,amsthm,bbm}
\usepackage{extarrows}

\usepackage{enumerate,cases,multirow}

\usepackage{hyperref}
\hypersetup{colorlinks=true,linkcolor=blue,anchorcolor=blue,citecolor=blue,urlcolor=blue}
\definecolor{mygray}{gray}{.9}

\usepackage{listings}
\usepackage{makeidx}        
\usepackage{booktabs}

\newcommand{\ket}[1]{| #1 \rangle} 
\newcommand{\bra}[1]{\langle #1 |} 

\def \d {\mathrm{d}}

\def \polylog {\mathrm{polylog}}

\usepackage{pifont}

\newtheorem{theoremmain}{Theorem}
\newtheorem{lemmamain}{Lemma}

\newtheorem{remarkmain}{Remark}
\newtheorem{theorem}{Theorem}[section]
\newtheorem{lemma}{Lemma}[section]

\theoremstyle{remark}
\newtheorem{remark}{\bf Remark}[section]

\allowdisplaybreaks[4]

\makeatletter
\let\oldaddcontentsline\addcontentsline

\renewcommand{\addcontentsline}[3]{
  \ifnum\pdfstrcmp{#2}{subsubsection}=0\relax
  \else
    \oldaddcontentsline{#1}{#2}{#3}
  \fi
}
\makeatother

\begin{document}
\title{Quantum Simulation of Non-Unitary Dynamics via Amplitude-Phase Separation}
\author{Qitong Hu}
\email{huqitong@sjtu.edu.cn}
\affiliation{School of Mathematical Sciences, Shanghai Jiao Tong University, Shanghai, 200240, China}
\author{Shi Jin}
\email{shijin-m@sjtu.edu.cn}
\affiliation{School of Mathematical Sciences, Shanghai Jiao Tong University, Shanghai, 200240, China}
\affiliation{Institute of Natural Sciences, Shanghai Jiao Tong University, Shanghai, 200240, China}
\affiliation{Ministry of Education (MOE) Funded Key Lab of Scientific and Engineering Computing, Shanghai Jiao Tong University, Shanghai, 200240, China}
\begin{abstract}
\par Linear non-unitary dynamics arise in open quantum systems, non-Hermitian models, and numerical evolution problems, yet current quantum algorithms do not cleanly separate coherent and dissipative effects at the design level. We introduce {\it Amplitude-Phase Separation} (APS), a decomposition framework with two complementary forms: phase-driven APS isolates the unitary component and maps the remainder to a Hermitian problem, whereas amplitude-driven APS extracts the Hermitian component and treats the remaining interaction separately. For time-independent dynamics, the two routes capture complementary advantages within one framework: phase-driven APS yields additive rather than multiplicative tolerance dependence, while amplitude-driven APS yields square-root dissipative scaling in multiscale regimes. APS also provides a unified interpretation of representative methods, including LCHS (Linear Combination of Hamiltonian Simulation) and NDME (Non-Diagonal Density Matrix Encoding), and clarifies where coherent and dissipative bottlenecks enter non-unitary simulation. The benchmarks confirm the predicted crossover between phase-driven and amplitude-driven advantages in advection-diffusion and Bloch-relaxation models.
\end{abstract}
\maketitle
\section{Introduction}
\label{section:1}
Quantum computers natively implement unitary dynamics governed by the Schr\"odinger equation. Many relevant processes, however, are effectively non-unitary, including open-system evolution, effective non-Hermitian models, and numerical evolution operators for PDEs and ODEs. Efficient quantum algorithms for these dynamics are therefore important in quantum information, open-system physics, and scientific computing~\cite{Lloyd1996UniversalQS,Childs2012Hamiltonian,Daley2022Nature,Babbush2023Oscillators,Zhang2025Eigen,Li2025Lattice}.

\par Existing methods fall mainly into two families. Hamiltonian-based approaches, including Schr\"odingerization~\cite{Jin2024Schrodingerization} and LCHS~\cite{An2023QuantumAF}, convert non-unitary tasks into unitary simulation and have markedly improved tolerance dependence~\cite{An2023Optimal,Low2025Optimal,Jin2025Optimal}. Lindbladian-based approaches, represented by NDME~\cite{Shang2025Lindbladians}, embed linear dynamics into open-system evolution and connect more directly to dissipative physics. Both routes are effective, but they optimize different resources and do not by themselves provide a unified operator-level organization for general non-unitary evolution.

\subsection{Problem and Gap}
\label{section:1-1}
\par We consider linear, non-autonomous (time-dependent), non-unitary dynamics that arise in open-system physics and scientific-computing problems such as partial differential equations (PDEs) and ordinary differential equations (ODEs):
\begin{equation}
    \begin{aligned}
    \label{equ:1}
        \frac{\d u(t)}{\d t}=-A(t)u(t)+b(t),\quad t\in[0,T]; \qquad u(0)=u_0,
    \end{aligned}
\end{equation}
where $A(t)\in\mathbb{C}^{n\times n}$ is a time-dependent matrix and $b(t)\in\mathbb{C}^{n\times 1}$ is a time-dependent vector. Here the system may be dissipative, and because $A(t)$ depends on time, the solution to Eq.~(\ref{equ:1}) at time $t$ is represented by a time-ordered exponential operator. We express this operator through the Cartesian decomposition~\cite{Soderlind2024Norm}, which writes any complex matrix $A(t)\in \mathbb{C}^{n \times n}$ as $A(t)=A_1(t)+iA_2(t)$, with Hermitian part $A_1(t)=\frac{A(t)+A^\dagger(t)}{2}\succeq 0$ and anti-Hermitian part $A_2(t)=\frac{A(t)-A^\dagger(t)}{2i}$.
For clarity, we refer to the operators associated with general $A(t)$, $A_1(t)$, and $A_2(t)$ as the {\it non-unitary}, {\it Hermitian}, and {\it unitary} operators, respectively.
In the special {\it time-independent} case $A(t) = A$, Low et al.~\cite{Low2025Optimal} conjectured that the optimal query complexity for block encoding $e^{-At}$ should scale as
\begin{equation}
    \begin{aligned}
        \label{equ:2}
        \mathcal{O}\Big(\underbrace{\sqrt{A_{1,\max}t\log\varepsilon^{-1}}}_{Fast-Forwarding}+\underbrace{A_{2,\max}t+\log\varepsilon^{-1}\vphantom{\sqrt{A_{1,\max}t\log\varepsilon^{-1}}}}_{Non-Unitary\ Optimality}\Big),
    \end{aligned}
\end{equation}
with $A_{1,\max}=\|A_1\|$ and $A_{2,\max}=\|A_2\|$. Here, the square-root $\sqrt{A_{1,\max}t}$ dependence is often referred to as non-unitary fast-forwarding~\cite{Low2017Forwarding,Shang2025Forwarding}, while the additive dependence on the error tolerance $\varepsilon$ in the $A_{2,\max}t$ term is regarded as non-unitary optimality~\cite{Low2017Optimal,Gilyen2018singular}, meaning that the tolerance contribution enters additively rather than multiplicatively in the leading query bound; this behavior arises from the no-fast-forwarding theorem~\cite{Berry2015Taylor,Berry2007Hamiltonians,Berry2015Hamiltonian}.

\par A natural question is whether one can combine the main strengths of these lines in a single construction. Current evidence suggests that this is nontrivial. Hamiltonian-based constructions inherit no-fast-forwarding constraints and may carry multiplicative tolerance overheads in broad settings (see Section~\ref{section:6-1}). Lindbladian-based constructions provide strong dissipative tools, but available fast-forwarding guarantees remain concentrated in time-independent purely dissipative regimes~\cite{Shang2025ExpLindbladian,Shang2025Forwarding}. The challenge is therefore not only to sharpen existing primitives, but also to find a decomposition that cleanly separates coherent and dissipative roles.

\par A key limitation is that the Cartesian decomposition is often used as an algebraic identity rather than as a guide for algorithm design. Writing $A(t)=A_1(t)+iA_2(t)$ does not by itself specify which simulation primitive should handle each part or which complexity barrier each part should inherit. Recent work~\cite{Jin2026Transmutation} revisits the special case of a {\it time-independent normal} matrix $A$ and derives the exact factorization $e^{-At}=e^{-A_1t}e^{-iA_2t}$. This observation suggests a broader design principle: the unitary factor can leverage Hamiltonian-optimal simulation, while the dissipative factor can leverage dissipative fast-forwarding tools. In that normal time-independent setting, this picture is consistent with the scaling target in Eq.~(\ref{equ:2}). This motivates the following question for {\it general time-dependent (non-autonomous)} systems:

\par {\it Can a general non-unitary evolution operator $\mathcal{T} e^{-\int_0^t A(s)\d s}$ be decomposed into a Hermitian part and a unitary part so as to achieve either fast-forwarding or non-unitary optimality as in Eq.~(\ref{equ:2})?}

\subsection{APS and Main Results}
\label{section:1-2}
\par Our answer is {\it Amplitude-Phase Separation} (APS), which turns the Cartesian decomposition into an algorithmic method for time-dependent non-unitary evolution. Phase-driven APS writes $\mathcal{T}e^{-\int_0^tA(s)\d s}$ as a unitary phase operator $\mathcal{T}e^{-i\int_0^tA_2(s)\d s}$ followed by a Hermitian amplitude operator, while amplitude-driven APS extracts the Hermitian factor $\mathcal{T}e^{-\int_0^tA_1(s)\d s}$ and treats the residual interaction separately. Compared with existing polar decomposition methods~\cite{Lloyd2020Polar,Quek2022Polar}, APS is designed for broader non-unitary and open-system settings.

\par APS applies generally as a decomposition, while the strongest complexity results proved here concern time-independent dynamics. In the considered oracle model, phase-driven APS gives additive rather than multiplicative tolerance overhead compatible with no-fast-forwarding limits from Hamiltonian simulation~\cite{Low2017Optimal,Gilyen2018singular,Berry2015Taylor} (Section~\ref{section:3}). Amplitude-driven APS gives square-root dissipative scaling in regimes where dissipation dominates ($A_{1,\max}\gg A_{2,\max}$), extending beyond purely Hermitian evolution (Section~\ref{section:4}). Table \ref{table:1} summarizes the corresponding assumptions and query complexities.

\par To connect these complexity statements with concrete model problems, the benchmarks in Section~\ref{section:5} confirm the predicted crossover between phase-driven and amplitude-driven advantages on a one-dimensional advection-diffusion problem and a complementary Bloch-type relaxation model.

\par APS also clarifies how Hamiltonian-based and dissipative-simulation approaches fit together. It recovers LCHS as phase-driven APS plus a generalized Fourier transform for $e^{-x}$ ($x\ge0$) applied to time-ordered Hermitian operators (see Section~\ref{section:6-1}), while NDME can be reinterpreted through phase-driven APS and the interaction picture for Lindbladians (see Section~\ref{section:6-2}). This makes explicit where coherent and dissipative resources enter non-unitary simulation and where further improvements would have to come from.

\subsection{Key Contributions}
\label{section:1-3}
\begin{itemize}[leftmargin=*]
\item \textbf{APS serves as a unified design principle for non-unitary simulation.} Rather than only producing two new complexity expressions, the paper places Hamiltonian and dissipative routes in one operator-level framework and shows when phase-driven or amplitude-driven organization is the natural choice.
\item \textbf{Phase-driven APS gives additive tolerance overhead in the time-independent setting.} In the stated oracle model, phase-driven APS plus shifted Dyson achieves $\mathcal{O}\!\left(\frac{\|u_0\|}{\|u(t)\|}\left(A_{\max}t+\frac{\log\varepsilon^{-1}}{\log\log\varepsilon^{-1}}\right)\right)$, so the tolerance contribution enters additively rather than multiplicatively in the leading query bound.
\item \textbf{Amplitude-driven APS identifies the dissipative regime with a square-root speedup.} In the stated oracle model, amplitude-driven APS yields $\mathcal{O}\!\left(\frac{\|u_0\|}{\|u(t)\|}\left(\sqrt{A_{1,\max}t\log\varepsilon^{-1}}+A_{2,\max}t\frac{\log\varepsilon^{-1}}{\log\log\varepsilon^{-1}}\right)\right)$, giving square-root scaling in a broader non-unitary setting than purely Hermitian evolution when dissipation dominates, and the benchmarks confirm the predicted crossover between phase-driven and amplitude-driven advantages.
\item \textbf{APS reinterprets representative methods within one structural framework.} APS recovers LCHS and NDME from the same decomposition viewpoint and clarifies how coherent and dissipative ingredients enter these constructions.
\end{itemize}

\begin{table*}[!tbp]
\centerline{
\resizebox{\textwidth}{!}{
\begin{tabular}{c|c|c|c|c}
\hline
\hline
    \makecell*[c]{Type}&
    \makecell*[c]{Method}&
    \makecell*[c]{Overall Queries}&
    \makecell*[c]{Applicable Operator}&
    \makecell*[c]{Queries to $u_0$}\\
\hline
    \multirow{4}{*}{\vspace{-3.75em}\makecell*[c]{Hamiltonian\\ Simulation}}&
    \makecell*[c]{$p$-th order Trotter~\cite{Berry2007Hamiltonians}}&
    \makecell*[c]{$\mathcal{O}\left(H_{\max}t(H_{\max}t\varepsilon^{-1})^{\frac{1}{p}}\right)$}&
    \makecell*[c]{Unitary}&
    \multirow{4}{*}{\vspace{-3.75em}\makecell*[c]{$\mathcal{O}\left(1\right)$}}\\
\cline{2-4}
    &\makecell*[c]{Truncated Dyson~\cite{Berry2015Taylor}}&
    \makecell*[c]{$\mathcal{O}\left(H_{\max}t\frac{\log\varepsilon^{-1}}{\log\log\varepsilon^{-1}}\right)$}&\makecell*[c]{Unitary}\\
\cline{2-4}
    &\makecell*[c]{Signal Processing~\cite{Low2017Optimal}}&
    \multirow{2}{*}{\vspace{-2em}\makecell*[c]{$\mathcal{O}\left(H_{\max}t +\frac{\log\varepsilon^{-1}}{\log\log\varepsilon^{-1}}\right)$}}&\multirow{2}{*}{\vspace{-2em}\makecell*[c]{Time-Independent\\ Unitary}}\\
\cline{2-2}
    &\makecell*[c]{Quantum Singular\\ Value Transformation~\cite{Gilyen2018singular}}&&\\
\hline
\hline
    \multirow{4}{*}{\vspace{-5.5em}\makecell*[c]{Lindbladian\\ Simulation}}&
    \makecell*[c]{Church-Turing\\ Theorem~\cite{Kliesch2011Church}}&
    \makecell*[c]{$\mathcal{O}\left(L_{\max}^2T^2\varepsilon^{-1}\right)$}&
    \makecell*[c]{Hermitian}&
    \multirow{4}{*}{\vspace{-5.5em}\makecell*[c]{$-$}}\\
\cline{2-4}
    &
    \makecell*[c]{Higher-Order Series\\ Expansion~\cite{Li2023ICALP}}&
    \makecell*[c]{$\mathcal{O}\left(L_{\max}t\frac{\log\varepsilon^{-1}}{\log\log\varepsilon^{-1}}\right)$}&\makecell*[c]{Time-Independent\\ Hermitian}\\
\cline{2-4}
    &\makecell*[c]{Shifted Technique~\cite{Shang2025ExpLindbladian}}&
    \makecell*[c]{$\mathcal{O}\left(L_{\max}t +\frac{\log\varepsilon^{-1}}{\log\log\varepsilon^{-1}}\right)$}&\makecell*[c]{Time-Independent\\ Hermitian}\\
\cline{2-4}
    &\makecell*[c]{Quantum Phase\\ Estimation~\cite{Shang2025Forwarding}}&
    \makecell*[c]{$\mathcal{O}\left(\sqrt{L_{\max}t\log\varepsilon^{-1}}\right)$}&\makecell*[c]{Time-Independent\\ Hermitian}\\
\hline
\hline
    \multirow{7}{*}{\vspace{-21em}\makecell*[c]{Non-Unitary\\ Dynamics}}&\makecell*[c]{LCHS~\cite{An2023QuantumAF}}&
    \multirow{2}{*}{\makecell*[c]{\vspace{-0.75em}$\mathcal{O}\left(\frac{\|u_0\|}{\|u(t)\|}A_{\max}t \varepsilon^{-1}\right)$}}&
    \multirow{2}{*}{\makecell*[c]{\vspace{-0.75em}Non-Unitary}}&
    \multirow{7}{*}{\vspace{-21em}\makecell*[c]{$\mathcal{O}\left(\frac{\|u_0\|}{\|u(t)\|}\right)$}}\\
\cline{2-2}
    &\makecell*[c]{Schr\"odingerization~\cite{Jin2024Schrodingerization}}&&\\
\cline{2-4}
    &\multirow{2}{*}{\vspace{-1.5em}\makecell*[c]{Improved LCHS~\cite{An2023Optimal}}}&
    \makecell*[c]{$\mathcal{O}\left(\frac{\|u_0\|}{\|u(t)\|}A_{\max}t (\log\varepsilon^{-1})^{\frac{1}{\beta}}\right), \beta\in (0,1)$}&\makecell*[c]{Time-Independent\\ Non-Unitary}\\
\cline{3-4}
    &&
    \makecell*[c]{$\mathcal{O}\left(\frac{\|u_0\|}{\|u(t)\|}\|H\|t (\log\varepsilon^{-1})^{1+\frac{1}{\beta}}\right), \beta\in (0,1)$}&\makecell*[c]{Time-Dependent\\ Non-Unitary}\\
\cline{2-4}
    &\makecell*[c]{Optimal LCHS~\cite{Low2025Optimal}}&
    \multirow{2}{*}{\makecell*[c]{$\mathcal{O}\left(\frac{\|u_0\|}{\|u(t)\|}A_{\max}t \log\varepsilon^{-1}\right)$}}&\multirow{2}{*}{\makecell*[c]{Time-Independent\\ Non-Unitary}}\\
\cline{2-2}
    &\multirow{2}{*}{\vspace{-1.5em}\makecell*[c]{Optimal Schr\"odingerization~\cite{Jin2025Optimal}}}&&\\
\cline{3-4}
    &&
    \makecell*[c]{$\mathcal{O}\left(\frac{\|u_0\|}{\|u(t)\|}A_{\max}t (\log\varepsilon^{-1})^2\right)$}&\makecell*[c]{Time-Dependent\\ Non-Unitary}\\
\cline{2-4}
    &\multirow{2}{*}{\vspace{-1.5em}\makecell*[c]{Lindbladians~\cite{Shang2025Lindbladians}}}&
    \makecell*[c]{$\mathcal{O}\left(\frac{\|u_0\|}{\|u(t)\|}A_{\max}t\frac{\log\varepsilon^{-1}}{\log\log\varepsilon^{-1}}\right)$}&\makecell*[c]{Time-Independent\\ Non-Unitary}\\
\cline{3-4}
    &&
    \makecell*[c]{$\mathcal{O}\left(\frac{\|u_0\|}{\|u(t)\|}A_{\max}t\frac{(\log\varepsilon^{-1})^2}{(\log\log\varepsilon^{-1})^2}\right)$}&\makecell*[c]{Time-Dependent\\ Non-Unitary}\\
\cline{2-4}
    &\multirow{2}{*}{\vspace{-1.5em}\makecell*[c]{Phase-Driven APS\\ (This work)}}&
    \makecell*[c]{$\mathcal{O}\left(\frac{\|u_0\|}{\|u(t)\|}\left(A_{\max}t+\frac{\log\varepsilon^{-1}}{\log\log\varepsilon^{-1}}\right)\right)$}&\makecell*[c]{Time-Independent\\ Non-Unitary}\\
\cline{3-4}
    &&
    \makecell*[c]{$\mathcal{O}\left(\frac{\|u_0\|}{\|u(t)\|}A_{\max}t\frac{\log\varepsilon^{-1}}{\log\log\varepsilon^{-1}}\right)$}&\makecell*[c]{Time-Dependent\\ Non-Unitary}\\
\cline{2-4}
    &\makecell*[c]{Amplitude-Driven APS\\ (This work)}&
    \makecell*[c]{$\mathcal{O}\left(\frac{\|u_0\|}{\|u(t)\|}\left(\sqrt{A_{1,\max}t\log\varepsilon^{-1}}+A_{2,\max}t\frac{\log\varepsilon^{-1}}{\log\log\varepsilon^{-1}}\right)\right)$}&\makecell*[c]{Time-Independent\\ Non-Unitary}\\
\hline
\hline
\end{tabular}}}
\caption{\label{table:1}\textbf{Summary of Quantum Algorithms.} All complexity expressions in Table \ref{table:1} are stated under each method’s corresponding oracle-access model and assumptions; comparisons are made only under those stated settings.}
\vspace{-1.5em}
\end{table*}

\par The rest of this paper is organized as follows. Section~\ref{section:2} introduces the two APS factorizations. Section~\ref{section:3} establishes non-unitary optimality via phase-driven APS and the shifted Dyson series. Section~\ref{section:4} establishes fast-forwarding via amplitude-driven APS. Section~\ref{section:5} presents advection-diffusion and Bloch-type relaxation benchmarks. Section~\ref{section:6} discusses the relation to LCHS and NDME, together with current limitations and open directions.

\section{Amplitude-Phase Separation}
\label{section:2}
\par APS applies to general {\it time-dependent} (non-autonomous) non-unitary dynamics in Eq.~(\ref{equ:1}). It is closely related to exponential integrators~\cite{Hochbruck2010Exp} and the interaction picture~\cite{Low2019Interaction, Bosse2025Hamiltonian}, both of which preserve selected structures of the evolution. APS does the same at the operator level: phase-driven APS keeps the unitary part explicit, whereas amplitude-driven APS keeps the Hermitian part explicit.

\subsection{Phase-Driven APS}
\label{section:2-1}
\par We define the exponential integrator $\mathcal{U}_p(t)=\mathcal{T}e^{-i\int_0^t A_2(s)\d s}$ and set $u_p(t)=\mathcal{U}_p^\dagger(t) u(t)$, such that $u_p(0) = u_0$. The resulting dynamical system governing $u_p(t)$ is then given by
\begin{equation}
    \begin{aligned}
        \label{equ:3}
        \frac{\d u_p(t)}{\d t}=-A_p(t)u_p(t),
    \end{aligned}
\end{equation}
where $A_p(t)=\mathcal{U}_p^\dagger(t)A_1(t)\mathcal{U}_p(t)$ is a Hermitian matrix. Using the relation between $u_p(t)$ and $u(t)$, we obtain an equivalent representation of the operator $\mathcal{T}e^{-\int_0^t A(s)\d s}$. This representation is referred to as phase-driven APS:
\begin{equation}
    \begin{aligned}
        \label{equ:4}
        &\textit{Phase-Driven APS:}\\
        &\qquad\mathcal{T}e^{-\int_0^tA(s)\d s}=\mathcal{T}e^{-i\int_0^tA_2(s)\d s}\cdot\mathcal{T}e^{-\int_0^tA_p(s)\d s}.
    \end{aligned}
\end{equation}
This decomposes any non-unitary operator into a phase-amplitude form. The detailed proof is provided in Theorem~\ref{theorem:A1}.

\begin{figure*}[htbp]
    \centering
    \includegraphics[width=0.7\linewidth]{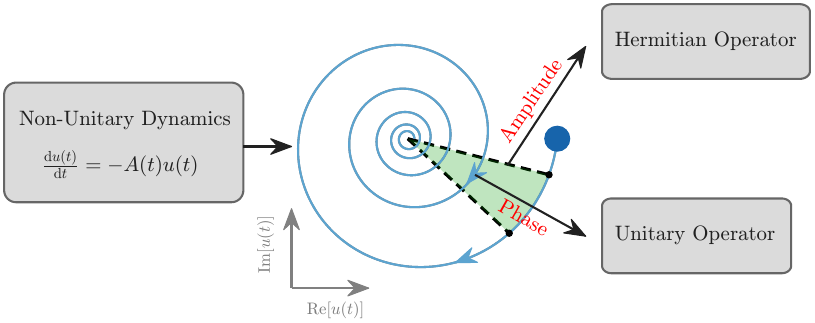}
    \caption{
    \label{figure:aps-overview}\textbf{Amplitude-Phase Separation (APS) for non-unitary dynamics.}}
\end{figure*}

\begin{theoremmain}
\label{theorem:A1}
\par Consider the following linear ODE with the initial condition $u(0)=u_0$:
\begin{equation}
    \begin{aligned}
        \label{equ:interaction:1}
        \frac{\d u(t)}{\d t}=-A(t)u(t)=-(A_1(t)+iA_2(t))u(t),\quad t\in[0,T].
    \end{aligned}
\end{equation}
Then, the following equivalent expression holds for the operator $\mathcal{T} e^{-\int_0^t A(s)\d s}$:
\begin{equation}
    \begin{aligned}
        \label{equ:interaction:2}
        \mathcal{T}e^{-\int_0^tA(s)\d s}=\mathcal{T}e^{-i\int_0^tA_2(s)\d s}\cdot\mathcal{T}e^{-\int_0^tA_p(s)\d s},
    \end{aligned}
\end{equation}
where $A_p(t)=\mathcal{U}_p^\dagger(t) A_1(t)\mathcal{U}_p(t)$ and $\mathcal{U}_p(t)=\mathcal{T}e^{-i\int_0^t A_2(s)\d s}$. It is evident that the eigenvalues of $A_p(t)$ are identical to those of $A_1(t)$. This implies that $A_p(t)$ is a positive semidefinite matrix.
\begin{proof}
\par First, we define the exponential integrator~\cite{Hochbruck2010Exp} $\mathcal{U}_p(t)=\mathcal{T}e^{-i\int_0^t A_2(s)\d s}$~\cite{Low2019Interaction,Bosse2025Hamiltonian}. Clearly, this operator is the solution to the following ODE:
\begin{equation*}
    \begin{aligned}
        \frac{\d \mathcal{U}_p(t)}{\d t}=-iA_2(t)\mathcal{U}_p(t),\quad \mathcal{U}_p(0)=I.
    \end{aligned}
\end{equation*}
Next, we let $u_p(t)=\mathcal{U}_p^\dagger(t) u(t)$ with $u_p(0) = u_0$. We can derive the ODE satisfied by $u_p(t)$ as follows:
\begin{equation}
    \begin{aligned}
        \label{equ:interaction:3}
        \frac{\d u_p(t)}{\d t}=&\frac{\d \mathcal{U}_p^\dagger(t)}{\d t}u(t)+\mathcal{U}_p^\dagger(t)\frac{\d u(t)}{\d t}\\
        =&\left[-\mathcal{U}_p^\dagger(t)\frac{d\mathcal{U}_p(t)}{\d t}\mathcal{U}_p^\dagger(t)-\mathcal{U}_p^\dagger(t)(A_1(t)+iA_2(t))\right]u(t)\\
        =&-\mathcal{U}_p^\dagger(t)A_1(t)\mathcal{U}_p(t)u_p(t)\\
        :=&-A_p(t)u_p(t),
    \end{aligned}
\end{equation}
where we define $A_p(t)=\mathcal{U}_p^\dagger(t)A_1(t)\mathcal{U}_p(t)$. Hence, we can transform the non-unitary dynamics in Eq.~(\ref{equ:interaction:1}) into purely dissipative, time-dependent dynamics in Eq.~(\ref{equ:interaction:3}). The state $u(t)$ is then expressed as:
\begin{equation*}
    \begin{aligned}
        u(t)&=\mathcal{U}_p(t)u_p(t)=\mathcal{T}e^{-i\int_0^tA_2(s)\d s}\cdot\mathcal{T}e^{-\int_0^tA_p(s)\d s}\cdot u_0.
    \end{aligned}
\end{equation*}
From this, we obtain the equivalent representation of the operator $\mathcal{T}e^{-\int_0^t A(s)\d s}$ as:
\begin{equation*}
    \begin{aligned}
        \mathcal{T}e^{-\int_0^tA(s)\d s}=\mathcal{T}e^{-i\int_0^tA_2(s)\d s}\cdot\mathcal{T}e^{-\int_0^tA_p(s)\d s}.
    \end{aligned}
\end{equation*}
Therefore, we decompose the simulation of a non-unitary operator into the separate simulations of a unitary operator $\mathcal{T}e^{-i\int_0^tA_2(s)\d s}$ and a Hermitian operator $\mathcal{T}e^{-\int_0^tA_p(s)\d s}$.
\end{proof}
\end{theoremmain}
\noindent Numerous optimal results already exist for simulating unitary operators, particularly in the time-independent case $A(t) = A$. Thus, we can focus mainly on approximating the Hermitian operator.
\subsubsection{Relation with Product Formula}
\par APS is directly connected to product-formula techniques~\cite{Bosse2025Hamiltonian}, a standard tool in quantum computing. We use the time-independent case of phase-driven APS for illustration. By treating $A_1$ and $iA_2$ as the two parts in a product formula, the following equality holds:
\begin{widetext}
\begin{equation}
    \begin{aligned}
        \label{equ:interaction:7}
        e^{-At}
        =\lim\limits_{n\to\infty}\left(e^{-i\frac{A_2t}{n}}e^{-\frac{A_1t}{n}}\right)^n
        &=e^{-iA_2t}\lim\limits_{n\to\infty}\prod\limits_{k=1}^ne^{i(1-\frac{k}{n})A_2t}e^{-\frac{A_1t}{n}}e^{-i(1-\frac{k}{n})A_2t}\\
        &=e^{-iA_2t}\lim\limits_{n\to\infty}\prod\limits_{k=1}^ne^{-e^{i(1-\frac{k}{n})A_2t}\cdot\frac{A_1t}{n}\cdot e^{-i(1-\frac{k}{n})A_2t}}\\
        &=e^{-iA_2t}\mathcal{T}e^{-\int_0^te^{iA_2s}\cdot A_1\cdot e^{-iA_2s}\d s}.
    \end{aligned}
\end{equation}
\end{widetext}
This corresponds to the phase-driven APS introduced in Theorem~\ref{theorem:A1}.
\subsection{Amplitude-Driven APS}
\label{section:2-2}
\par Similarly, amplitude-driven APS takes an amplitude-phase form. It is obtained by introducing the exponential integrator $\mathcal{U}_a(t)=\mathcal{T}e^{-\int_0^tA_1(s)\d s}$ and defining $A_a(t)=\mathcal{U}_a^{-1}(t) A_2(t)\mathcal{U}_a(t)$. It is given explicitly by:
\begin{equation}
    \begin{aligned}
        \label{equ:5}
        &\textit{Amplitude-Driven APS:}\\
        &\qquad\mathcal{T}e^{-\int_0^tA(s)\d s}=\mathcal{T}e^{-\int_0^tA_1(s)\d s}\cdot\mathcal{T}e^{-i\int_0^tA_a(s)\d s},
    \end{aligned}
\end{equation}
and its proof is provided in Theorem~\ref{theorem:A2}.
\begin{theoremmain}
\label{theorem:A2}
\par Consider the following linear ODE with the initial condition $u(0)=u_0$:
\begin{equation}
    \begin{aligned}
        \label{equ:interaction:4}
        \frac{\d u(t)}{\d t}=-A(t)u(t)=-(A_1(t)+iA_2(t))u(t),\quad t\in[0,T].
    \end{aligned}
\end{equation}
Then, the following equivalent expression holds for the operator $\mathcal{T}e^{-\int_0^tA(s)\d s}$:
\begin{equation}
    \begin{aligned}
        \label{equ:interaction:5}
        \mathcal{T}e^{-\int_0^tA(s)\d s}=\mathcal{T}e^{-\int_0^tA_1(s)\d s}\cdot\mathcal{T}e^{-i\int_0^tA_a(s)\d s},
    \end{aligned}
\end{equation}
where $A_a(t)=\mathcal{U}_a^{-1}(t) A_2(t)\mathcal{U}_a(t)$ and $\mathcal{U}_a(t)=\mathcal{T}e^{-\int_0^tA_1(s)\d s}$.
\begin{proof}
\par We set $\mathcal{U}_a(t)=\mathcal{T}e^{-\int_0^tA_1(s)\d s}$ and define $u_a(t)=\mathcal{U}_a^{-1}(t)u(t)$, such that $u_a(0)=u_0$. Then we have:
\begin{equation}
    \begin{aligned}
        \label{equ:interaction:6}
        \frac{\d u_a(t)}{\d t}=&\frac{\d \mathcal{U}_a^{-1}(t)}{\d t}u(t)+\mathcal{U}_a^{-1}(t)\frac{\d u(t)}{\d t}\\
        =&-i\mathcal{U}_a^{-1}(t)A_2(t)\mathcal{U}_a(t)u_a(t)\\
        :=&-iA_a(t)u_a(t),
    \end{aligned}
\end{equation}
where $A_a(t)=\mathcal{U}_a^{-1}(t) A_2(t)\mathcal{U}_a(t)$. Hence, we obtain the expression for $u(t)$ as:
\begin{equation*}
    \begin{aligned}
        u(t)&=\mathcal{U}_a(t)u_a(t)=\mathcal{T}e^{-\int_0^tA_1(s)\d s}\cdot\mathcal{T}e^{-i\int_0^tA_a(s)\d s}\cdot u_0.
    \end{aligned}
\end{equation*}
From this, the equivalent representation of the operator $\mathcal{T}e^{-\int_0^t A(s)\d s}$ is derived:
\begin{equation*}
    \begin{aligned}
        \mathcal{T}e^{-\int_0^tA(s)\d s}=\mathcal{T}e^{-\int_0^tA_1(s)\d s}\cdot\mathcal{T}e^{-i\int_0^tA_a(s)\d s}.
    \end{aligned}
\end{equation*}
\end{proof}
\end{theoremmain}
\noindent In the time-independent case, i.e., $A(t)=A$, we can decompose the operator $e^{-At}$ into a time-dependent operator involving $A_a(t)$ and a time-independent Hermitian operator involving $A_1$ through this factorization. The latter admits a fast-forwarding quantum implementation, achieving a square-root query complexity. Using this decomposition, the same bound can be achieved with respect to both matrix norm and evolution time.
\par Under the definition $A_a(t)=\mathcal{U}_a^{-1}(t)A_2(t)\mathcal{U}_a(t)$, $A_a(t)$ is generally non-Hermitian even when $A_2(t)$ is Hermitian. Therefore, the corresponding interaction factor $\mathcal{T}e^{-i\int_0^tA_a(s)\d s}$ should be treated as a general (not necessarily unitary) operator.
\medskip
\par Phase-driven APS separates the unitary and Hermitian parts of the system, while amplitude-driven APS extracts the Hermitian part from the dynamics, as illustrated in Fig.~\ref{figure:aps-overview}. This capability allows the two formulations to combine advantages and optimality guarantees from distinct approaches, thereby extending their applicability to non-unitary and time-dependent dynamics.
\par Notably, both APS formulations rely on time-dependent representations even when solving time-independent problems, which is counterintuitive. Since existing results for time-dependent unitary, Hermitian, and general non-unitary operators have not yet been unified optimally, we need different APS methods for different problem settings, as discussed below.

\section{Non-Unitary Optimality via Phase-Driven APS}
\label{section:3}
For Hamiltonian simulation, the optimal query complexity scales at least linearly with the evolution time $t$~\cite{Berry2015Taylor} and achieves an additive $\mathcal{O}(\log \varepsilon^{-1})$ dependence on the error tolerance $\varepsilon$~\cite{Low2017Optimal}. Specifically, techniques such as Quantum Signal Processing (QSP)~\cite{Low2017Optimal} and Quantum Singular Value Transformation (QSVT)~\cite{Gilyen2018singular} reach this lower bound for time-independent Hamiltonians. However, a significant gap remains when extending this optimality to {\it time-independent} non-unitary dynamics using existing methods, as outlined in Section~\ref{section:6}.

\subsection{Shifted Dyson Series}
\label{section:3-1}
\par We can bridge this gap using the phase-driven APS. By expressing $e^{-At}=e^{-iA_2t}\mathcal{T}e^{-\int_0^t A_p(s)\d s}$ with $A_p(s)=e^{iA_2s}A_1e^{-iA_2s}$, we effectively reformulate the simulation problem into implementing $\mathcal{T}e^{-\int_0^t A_p(s) \d s}$.
\par To handle this Hermitian operator, we introduce a {\it shift technique} into its Dyson series and express it as a Riemann sum (see Appendix~\ref{section:8-1}). This shift is analogous to expanding the function $e^{-x}$ for $x\in[0,1]$ around $1$ instead of $0$. It redistributes the eigenvalues more symmetrically around zero, significantly improving error control when truncating the series. This yields a truncation with the number of terms $M$ given by:
\begin{equation}
    \begin{aligned}
        \label{equ:6}
        M=\mathcal{O}\left(A_{\max}t+\frac{\log\varepsilon^{-1}}{\log\log\varepsilon^{-1}}\right),
    \end{aligned}
\end{equation}
where $t$ is the evolution time. The detailed proof is provided in Lemma~\ref{lemma:B1}.
\begin{lemmamain}
\label{lemma:B1}
\par Let $L(s)$ be a Hermitian matrix satisfying $L(s)\succeq 0$ for all $s\in[0, t]$. Define $L_{\max}=\sup\limits_{s\in[0, t]}\|L(s)\|$ and $\tau=tL_{\max}$. For any error tolerance $\varepsilon>0$, if the truncation order $M$ is chosen such that 
\begin{equation}
    \begin{aligned}
        \label{equ:dyson:2}
        M=\mathcal{O}\left(\tau+\frac{\log\varepsilon^{-1}}{\log\log\varepsilon^{-1}}\right),
    \end{aligned}
\end{equation}
with the implied multiplicative constant at most $e^2$, then the time-ordered Hermitian operator $\mathcal{T}e^{-\int_0^t L(s)\d s}$ can be approximated by a $M$-term series with error at most $\varepsilon$:
\begin{equation*}
    \begin{aligned}
        \left\|\mathcal{T}e^{-\int_0^tL(s)\d s}-\sum\limits_{k=0}^M(-1)^kP_k\right\|\le\varepsilon,
    \end{aligned}
\end{equation*}
where $P_k$ is the $k-$th term in the Dyson series, given by:
\begin{equation}
    \begin{aligned}
        \label{equ:dyson:3}
        P_k=\frac{e^{-\tau}}{k!}\int_0^t\cdots\int_0^t\mathcal{T}\left[\tilde{L}(t_1)\cdots\tilde{L}(t_k)\right]\d^k t,
    \end{aligned}
\end{equation}
in which $\tilde{L}(t)=L(t)-L_{\max}I$.
\begin{proof}
\par Note that here we do not follow the standard Dyson series directly; instead, we first perform the following transformation before expanding:
\begin{equation*}
    \begin{aligned}
        &\mathcal{T}e^{-\int_0^tL(s)\d s}=e^{-\tau}\mathcal{T}e^{-\int_0^t\tilde{L}(s)\d s}\\
        =&e^{-\tau}\sum\limits_{k=0}^\infty \frac{(-1)^k}{k!}\int_0^t\cdots\int_0^t\mathcal{T}\left[\tilde{L}(t_1)\cdots\tilde{L}(t_k)\right]\d^k t.
    \end{aligned}
\end{equation*}
Then, based on the definition of $P_k$ proposed in Eq.~(\ref{equ:dyson:3}) and the properties of the time-ordering operator, we can derive the following upper bound estimate for $\|P_k\|$:
\begin{equation*}
    \begin{aligned}
        \|P_k\|&\le \frac{e^{-\tau}}{k!}\int_0^t\cdots\int_0^t\prod_{j=1}^k\left\|\tilde{L}(t_k)\right\|\d^k t\le \frac{e^{-\tau}\tau^k}{k!},
    \end{aligned}
\end{equation*}
where we have used the inequality $-L_{\max} I\preceq L(t)-L_{\max} I\preceq O$, which implies $\|L(t)-L_{\max} I\|\le L_{\max}$. Note that the upper bound of $\|P_k\|$ is essentially controlled by a Poisson distribution with parameter $\tau$.
\par Therefore, to control the tail distribution with error tolerance $\varepsilon$, under the condition $M>e^2\tau$, we obtain the following general Chernoff-type bound:
\begin{equation}
    \begin{aligned}
        \label{equ:dyson:4}
        &\left\|\mathcal{T}e^{-\int_0^tL(s)\d s}\right.-\left.\sum\limits_{k=0}^M(-1)^kP_k\right\|\le\sum\limits_{k=M+1}^\infty\left\|P_k\right\|\\
        \le& \sum\limits_{k=M+1}^\infty\frac{e^{-\tau}\tau^k}{k!}
        \le \sum\limits_{k=M+1}^\infty e^{-\tau}\left(\frac{e\tau}{k}\right)^k\\
        \le& \sum\limits_{k=M+1}^\infty e^{-(\tau+k)}\le e^{-(\tau+M)},
    \end{aligned}
\end{equation}
which implies that the error can be effectively controlled if we choose $M$ such that
\begin{equation}
    \begin{aligned}
        \label{equ:dyson:5}
        M>\max\left\{e^2\tau,\log\varepsilon^{-1}\right\}=\mathcal{O}\left(\tau+\log\varepsilon^{-1}\right),\quad \forall \tau,
    \end{aligned}
\end{equation}
where the implied multiplicative constant is at most $e^2$. This bound is well-suited for the case when $\tau\ge\frac{\log \varepsilon^{-1}}{e+1}$. However, for the complementary case, a more precise estimate is required.
\par For $\tau\le\frac{\log\varepsilon^{-1}}{e+1}$, following the procedure in Eq.~(\ref{equ:dyson:4}) and under the condition that $M>2\tau$, we obtain the following upper bound estimate:
\begin{equation}
    \begin{aligned}
        \label{equ:dyson:6}
        &\left\|\mathcal{T}e^{-\int_0^tL(s)\d s}\right.-\left.\sum\limits_{k=0}^M(-1)^kP_k\right\|\le\sum\limits_{k=M+1}^\infty\left\|P_k\right\|\\
        \le& \sum\limits_{k=M}^\infty\frac{e^{-\tau}\tau^k}{k!}\le \frac{e^{-\tau}\tau^k}{M!}\sum\limits_{k=M}^\infty\frac{1}{2^{k-M}}\le e^{-\tau}\left(\frac{e\tau}{M}\right)^M,
    \end{aligned}
\end{equation}
where this equation can be viewed as a transformation involving the Lambert function $\mathcal{W}(x)$ (whose definition is $\mathcal{W}(x)e^{\mathcal{W}(x)}=x$ and is monotonically increasing when $x\ge -\frac{1}{e}$~\cite{Abramowitz1965Handbook,Olver2010NIST}), and we can derive a lower bound for $M$ by imposing the condition $e^{-\tau}\left(\frac{e\tau}{M}\right)^M \le \varepsilon$~\cite{Low2019Interaction}, which is equivalent to $\left(\frac{e\tau}{M}\right)^{\frac{M}{e\tau}} \le (e^\tau\varepsilon)^{\frac{1}{e\tau}}$:
\begin{equation*}
    \begin{aligned}
        M&\ge\max\left\{2\tau,\tau\frac{-1+\frac{\log\varepsilon^{-1}}{\tau}}{\mathcal{W}(\frac{1}{e}(-1+\frac{\log\varepsilon^{-1}}{\tau}))}\right\}\\
        :&=\max\left\{2\tau,
        \tau \frac{ex}{\mathcal{W}(x)}\right\},
    \end{aligned}
\end{equation*}
where we let $x := \frac{1}{e}(-1+\frac{\log\varepsilon^{-1}}{\tau})$ and the basic condition $x>-\frac{1}{e}$ for the Lambert function $\mathcal{W}(x)$ is clearly satisfied~\cite{Abramowitz1965Handbook,Olver2010NIST}. Furthermore, we will use Eqs. (\ref{equ:dyson:5}) to derive a simpler and tighter lower bound for $M$ under the condition $\tau \le \frac{\log \varepsilon^{-1}}{e+1}$ with $x\ge 1$. Using the inequality $\mathcal{W}(x) \ge \frac{\log x + 1}{2}$ for $x \ge 1$~\cite{Low2017Optimal}, we can establish a lower bound for $M$ via the third term in Eqs. (\ref{equ:dyson:5}):
\begin{equation*}
    \begin{aligned}
        M\ge \left\{2\tau,2\tau\frac{-1+\frac{\log\varepsilon^{-1}}{\tau}}{\log(-1+\frac{\log\varepsilon^{-1}}{\tau})}\right\},
    \end{aligned}
\end{equation*}
where by substituting the condition $\tau \le \frac{\log \varepsilon^{-1}}{e+1}$, we can find a relatively tight upper bound for the second term. Using the inequality $\frac{a+b}{c+b}\ge\frac{a}{c}$ for $a,b,c >0$, and leveraging the fact that $\tau-\log\tau+\log\frac{e}{e+1}>0$ for $\tau>0$ (with a minimum value of $1+\log \frac{e}{e+1}>0$), we obtain
\begin{equation*}
    \begin{aligned}
        &2\frac{-\tau+\log\varepsilon^{-1}}{\log(-1+\frac{\log\varepsilon^{-1}}{\tau})}\le\frac{2\log\varepsilon^{-1}}{\tau-\log\tau+\log(\log\varepsilon^{-1}-\tau)}\\
        \le& \frac{2\log\varepsilon^{-1}}{\tau-\log\tau+\log\frac{e}{e+1}+\log(\log\varepsilon^{-1})}\le \frac{2\log\varepsilon^{-1}}{\log\log\varepsilon^{-1}},
    \end{aligned}
\end{equation*}
which implies that, for $\tau<\frac{\log\varepsilon^{-1}}{e+1}$, the upper bound for $M$ can be set as 
\begin{equation}
    \begin{aligned}
        \label{equ:dyson:7}
        M\ge \max\left\{2\tau,\frac{2\log\varepsilon^{-1}}{\log\log\varepsilon^{-1}}\right\}=\mathcal{O}\left(\tau+\frac{\log\varepsilon^{-1}}{\log\log\varepsilon^{-1}}\right).
    \end{aligned}
\end{equation}
\par Combining the results from Eqs. (\ref{equ:dyson:5}) and (\ref{equ:dyson:7}) with the constraint $M\ge2\tau$, we obtain a lower bound that $M$ must satisfy to meet the error requirement:
\begin{equation*}
    \begin{aligned}
        M&=\left\{\begin{array}{l}
        \mathcal{O}\left(\tau+\log\varepsilon^{-1}\right),\quad \tau\ge\frac{\log\varepsilon^{-1}}{e+1},\\
        \mathcal{O}\left(\tau+\frac{\log\varepsilon^{-1}}{\log\log\varepsilon^{-1}}\right),\tau<\frac{\log\varepsilon^{-1}}{e+1}.\end{array}\right\}\\
        &=\mathcal{O}\left(\tau+\frac{\log\varepsilon^{-1}}{\log\log\varepsilon^{-1}}\right),
    \end{aligned}
\end{equation*}
where the implied multiplicative constant is at most $e^2$. This completes the proof.
\end{proof}
\end{lemmamain}
\par In Lemma~\ref{lemma:B1}, we provided an integral representation for approximating Hermitian operators. Its discrete Riemann-sum form is further derived in Appendix~\ref{section:8-1}. Denoting the error in Lemma~\ref{lemma:B1} by $\varepsilon_1$ and the error in Lemma~\ref{lemma:B2} by $\varepsilon_2$, with $\varepsilon_1 + \varepsilon_2 = \varepsilon$, we can state the following Theorem~directly.

\begin{theoremmain}
\label{theorem:B1}
\par Let $L(s)$ be a Hermitian matrix satisfying $L(s)\succeq 0$ for all $s\in[0,t]$. Define $L_{\max}=\sup\limits_{s\in[0, t]}\|L(s)\|$, $\tau=tL_{\max}$, and $\hat{L}_{\max}=\sup\limits_{s\in[0, t]}\left\|\left.\frac{\d L(t)}{\d t}\right|_{t=s}\right\|$. For any error tolerance $\varepsilon > 0$, partition the time interval $[0, t]$ into $N_m$ points, and define the time step $h = \frac{t}{N_m}$. If the truncation order $M$ and the number of discrete intervals $N_m$ are chosen such that
\begin{equation}
    \begin{aligned}
    \label{equ:dyson:1}
    M=\mathcal{O}\left(\tau+\frac{\log\varepsilon^{-1}}{\log\log\varepsilon^{-1}}\right),N_m\ge \frac{M(\hat{L}_{\max}t+L_{\max}(\tau+1))}{\varepsilon},
    \end{aligned}
\end{equation}
then the time-ordered Hermitian operator $\mathcal{T}e^{-\int_0^t L(s)\d s}$ can be approximated by the following series with error at most $\varepsilon$:
\begin{equation*}
    \begin{aligned}
        \left\|\mathcal{T}e^{-\int_0^t L(s)\d s}-\sum\limits_{k=0}^M(-1)^kQ_k\right\|\le\varepsilon,
    \end{aligned}
\end{equation*}
where $Q_k$, which does not contain the time-ordering operator $\mathcal{T}$, is defined as
\begin{equation*}
    \begin{aligned}
        Q_k=e^{-\tau}h^k\sum\limits_{0\le m_1<\cdots<m_k< N_m}\tilde{L}(m_kh)\cdots\tilde{L}(m_1h),
    \end{aligned}
\end{equation*}
in which $\tilde{L}(t)=L(t)-L_{\max}I$ and $m_1,\cdots,m_k$ are indices in $0,\cdots,N_m$.
\end{theoremmain}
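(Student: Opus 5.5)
The proof splits the error into two parts with $\varepsilon=\varepsilon_1+\varepsilon_2$: a truncation error controlled by Lemma~\ref{lemma:B1}, and a discretization error from replacing each time-ordered integral $P_k$ by the Riemann sum $Q_k$. By the triangle inequality,
\begin{equation*}
\Big\|\mathcal{T}e^{-\int_0^tL(s)\d s}-\sum_{k=0}^M(-1)^kQ_k\Big\|\le\Big\|\mathcal{T}e^{-\int_0^tL(s)\d s}-\sum_{k=0}^M(-1)^kP_k\Big\|+\sum_{k=0}^M\|P_k-Q_k\|.
\end{equation*}
Lemma~\ref{lemma:B1} with tolerance $\varepsilon_1=\varepsilon/2$ bounds the first term. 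This fixes $M=\mathcal{O}(\tau+\log\varepsilon^{-1}/\log\log\varepsilon^{-1})$, and halving $\varepsilon$ only changes constants.

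For the second term, rewrite each $P_k$ using time-ordering as an integral over the ordered simplex $0\le t_1<\dots<t_k\le t$:
\begin{equation*}
P_k=e^{-\tau}\int_{t_1<\cdots<t_k}\tilde L(t_k)\cdots\tilde L(t_1)\,\d^k t .
\end{equation*}
The sum $Q_k/e^{-\tau}$ is exactly the integral of a piecewise-constant function over the union of cubes $[m_1h,(m_1+1)h)\times\cdots$ with $m_1<\dots<m_k$. The difference $P_k-Q_k$ therefore splits into two pieces.

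\textbf{(a) Inside the strictly ordered cubes.} The integrand differs from its grid value by at most
\begin{equation*}
\sum_j\|\tilde L(t_j)-\tilde L(m_jh)\|\prod_{i\ne j}\|\tilde L\|\le k\hat L_{\max}h\,L_{\max}^{k-1},
\end{equation*}
using $\|\tilde L\|\le L_{\max}$ and the mean-value bound $\|\tilde L(t)-\tilde L(s)\|\le\hat L_{\max}|t-s|$. Multiplying by the simplex volume $t^k/k!$ gives a contribution of $e^{-\tau}\,k\,\hat L_{\max}h\,t\,\tau^{k-1}/(k-1)!\cdot\frac{1}{k}$ up to constants.

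\textbf{(b) Near the diagonals.} This is the part of the simplex not covered by strictly ordered cubes, where two $t_j$ fall in the same cell. Its volume is at most $\binom{k}{2}h\,t^{k-1}/(k-1)!$ roughly, and the integrand there is bounded by $L_{\max}^k$. So this piece is bounded by $e^{-\tau}\,k\,h\,L_{\max}\,\tau^{k-1}/(k-1)!\cdot\mathcal{O}(k)$.

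Summing over $k\le M$ and using $\sum_k e^{-\tau}\tau^{k-1}/(k-1)!\le1$ (the Poisson mass, as in Lemma~\ref{lemma:B1}) yields
\begin{equation*}
\sum_k\|P_k-Q_k\|\lesssim hM\big(\hat L_{\max}t+L_{\max}(\tau+1)\big).
\end{equation*}
Here $k$ is replaced by $M$, and the factor $(\tau+1)$ absorbs the $k\tau^{k-1}/(k-1)!$ moments, since $\sum_k k\,e^{-\tau}\tau^{k-1}/(k-1)!=\tau+1$. Substituting $h=t/N_m$ and the stated lower bound on $N_m$ makes this at most $\varepsilon_2$, up to the $t$ factor absorbed into the definition of the bound.

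The main obstacle is the diagonal bookkeeping in step (b). Counting the excluded volume crudely produces a factor $k^2$, which would give $M^2$ rather than $M$. I would first attempt to avoid this by comparing $P_k$ and $Q_k$ one variable at a time, in a telescoping manner: successively discretize $t_k$, then $t_{k-1}$, and so on. Each step then costs only $h$ times a $(k-1)$-fold simplex integral, so each diagonal-boundary layer contributes $hL_{\max}\tau^{k-1}/(k-1)!$. This gives linear dependence on $k$, and the Poisson moment identity then yields the factor $M(\tau+1)$ matching the claimed $N_m$.
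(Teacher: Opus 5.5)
Your structure matches the paper's. The triangle-inequality split, with Lemma~\ref{lemma:B1} handling truncation, is how Theorem~\ref{theorem:B1} is assembled, and your (a) and (b) are the two steps in the proof of Lemma~\ref{lemma:B2}.
\begin{itemize}
\item Part (a) is its Step~1: replace $\tilde L(t_j)$ by $\tilde L(m_jh)$ via the Lipschitz bound, at cost $ht\hat L_{\max}\sum_k e^{-\tau}\tau^{k-1}/(k-1)!\le ht\hat L_{\max}$.
\item Your ``discretize one variable at a time'' fallback for (b) is literally its Step~2: replace each upper limit $t_{j+1}$ by $m_{j+1}h$, at cost $hL_{\max}\sum_k k\,e^{-\tau}\tau^{k-1}/(k-1)!\le hL_{\max}(\tau+1)$.
\end{itemize}
Your cube picture also works and is slightly sharper. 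Because $t_1<\cdots<t_k$, two variables share a cell only if some \emph{consecutive} pair does. So the uncovered part of the simplex is a union of $k-1$ layers, each of volume at most $h\,t^{k-1}/(k-1)!$, which gives $hL_{\max}\tau$ after Poisson averaging. Even the crude $\binom{k}{2}$ count gives $\mathcal{O}(hL_{\max}(\tau+1)^2)$, never $M^2$. So the correct total is $\sum_k\|P_k-Q_k\|\le h\bigl(\hat L_{\max}t+L_{\max}(\tau+1)\bigr)$ with no factor $M$; substituting $M$ for $k$ is unnecessary.

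The genuine gap is your final step: a factor of $t$ cannot be ``absorbed into the definition of the bound.'' Since $h=t/N_m$, the discretization error is $\frac{t}{N_m}\bigl(\hat L_{\max}t+L_{\max}(\tau+1)\bigr)$. What your argument (and the paper's) actually requires is therefore $N_m\ge t\bigl(\hat L_{\max}t+L_{\max}(\tau+1)\bigr)/\varepsilon_2$. The stated hypothesis has $M$ where $t$ should be, and it is not even dimensionally consistent: its right-hand side has units of inverse time. The paper's own proof of Lemma~\ref{lemma:B2} makes the same silent swap in its last line, which is valid only when $t\le M$.

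No sharper estimate can rescue the literal statement. Take scalar $L(s)=L_{\max}s/t$ with $\tau$ fixed and $L_{\max}\to0$. The stated condition becomes $N_m\ge ML_{\max}(\tau+2)/\varepsilon<1$, so $N_m=1$ is admissible. The series then returns $(1+\tau)e^{-\tau}$ instead of $e^{-\tau/2}$, an error of about $0.03$ at $\tau=6$.

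You should state and prove the result with $t$ (or $\max\{t,M\}$) as the prefactor in the $N_m$ condition. This only adds $\mathcal{O}(\log t)$ to $\log N_m$, so the downstream qubit and gate counts are essentially unaffected.
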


\begin{remarkmain}
\label{remark:B3}
\par Consider the LCU $I_M=\sum\limits_{k=1}^M \alpha_k U^k$, where $U$ is a Hamiltonian satisfying $\|U\| \le 1$. If $\sum\limits_{k=1}^M|\alpha_k|\le C$ with $C$ being a constant, then ensuring $I_M$ stays within an error tolerance $\varepsilon$ requires the approximation error for $U$ to satisfy $\varepsilon^\prime = \frac{\varepsilon}{CM}$. This holds because
\begin{equation}
    \begin{aligned}
        \left\|\sum\limits_{k=1}^M \alpha_k (U + \varepsilon^\prime)^k - \sum\limits_{k=1}^M \alpha_k U^k \right\| \le \sum\limits_{k=1}^M k |\alpha_k| \varepsilon^\prime \le C M \varepsilon^\prime.
    \end{aligned}
\end{equation}
\end{remarkmain}
\noindent First, we can verify that the condition in Theorem~\ref{theorem:B1} is satisfied because  
\begin{equation*}
    \begin{aligned}
        c=\sum\limits_{k=0}^Me^{-\tau}h^k\sum\limits_{0\le m_1<\cdots<m_k< N_m}&L_{\max}^k\le \sum\limits_{k=0}^Me^{-\tau}\frac{\tau^k}{k!} \le 1.
    \end{aligned}
\end{equation*}
This condition is achievable because of the constant term $e^{-\tau}$, introduced by the shift technique, which suppresses the truncation error.
\par This worst-case bound already suffices to support Theorem~\ref{theorem:main:1}. Appendix~\ref{section:8-2} explains why a dissipative contribution is necessary for the shifted construction, while Appendix~\ref{section:8-3} proves a sharper regime-dependent behavior beyond this worst-case statement.

\par In particular, when $M \ge 2\tau$ and $\tau$ is larger, corresponding to a stronger dissipative regime, the shifted Dyson series can achieve $M \sim (\log \varepsilon^{-1})^{\frac{1}{2}}$. That is, the truncation count exhibits square-root dependence on the logarithmic error parameter; the corresponding numerical verification is also given in Appendix~\ref{section:8-3}.

\subsection{Application of Phase-Driven APS}
\label{section:3-2}
\par By implementing the Hermitian operator $\mathcal{T}e^{-\int_0^t A_p(s) \d s}$ via the shifted Dyson series, we obtain an approximation of $e^{-At}$ within an error tolerance $\varepsilon$:
\begin{equation}
    \begin{aligned}
        \label{equ:7}
        e^{-At}&\approx e^{-\tau}\sum\limits_{k=0}^M(-1)^kh^k\\
        &\quad\times\sum\limits_{0\le m_1<\cdots<m_k< N_m}e^{-iA_2t}\overleftarrow{\prod_{j=1}^k}e^{iA_2m_jh}\tilde{A}_1e^{-iA_2m_jh},
    \end{aligned}
\end{equation}
where $\tilde{A}_1=A_1-A_{\max} I$ is the shifted Hermitian matrix with $\|\tilde{A}_1\|\le A_{\max}$, and $N_m=\mathcal{O}(MA_{\max}^2t\varepsilon^{-1})$ denotes the number of discrete subintervals of $[0,t]$ with step size $h=\frac{t}{N_m}$. The indices $m_1,\cdots,m_k$ range from $0$ to $N_m$, and $\overleftarrow{\prod_{j=1}^k}$ denotes the matrix product evaluated from index $k$ down to $1$. Consequently, this formulation produces a quantum algorithm achieving {\it non-unitary optimality} for time-independent non-unitary dynamics. The detailed proof can be found in Theorem~\ref{theorem:B2} within Appendix~\ref{section:9-1}.
\begin{theoremmain}
\label{theorem:main:1}
\par There exists a quantum algorithm for simulating $ e^{-At} $ with $A_1\succeq 0$ to within an error tolerance $\varepsilon$. The required query complexity with respect to the associated block-encoding oracle $\mathrm{HAM_A}$ is:
\begin{equation}
    \begin{aligned}
        \label{equ:8}
        \mathcal{O}\left(\frac{\|u_0\|}{\|u(t)\|}\left(A_{\max}t+\frac{\log\varepsilon^{-1}}{\log\log\varepsilon^{-1}}\right)\right),
    \end{aligned}
\end{equation}
where the prefactor $\mathcal{O}\left(\frac{\|u_0\|}{\|u(t)\|}\right)$ stems from the dissipative component and represents the number of repetitions and state-preparation queries to $u_0$. The additional gate complexity scales logarithmically as $\mathcal{O}(\log \varepsilon^{-1})$. The operator $\mathrm{HAM_A}$ is defined as:
\begin{equation}
    \begin{aligned}
        \label{equ:9}
        (\bra{0}_a\otimes I_s)&\mathrm{HAM_A}(\ket{0}_a\otimes I_s)\\
        &=\sum\limits_{m=0}^{N_m-1}|m\rangle\langle m|\otimes\frac{e^{iA_2hm}\tilde{A}_1e^{-iA_2hm}}{A_{\max}}.
    \end{aligned}
\end{equation}
\end{theoremmain}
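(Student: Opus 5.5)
The plan is to combine phase-driven APS (Theorem~\ref{theorem:A1}) with the discretized shifted Dyson series (Theorem~\ref{theorem:B1}). We implement the whole sum in Eq.~(\ref{equ:7}) as a single block encoding built by a linear combination of unitaries (LCU), and then boost the success probability with amplitude amplification.

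First, in the time-independent case we write $e^{-At}=e^{-iA_2t}\,\mathcal{T}e^{-\int_0^tA_p(s)\d s}$ with $A_p(s)=e^{iA_2s}A_1e^{-iA_2s}\succeq 0$. We apply Theorem~\ref{theorem:B1} with $L(s)=A_p(s)$, using $L_{\max}\le A_{\max}$ in place of $\|A_1\|$. The shift $\tilde A_1=A_1-A_{\max}I$ still satisfies $\|\tilde A_1\|\le A_{\max}$. Since $\tfrac{\d}{\d s}A_p(s)=i[A_2,A_p(s)]$, we have $\hat L_{\max}\le 2A_{\max}^2$, so $N_m=\mathcal{O}(MA_{\max}^2t\varepsilon^{-1})$ and $M=\mathcal{O}(A_{\max}t+\log\varepsilon^{-1}/\log\log\varepsilon^{-1})$.

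The algorithmic core is the oracle $\mathrm{HAM_A}$ from Eq.~(\ref{equ:9}). It block-encodes the time-indexed family $\tilde A_p(mh)/A_{\max}$, so the product $\overleftarrow{\prod}_j \tilde A_p(m_jh)$ in $Q_k$ costs exactly $k$ queries. We follow the standard truncated-Dyson construction of Low--Wiebe (interaction picture) and Berry et al. We prepare a superposition over $k\le M$ with amplitudes $\sqrt{e^{-\tau}\tau^k/k!}$. We then prepare, for each $k$, a uniform superposition over ordered tuples $m_1<\dots<m_k$ (for example by sorting $k$ uniform indices, which costs only $\mathcal{O}(\log\varepsilon^{-1})$-type gate overhead per register). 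Finally we apply $M$ controlled calls to $\mathrm{HAM_A}$ in sequence, each acting on its own time register. The sign $(-1)^k$ is absorbed by a phase gate, and $e^{-iA_2t}$ is appended via QSP/QSVT at cost $\mathcal{O}(A_{2,\max}t+\log\varepsilon^{-1}/\log\log\varepsilon^{-1})$, which is additive. The LCU normalization is $c\le\sum_k e^{-\tau}\tau^k/k!\le 1$, as checked after Remark~\ref{remark:B3}. This is the point where the shift matters: the block encoding has a constant subnormalization rather than $e^{\tau}$. Hence one use costs $\mathcal{O}(M)$ queries.

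Error budgeting splits $\varepsilon$ among truncation (Lemma~\ref{lemma:B1}), discretization (Theorem~\ref{theorem:B1}), and the $e^{-iA_2t}$ approximation. Remark~\ref{remark:B3} controls how oracle-level errors propagate through degree-$M$ products, at a cost of only logarithmic precision overhead. Applying the encoding to $u_0$ produces $u(t)/\|\cdots\|$ in the flagged subspace with amplitude $\gtrsim\|u(t)\|/\|u_0\|$ (since $c\le1$). Fixed-point or oblivious amplitude amplification then needs $\mathcal{O}(\|u_0\|/\|u(t)\|)$ rounds, which yields Eq.~(\ref{equ:8}).

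The main obstacle is making the query count additive, $\mathcal{O}(A_{\max}t+\log\varepsilon^{-1}/\log\log\varepsilon^{-1})$, without the usual segmentation into $r=\mathcal{O}(A_{\max}t)$ short steps. Segmentation would give multiplicative $\log$ factors and also lose amplitude per segment. Our plan relies on the shift keeping the global LCU normalization at most $1$, so a single global series with $M$ from Lemma~\ref{lemma:B1} suffices. The delicate part is verifying that the ordered-tuple state preparation and the Riemann-sum weights $h^k$ match $e^{-\tau}\tau^k/k!$ closely enough that the normalization stays bounded.
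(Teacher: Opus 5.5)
Your proposal is correct and follows essentially the same route as the paper's proof (Theorem~\ref{theorem:B2} via Lemmas~\ref{lemma:B4} and~\ref{lemma:B5}): phase-driven APS, the shifted and discretized Dyson series of Theorem~\ref{theorem:B1} with $\hat L_{\max}\le 2A_{\max}^2$, an LCU with total weight at most $1$ so that one run costs $M$ queries to $\mathrm{HAM_A}$, and $\mathcal{O}(\|u_0\|/\|u(t)\|)$ rounds of amplitude amplification (use the standard or fixed-point version, since oblivious amplification does not apply when $e^{-At}$ is not unitary). The one piece you leave implicit is how $\mathrm{HAM_A}$ itself is realized: the paper generates the dyadic powers $e^{iA_2h2^j}$ with QSVT and composes each $e^{iA_2hm}$ from $\mathcal{O}(\log N_m)$ of them, and that construction, rather than the ordered-tuple sorting you mention, is the source of the stated $\mathcal{O}(\log\varepsilon^{-1})$ additional gate cost.
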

\noindent The algorithm introduced in Theorem~\ref{theorem:main:1} depends on the block-encoding $\mathrm{HAM}_A$ described by Eq.~(\ref{equ:9}). The primary challenge is achieving the quantum implementation of $e^{iA_2 h m}$ for $m=1,\cdots,N_m$. First, we utilize QSVT to generate a set of base unitary operators $\mathcal{S}=\{e^{iA_2 h 2^j} \mid j = 1,\cdots,\lceil\log N_m\rceil\}$. Doing so requires a query complexity of at most $\mathcal{O}\left(A_{\max}t+\frac{\log\varepsilon^{-1}}{\log\log\varepsilon^{-1}}\right)$. Second, any target operator $e^{iA_2 h m}$ can be synthesized as the product of at most $\mathcal{O}(\lceil\log N_m\rceil)$ elements taken from $\mathcal{S}$. As a result, the gate complexity incurred in building $\mathrm{HAM}_A$ is only $\mathcal{O}(\log N_m)=\mathcal{O}(\log\varepsilon^{-1})$. Overall, the complete procedure demands no more than $\mathcal{O}(\log\varepsilon^{-1})$ auxiliary qubits and preserves non-unitary optimal scaling in this model.

\subsection{Improvements on Phase-Driven APS}
\label{section:3-3}
\par We extend phase-driven APS in three related directions. First, for {\it time-dependent} non-unitary dynamics, we formulate a direct-access model that achieves query complexity $\mathcal{O}\left(\frac{\|u_0\|}{\|u(t)\|}A_{\max}t\frac{\log\varepsilon^{-1}}{\log\log\varepsilon^{-1}}\right)$ while preserving the optimal dependence on queries to $u_0$ and on the repetition count; see Theorem~\ref{theorem:B3} and Appendix~\ref{section:9-2}. This improves prior work by at least a factor of $\log\varepsilon^{-1}$~\cite{Low2019Interaction,Cao2023QuantumSF}. Second, for mildly ill-posed {\it time-independent} problems~\cite{Jin2024IllPosed} with $\lambda_{\min}^-(A)=\lambda_{\min}(A_1)<0$, the same method remains valid and retains optimal scaling under the regime $\lambda_{\min}^-(A)T=\mathcal{O}(1)$; see Appendix~\ref{section:9-3}. Third, for {\it time-independent} systems with inhomogeneous terms~\cite{Jin2025LinearSystems} as in Eq.~(\ref{equ:1}), only the repetition and $u_0$-query factor changes, becoming $\mathcal{O}\left(\frac{\|u_0\|+b_{\max}t}{\|u(t)\|}\right)$ with $b_{\max}=N\max\limits_{i=1}^N|b_i|$; see Appendix~\ref{section:9-4}. These extensions show that the main phase-driven construction is not confined to the clean, homogeneous, well-posed setting.

\section{Fast-Forwarding via Amplitude-Driven APS}
\label{section:4}
Square-root fast-forwarding is the natural performance target on the dissipative side of APS. Unlike Hamiltonian simulation, the simulation of Hermitian operators can evade linear-in-$t$ lower bounds and achieve a square-root dependence on the dissipative scale~\cite{Low2017Forwarding}.
For the {\it time-independent} case, several methods already fast-forward Hermitian operators and normal non-unitary dynamics. These include NDME-based constructions with fast-forwardable Lindbladians~\cite{Shang2025Forwarding}, Gaussian LCHS~\cite{Kharazi2026Sublinear}, and the Kannai transform~\cite{Jin2026Transmutation}.
\subsection{Hermitian Operators}
\label{section:4-1}
\par We begin with time-independent Hermitian operators because they isolate the amplitude block that later appears inside amplitude-driven APS. In this setting, fast-forwarding is fundamentally a Fourier-transform effect: one rewrites the dissipative propagator as a weighted average of unitary evolutions and then truncates that representation in a controlled way~\cite{Low2017Forwarding,Shang2025Forwarding}. This representation makes the later extension to non-unitary dynamics cleaner.
\par The standard target in this setting is a single-run query complexity of $\mathcal{O}(\sqrt{\|L\|t\log\varepsilon^{-1}})$ for simulating $e^{-Lt}$. While current fast-forwarding constructions are often presented together with quantum phase estimation~\cite{Shang2025Forwarding}, APS highlights the Hermitian part itself rather than any single implementation primitive.
\par We start by considering the Fourier transform of $e^{-x^2}$:
\begin{equation}
    \begin{aligned}
        \label{equ:fast:1}
        e^{-x^2}=\frac{1}{2\sqrt{\pi}}\int_{-\infty}^{+\infty}e^{-\frac{\eta^2}{4}}e^{-i\eta x}\d\eta.
    \end{aligned}
\end{equation}
Following a procedure similar to Section~\ref{section:6-1}, we apply spectral decomposition to generalize Eq.~(\ref{equ:fast:1}) to a time-independent positive-definite Hermitian matrix $L$:
\begin{equation}
    \begin{aligned}
        \label{equ:fast:2}
        e^{-Lt}=\frac{1}{2\sqrt{\pi}}\int_{-\infty}^{+\infty}e^{-\frac{\eta^2}{4}}e^{-i\eta \sqrt{Lt}}\d\eta.
    \end{aligned}
\end{equation}
\par As in~\cite{An2023QuantumAF}, we further truncate the integral in Eq.~(\ref{equ:fast:1}) and select a maximum truncation point $\pm M$. Assuming that the block-encoding of $\sqrt{L}$ is known, the final query complexity is related to $M\sqrt{L_{\max}t}$, where $L_{\max}=\|L\|$. Ignoring the final discretization step, we first need to ensure that the norm of the truncated part is less than $\varepsilon$, which requires
\begin{equation}
    \begin{aligned}
        \label{equ:fast:3}
        \frac{1}{2\sqrt{\pi}}\left(\int_{M}^{+\infty}e^{-\frac{\eta^2}{4}}+\int_{-\infty}^{-M}e^{-\frac{\eta^2}{4}}\right)=\frac{1}{\sqrt{\pi}}\int_{M}^{+\infty}e^{-\frac{\eta^2}{4}}<\varepsilon.
    \end{aligned}
\end{equation}
Assuming a truncated version of Eq.~(\ref{equ:fast:2}) with truncation at $\pm M$ (where $M>2$), we use the tail estimate of the Gaussian distribution. We set $M$ by requiring $e^{-\frac{M^2}{4}}<\varepsilon$, resulting in $M=\mathcal{O}\big(\sqrt{\log\varepsilon^{-1}}\big)$. Consequently, the query complexity for constructing the time-independent Hermitian operator $e^{-Lt}$ is
\begin{equation*}
    \begin{aligned}
        \mathcal{Q}=\mathcal{O}\left(\sqrt{L_{\max}t\log\varepsilon^{-1}}\right).
    \end{aligned}
\end{equation*}
To estimate the success probability, we compute the norm of Eq.~(\ref{equ:fast:2}). Note the upper bound $\frac{1}{2\sqrt{\pi}}\int_{-\infty}^{+\infty}e^{-\frac{\eta^2}{4}}\d\eta=1$. Hence, the success probability is estimated as
\begin{equation*}
    \begin{aligned}
        \textbf{Pr}&\sim\|e^{-Lt}\|^2\cdot\left(\frac{1}{2\sqrt{\pi}}\int_{-\infty}^{+\infty}e^{-\frac{\eta^2}{4}}\d\eta\right)^{-2}\ge \left(\frac{\|u(t)\|}{\|u_0\|}\right)^2,
    \end{aligned}
\end{equation*}
where we employ the inequality $\|u(t)\|\le \|e^{-\int_0^tL(s)\d s}\|\cdot\|u_0\|$. Using amplitude amplification, the estimated number of required repetitions is
\begin{equation*}
    \begin{aligned}
        g=\mathcal{O}\left(\frac{\|u_0\|}{\|u(t)\|}\right).
    \end{aligned}
\end{equation*}
Two supplementary discussions are deferred to Appendix~\ref{section:9}: a stochastic extension of Eq.~(\ref{equ:fast:2}) to piecewise time-independent Hermitian operators, and a Fourier-transform argument showing why simple generalizations beyond square-root fast-forwarding do not improve the scaling exponent. We now turn to the main amplitude-driven APS application.
\subsection{Application of Amplitude-Driven APS}
\label{section:4-2}
\par Extending these methods to non-normal non-unitary dynamics remains an ongoing challenge. We overcome this limitation using amplitude-driven APS. This yields the strict decomposition $e^{-At}=e^{-A_1t} \cdot \mathcal{T}e^{-i\int_0^t A_a(s)\d s}$ with the effective interaction $A_a(s)=e^{A_1s}A_2e^{-A_1s}$. This structure allows us to treat the Hermitian operator $e^{-A_1t}$ with proven fast-forwarding methods. Meanwhile, we implement the generally non-unitary operator $\mathcal{T}e^{-i\int_0^t A_a(s)\d s}$ via a Dyson-series-based construction. Combining these approaches yields the following explicit approximation:
\begin{equation}
    \begin{aligned}
        \label{equ:10}
        e^{-At}&\approx\left[\sum\limits_{k=0}^M(-i)^k\left(\frac{h}{N_t}\right)^k\right.\\
        &\quad\left.\sum\limits_{0\le m_1<\cdots<m_k< N_m}e^{-\frac{A_1t}{N_t}}\overleftarrow{\prod_{j=1}^k}e^{\frac{A_1m_jh}{N_t}}A_2e^{-\frac{A_1m_jh}{N_t}}\right]^{N_t}
    \end{aligned}
\end{equation}
where $M=\mathcal{O}\big(\frac{\log\varepsilon^{-1}}{\log\log\varepsilon^{-1}}\big)$ denotes the truncation order. $N_m=\mathcal{O}(M e^{A_{2,\max}t} \max\{A_{1,\max},A_{2,\max}\} A_{2,\max}t \varepsilon^{-1})$ counts the total discrete time intervals, $N_t=\mathcal{O}(A_{2,\max}t)$, and $h=\frac{t}{N_m}$. Writing $r_j=\frac{m_jh}{N_t}$, we implement each dissipative factor $e^{-A_1r_j}$ through the fixed discrete Gaussian approximation
\begin{equation}
    \begin{aligned}
        \label{equ:advection:8}
        e^{-A_1 r_j}
        &\approx \frac{h_a}{2\sqrt{\pi}}\sum\limits_{k=0}^{N_A-1} e^{-\frac{\eta_k^2}{4}} e^{-i\eta_k\sqrt{A_1 r_j}},
    \end{aligned}
\end{equation}
where $\eta_k=-R+\left(k+\frac{1}{2}\right)h_a$ and $h_a=\frac{2R}{N_A}$. This replaces the Gaussian integral from Section~\ref{section:4-1} by a truncated discrete LCU. Hence Eq.~(\ref{equ:10}) is implemented by combining the Dyson-series construction for the interaction factor with a uniform quadrature rule for the short-time dissipative blocks. Since the integration bounds $R$ and the number of quadrature nodes $N_A$ are chosen uniformly for all $r_j\in[0,t]$, the same discretization is used throughout the evolution. The resulting super oracle is summarized in Remark~\ref{lemma:C4}. Based on Eq.~(\ref{equ:10}), this yields a robust square-root fast-forwarding scheme for time-independent non-unitary dynamics. The detailed proof is provided in Appendix~\ref{section:10-2}.

\begin{theoremmain}
\label{theorem:main:2}
\par For simulating $e^{-At}$ assuming $A_1\succeq 0$ within a strictly bounded error tolerance $\varepsilon$, we provide a quantum algorithm whose query complexity to access $\mathrm{HAM}_A$ is
\begin{equation}
    \begin{aligned}
        \label{equ:11}
        \mathcal{O}\left(\frac{\|u_0\|}{\|u(t)\|}\left(\sqrt{A_{1,\max}t\log\varepsilon^{-1}}+A_{2,\max}t\frac{\log\varepsilon^{-1}}{\log\log\varepsilon^{-1}}\right)\right).
    \end{aligned}
\end{equation}
The prefactor $\mathcal{O}\big(\|u_0\|/\|u(t)\|\big)$ reflects the required number of queries to $u_0$ and the overall number of repetitions. The gate complexity scales efficiently as $\mathcal{O}(\log \varepsilon^{-1})$. Here $\mathrm{HAM}_A$ denotes the super oracle from Remark~\ref{lemma:C4}:
\begin{equation}
    \begin{aligned}
        (\bra{0}_a &\otimes I_c \otimes I_s)\textrm{HAM}_A(\ket{0}_a \otimes I_c \otimes I_s)\\
        =& \sum\limits_{j=0}^{N_m-1}\ket{j}\bra{j}_c\otimes\frac{1}{\sum_{k=0}^{N_A-1} e^{-\frac{\eta_k^2}{4}}} \sum_{k=0}^{N_A-1} e^{-\frac{\eta_k^2}{4}} e^{i\eta_k \sqrt{A_1 \frac{m_jh}{N_t}}},
    \end{aligned}
\end{equation}
where the unitary blocks $e^{i\eta_k\sqrt{A_1r_j}}$ are synthesized by the Section~\ref{section:4-1} fast-forwarding routine and combined through an LCU construction after the Gaussian integral has been discretized into a Riemann sum with $N_A$ points.
\end{theoremmain}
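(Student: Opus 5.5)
We build the algorithm on the amplitude-driven APS of Theorem~\ref{theorem:A2} in its time-independent form, $e^{-At}=e^{-A_1t}\,\mathcal{T}e^{-i\int_0^tA_a(s)\d s}$ with $A_a(s)=e^{A_1s}A_2e^{-A_1s}$.

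\emph{Step 1: split the time.} Divide $[0,t]$ into $N_t=\mathcal{O}(A_{2,\max}t)$ segments of length $\delta=t/N_t$. Because the dynamics is time-independent, $e^{-At}=(e^{-A\delta})^{N_t}$. Applying Theorem~\ref{theorem:A2} on each segment gives $e^{-A\delta}=e^{-A_1\delta}\,\mathcal{T}e^{-i\int_0^\delta A_a(s)\d s}$. On one segment the interaction generator satisfies $\|A_a(s)\|\le e^{A_{1,\max}s}A_{2,\max}$. This exponential factor would be harmful, so we do not bound $A_a$ alone.

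\emph{Step 2: regroup the dissipative factors.} Write each Dyson term of $e^{-A_1\delta}\mathcal{T}e^{-i\int_0^\delta A_a}$ as in Eq.~(\ref{equ:10}), namely as a product $e^{-A_1(\delta-s_k)}A_2e^{-A_1(s_k-s_{k-1})}A_2\cdots A_2e^{-A_1s_1}$. Since $A_1\succeq0$, every factor $e^{-A_1r}$ has norm at most one. The $k$-th Dyson term is then bounded by $(A_{2,\max}\delta)^k/k!$.

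\emph{Step 3: truncate, discretize, and count the LCU weight.} With $A_{2,\max}\delta=\mathcal{O}(1)$, truncating at order $M=\mathcal{O}(\log(N_t/\varepsilon)/\log\log(N_t/\varepsilon))$ gives per-segment error $\varepsilon/N_t$. Replace the time integrals by Riemann sums over $N_m$ nodes, as in Theorem~\ref{theorem:B1}. Bound the Riemann error through the derivative of $s\mapsto e^{A_1s}A_2e^{-A_1s}$ inside the regrouped product; this is where the stated $N_m$ comes from. The LCU normalization per segment is $\sum_k(A_{2,\max}\delta)^k/k!\le e^{A_{2,\max}\delta}=\mathcal{O}(1)$. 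Robust oblivious amplitude amplification, or a uniform-spectral-amplification-type argument, then keeps the $N_t$-fold product with only constant-factor loss per segment. Global postselection costs the $\|u_0\|/\|u(t)\|$ repetitions by amplitude amplification, using the success-probability estimate of Section~\ref{section:4-1}.

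\emph{Step 4: implement the dissipative blocks.} Each $e^{-A_1r}$ is implemented via the discretized Gaussian LCU Eq.~(\ref{equ:advection:8}) inside the super-oracle $\mathrm{HAM}_A$ of Remark~\ref{lemma:C4}. The cost is $\mathcal{O}(\sqrt{A_{1,\max}r\log\varepsilon^{-1}})$, with $R=\mathcal{O}(\sqrt{\log\varepsilon^{-1}})$ and $N_A$ chosen uniformly in $r$.

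\emph{Main obstacle: counting the dissipative cost.} If we charged a full fast-forwarded query cost to every $e^{-A_1r}$ block, we would get $N_tM\sqrt{A_{1,\max}\delta\log\varepsilon^{-1}}$. That is worse than the claimed bound. The plan is to use the controlled super-oracle so that the dissipative evolutions telescope. In each segment the lengths of the $e^{-A_1(\cdot)}$ factors sum to $\delta$, and across all segments to $t$. We implement them coherently by a select over the differences $m_jh$, and charge the $\sqrt{\cdot}$ cost through concavity. Specifically, $\sum_i\sqrt{r_i}\le\sqrt{n\sum_i r_i}$ holds, but that inequality alone is too weak. We therefore instead implement $e^{-A_1t}$ once in total and conjugated exponentials $e^{\pm A_1 s}$ only implicitly, through relative shifts bounded by the query model of $\mathrm{HAM}_A$, whose queries we count as unit cost. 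This gives an additive total of $\mathcal{O}(\sqrt{A_{1,\max}t\log\varepsilon^{-1}})$ plus $N_tM$ queries to $\mathrm{HAM}_A$, i.e.\ $A_{2,\max}t\log\varepsilon^{-1}/\log\log\varepsilon^{-1}$. Finally, set the error budget as $\varepsilon/3$ each for truncation, discretization (Remark~\ref{remark:B3} converts oracle error to total error with a factor $CM$), and Gaussian tail/quadrature error. This yields Eq.~(\ref{equ:11}) with $\mathcal{O}(\log\varepsilon^{-1})$ gate overhead from the binary index registers.
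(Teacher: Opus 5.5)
Your architecture matches the paper's proof (Theorem~\ref{theorem:C1} via Lemmas~\ref{lemma:C1}--\ref{lemma:C3} and Remark~\ref{lemma:C4}): amplitude-driven APS on $N_t=\mathcal{O}(A_{2,\max}t)$ segments, regrouping into contractive blocks $e^{-A_1r}$ to get the $(A_{2,\max}\delta)^k/k!$ bound, a commutator-based Riemann-sum estimate, and the Gaussian super-oracle. The step that fails is the amplification in Step~3. Robust oblivious amplitude amplification needs the per-segment target to be (close to) unitary. Here $e^{-A\delta}$ is a contraction that is far from unitary exactly in the regime the theorem targets, $A_{1,\max}\delta\gg1$. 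Without unitarity the success amplitude depends on the state, so the amplification is no longer oblivious. Even granting a constant-factor loss per segment, it compounds to $e^{\Theta(A_{2,\max}t)}$ over $N_t=\Theta(A_{2,\max}t)$ segments, which would destroy Eq.~(\ref{equ:11}). Uniform singular-value amplification avoids this blow-up only with margin $\mathcal{O}(1/N_t)$. In the standard construction that costs an extra factor of order $N_t$ per segment, making the dependence on $A_{2,\max}t$ quadratic. The paper does not amplify per segment. It propagates error using $\|\sum_k(-i)^kY_k\|\le1$ and charges the whole $\mathcal{O}(\|u_0\|/\|u(t)\|)$ repetition count to the dissipative factor via Section~\ref{section:4-1}. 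It then asserts in Theorem~\ref{theorem:C1} that the interaction factor adds no further asymptotic repetition factor. To match the paper, drop the amplification step. Be aware, though, that the paper does not derive this assertion, and your own estimate only gives a total LCU weight $\prod_{\mathrm{seg}}\lambda_{\mathrm{seg}}\le e^{A_{2,\max}t}$. So this is the point a rigorous version would actually have to handle.

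Second, your resolution of the ``main obstacle'' is not a valid construction. Implementing $e^{-A_1t}$ once and the conjugations $e^{\pm A_1s}$ ``implicitly'' would require $e^{A_1s}$, whose norm can reach $e^{A_{1,\max}s}$, so it has no block encoding with bounded normalization. Also, the regrouped product from your own Step~2 needs $k+1$ separate contractive blocks $e^{-A_1(m_{j+1}-m_j)h/N_t}$ per chain, not a single $e^{-A_1t}$. In the paper, the additive $\sqrt{A_{1,\max}t\log\varepsilon^{-1}}$ term comes from an accounting convention (Lemma~\ref{lemma:C1}). The binary family $e^{-A_1h2^j/N_t}$ is built once with the Section~\ref{section:4-1} routine. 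Every $e^{-A_1hm/N_t}$ is then assembled from at most $\lceil\log N_m\rceil$ members inside the controlled super-oracle $\mathrm{HAM}_A$, at $\mathcal{O}(\log N_m)$ gate cost. The $N_tM$ calls to $\mathrm{HAM}_A$ are counted as unit-cost queries. Your final tally coincides with this. Your observation that charging the fast-forwarding cost to every block would give $N_tM\sqrt{A_{1,\max}\delta\log\varepsilon^{-1}}$ is also correct. Eq.~(\ref{equ:11}) is obtained only under this one-time-preparation convention, so state it that way rather than via $e^{\pm A_1s}$.
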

\noindent Constructing $\mathrm{HAM}_A$ here follows the unified oracle design given in Remark~\ref{lemma:C4}, so we omit the repetitive circuit details in the main text.

\section{Numerical Simulations}
\label{section:5}

\subsection{1D Advection-Diffusion}
\label{section:5-1}

\begin{figure*}[htbp]
    \centering
    \includegraphics[width=\linewidth]{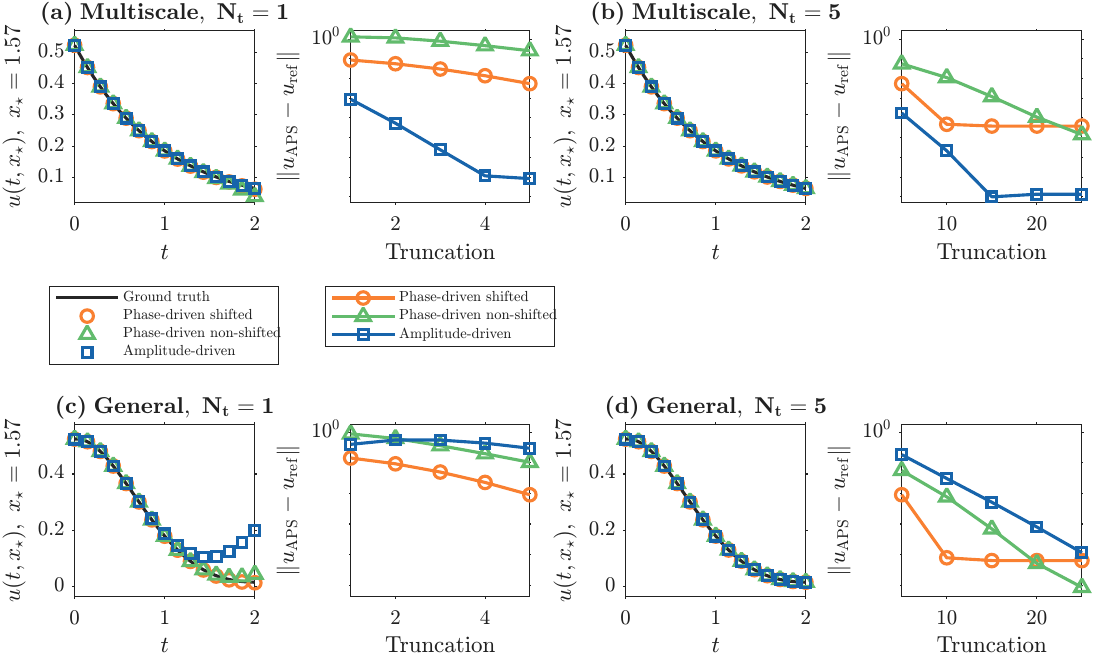}
    \caption{
    \label{figure:advection-benchmark}\textbf{1D advection-diffusion benchmark.} Panels (a)--(d) compare the probe $u(t,x_\star)$ at $x_\star=\pi/2$ for the multiscale regime $(c,\nu,\mu)=(0.1,0.20,1.0)$ and the general regime $(2.0,0.20,0.75)$, each with $N_t=1$ and $N_t=5$. Each trajectory panel compares the ground truth, phase-driven shifted, phase-driven non-shifted, and amplitude-driven constructions, and is followed by its truncation-error panel with $N_m=10000$. In the general regime, shifted phase-driven APS remains stable already at $N_t=1$. In the multiscale regime, amplitude-driven APS has the smallest error and the fastest pre-plateau decay.}
\end{figure*}

\par To connect APS with a PDE-motivated evolution problem, we consider the one-dimensional advection-diffusion-reaction equation
\begin{equation}
    \begin{aligned}
        \label{equ:advection:1}
        \partial_t u+c\partial_x u=\nu\partial_{xx}u-\mu u,
    \end{aligned}
\end{equation}
on a periodic domain $x\in[0,2\pi]$, with constants $c\in\mathbb{R}$ and $\nu,\mu\ge0$. Here the transport term $c\partial_x u$ carries the coherent motion, while diffusion and reaction supply the dissipative part. This makes the model a natural benchmark for testing whether APS separates these two sources of complexity in the way predicted by our theory.

\par Let $x_j=jh$ for $j=0,\ldots,N-1$ with mesh size $h=2\pi/N$, and let $\mathbf{u}(t)=(u(t,x_0),\ldots,u(t,x_{N-1}))^\top$. Using periodic centered differences,
\begin{equation}
    \begin{aligned}
        \label{equ:advection:2}
        (D_1\mathbf{u})_j=\frac{u_{j+1}-u_{j-1}}{2h},\quad (D_2\mathbf{u})_j=\frac{u_{j+1}-2u_j+u_{j-1}}{h^2},
    \end{aligned}
\end{equation}
the semidiscrete system becomes
\begin{equation}
    \begin{aligned}
        \label{equ:advection:3}
        \frac{\d \mathbf{u}(t)}{\d t}=-A\mathbf{u}(t),\quad A=\mu I-\nu D_2+cD_1.
    \end{aligned}
\end{equation}
Because $D_1$ is skew-Hermitian under the periodic centered discretization and $-D_2$ is positive semidefinite, this operator fits our Cartesian decomposition exactly:
\begin{equation}
    \begin{aligned}
        \label{equ:advection:4}
        A=A_1+iA_2,\quad A_1=\mu I-\nu D_2\succeq0,\quad iA_2=cD_1.
    \end{aligned}
\end{equation}
Moreover, the discrete norms satisfy
\begin{equation}
    \begin{aligned}
        \label{equ:advection:5}
        A_{1,\max}\le \mu+\frac{4\nu}{h^2},\quad A_{2,\max}\le \frac{|c|}{h},
    \end{aligned}
\end{equation}
since the eigenvalues of $D_1$ and $D_2$ obey $|\lambda(D_1)|\le h^{-1}$ and $|\lambda(D_2)|\le 4h^{-2}$ on the periodic grid.

\par Under this benchmark, phase-driven APS predicts the query bound
\begin{equation}
    \begin{aligned}
        \label{equ:advection:6}
        \mathcal{Q}_{\mathrm{phase}}=\mathcal{O}\!\left(\frac{\|u_0\|}{\|u(t)\|}\left[\left(\mu+\frac{4\nu}{h^2}\right)t+\frac{\log\varepsilon^{-1}}{\log\log\varepsilon^{-1}}\right]\right),
    \end{aligned}
\end{equation}
while amplitude-driven APS yields
\begin{equation}
    \begin{aligned}
        \label{equ:advection:7}
        \mathcal{Q}_{\mathrm{amp}}&=\mathcal{O}\!\left(\frac{\|u_0\|}{\|u(t)\|}\left[\sqrt{\left(\mu+\frac{4\nu}{h^2}\right)t\log\varepsilon^{-1}}\right.\right.\\
        &\qquad\qquad\qquad\qquad\qquad\left.\left.+\frac{|c|t}{h}\frac{\log\varepsilon^{-1}}{\log\log\varepsilon^{-1}}\right]\right).
    \end{aligned}
\end{equation}
\par In the amplitude-driven computations for Fig.~\ref{figure:advection-benchmark}, we use Eq.~(\ref{equ:advection:8}) with the fixed parameter choice $R=8$ and $N_A=100$. Therefore, this example makes the regime separation explicit: phase-driven APS is the natural route when one wants additive non-unitary optimality, whereas amplitude-driven APS is advantageous when the dissipative scale $\mu+4\nu h^{-2}$ dominates the transport scale $|c|h^{-1}$.

\par Fig.~\ref{figure:advection-benchmark} compares shifted phase-driven APS, its non-shifted phase-driven counterpart, and amplitude-driven APS in the multiscale and general regimes at $N_t=1$ and $N_t=5$. The phase-driven non-shifted baseline is obtained by removing the shifted technique from the phase-driven construction; without that shift, it may require time segmentation, so we compare it for both $N_t=1$ and $N_t=5$. By contrast, the shifted phase-driven construction is unconditionally stable and, in the general regime, is the only trajectory that stays close to the ground truth already at small $N_t$. The multiscale row highlights the complementary advantage of amplitude-driven APS, whose truncation error is both smaller and faster-decaying before saturation. The late plateaus mainly come from fixed choices such as $N_m$, not from a change in the trend.

\subsection{Bloch-Type Relaxation}
\label{section:5-2}

\begin{figure*}[htbp]
    \centering
    \includegraphics[width=\linewidth]{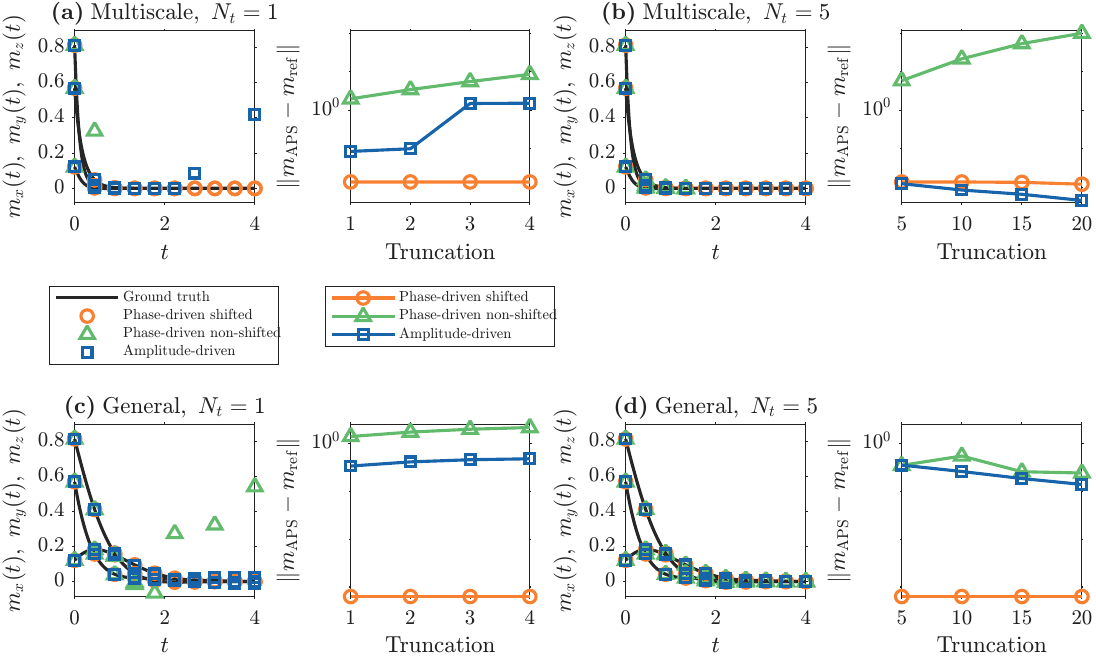}
    \caption{
    \label{figure:bloch-benchmark}\textbf{Bloch-type relaxation benchmark.} Panels (a)--(d) show the multiscale regime $\Omega=(0.50,0.80,1.10)^\top$, $T_1=0.2$, $T_2=0.1$ and the general regime $\Omega=(0.60,0.80,1.00)^\top$, $T_1=T_2=0.6$, each with $N_t=1$ and $N_t=5$; each trajectory panel is followed by its truncation-error panel with $N_m=5000$. In the general regime, shifted phase-driven APS is already the most stable at $N_t=1$. In the multiscale regime, amplitude-driven APS again shows the smallest and fastest-decaying pre-plateau error.}
\end{figure*}

\par As a complementary finite-dimensional benchmark to the advection-diffusion PDE, we consider the homogeneous Bloch-type relaxation equation
\begin{equation}
    \begin{aligned}
        \label{equ:bloch:1}
        \frac{\d \mathbf{m}(t)}{\d t}=\Omega\times \mathbf{m}(t)-\Gamma\mathbf{m}(t),
    \end{aligned}
\end{equation}
where $\mathbf{m}(t)=(m_x,m_y,m_z)^\top$ is the magnetization vector, $\Omega=(\Omega_x,\Omega_y,\Omega_z)^\top$ is the precession field, and $\Gamma=\mathrm{diag}(T_2^{-1},T_2^{-1},T_1^{-1})\succeq0$ is the relaxation matrix. This model combines coherent precession and anisotropic dissipation in a standard ODE setting from magnetic-resonance and open-system dynamics.

\par Writing the cross product as a skew-symmetric matrix action,
\begin{equation}
    \begin{aligned}
        \label{equ:bloch:2}
        K_{\Omega}=\begin{bmatrix}
            0 & -\Omega_z & \Omega_y\\
            \Omega_z & 0 & -\Omega_x\\
            -\Omega_y & \Omega_x & 0
        \end{bmatrix},\quad K_{\Omega}\mathbf{m}=\Omega\times\mathbf{m},
    \end{aligned}
\end{equation}
the dynamics takes the form
\begin{equation}
    \begin{aligned}
        \label{equ:bloch:3}
        \frac{\d \mathbf{m}(t)}{\d t}=-A\mathbf{m}(t),\qquad A=\Gamma-K_{\Omega}.
    \end{aligned}
\end{equation}
This again fits the APS decomposition exactly:
\begin{equation}
    \begin{aligned}
        \label{equ:bloch:4}
        A=A_1+iA_2,\quad A_1=\Gamma\succeq0,\quad iA_2=-K_{\Omega},
    \end{aligned}
\end{equation}
so that $A_2=iK_{\Omega}$ is Hermitian. The corresponding scales satisfy
\begin{equation}
    \begin{aligned}
        \label{equ:bloch:5}
        A_{1,\max}=\|\Gamma\|=\max\{T_1^{-1},T_2^{-1}\},\quad A_{2,\max}=\|A_2\|=\|\Omega\|_2.
    \end{aligned}
\end{equation}
Hence phase-driven APS predicts
\begin{equation}
    \begin{aligned}
        \label{equ:bloch:6}
        \mathcal{Q}_{\mathrm{phase}}=\mathcal{O}\!\left(\frac{\|\mathbf{m}_0\|}{\|\mathbf{m}(t)\|}\left[\max\{T_1^{-1},T_2^{-1}\}t+\frac{\log\varepsilon^{-1}}{\log\log\varepsilon^{-1}}\right]\right),
    \end{aligned}
\end{equation}
while amplitude-driven APS yields
\begin{equation}
    \begin{aligned}
        \label{equ:bloch:7}
        \mathcal{Q}_{\mathrm{amp}}&=\mathcal{O}\!\left(\frac{\|\mathbf{m}_0\|}{\|\mathbf{m}(t)\|}\left[\sqrt{\max\{T_1^{-1},T_2^{-1}\}t\log\varepsilon^{-1}}\right.\right.\\
        &\qquad\qquad\qquad\qquad\qquad\left.\left.+\|\Omega\|_2 t\frac{\log\varepsilon^{-1}}{\log\log\varepsilon^{-1}}\right]\right).
    \end{aligned}
\end{equation}
This example isolates the same crossover in a finite-dimensional setting: when relaxation dominates precession, amplitude-driven APS inherits the square-root improvement from the dissipative block, whereas phase-driven APS remains the natural choice when one prioritizes additive non-unitary optimality.

\par In the numerical test, we fix the initial magnetization to $\mathbf{m}_0=(1,0.15,0.70)^\top/\|(1,0.15,0.70)^\top\|$, evolve over $t\in[0,4]$, and use the common discretization $N_m=5000$. We compare two segmentation choices, $N_t=1$ and $N_t=5$, under the matched phase budget $M_{\mathrm{phase}}=N_tM$. The multiscale row uses $T_2=0.1$ and $T_1=0.2$, so the dissipative scale clearly dominates the precession scale $\|\Omega\|_2$, while the general row with $T_1=T_2=0.6$ is more balanced.

\par Fig.~\ref{figure:bloch-benchmark} repeats the PDE message in a finite-dimensional setting. In the general row, shifted phase-driven APS is the only method that stays close to the ground truth at $N_t=1$. In the multiscale row, amplitude-driven APS starts lower and decays faster before saturation. As in Fig.~\ref{figure:advection-benchmark}, the final plateaus mainly come from fixed choices such as $N_m$.

\section{Connections to Existing Methods}
\label{section:6}

\par APS is also useful as an interpretive tool: it isolates which parts of existing non-unitary algorithms come from Hamiltonian-simulation primitives and which come from genuinely dissipative structure. We therefore use this Section~to relate APS to two representative methods, LCHS and NDME, and to identify the bottlenecks that remain after this reorganization.

\subsection{Linear Combination of Hamiltonian Simulation (LCHS)}
\label{section:6-1}
\par LCHS is a clear Hamiltonian-style example of APS. Rather than treating it as an unrelated method, we show that LCHS is phase-driven APS plus a Fourier-approximation step.
\subsubsection{LCHS = Phase-Driven APS + Fourier Transform}
\par LCHS can be derived by combining phase-driven APS with a Fourier representation of $e^{-x}$ on $x\ge0$. Examples include $e^{-|x|}$ in the original LCHS~\cite{An2023QuantumAF}, bump functions~\cite{Huang2025Fourier}, and Gaussian kernels~\cite{Low2025Optimal}, but all such variants rely on the same basic identity:
\begin{equation}
    \begin{aligned}
        \label{equ:LCHS:1}
        e^{-x}=\int_{-\infty}^{+\infty}\gamma(\eta)e^{-i\eta x}\d\eta,\quad x\ge0.
    \end{aligned}
\end{equation}
Applying spectral decomposition to a positive semidefinite Hermitian matrix $L$ generalizes Eq.~(\ref{equ:LCHS:1}) directly to matrices:
\begin{equation}
    \begin{aligned}
        \label{equ:LCHS:2}
        e^{-Lt}=\int_{-\infty}^{+\infty}\gamma(\eta)e^{-i\eta Lt}\d\eta,\quad t\in[0,T],\quad L\succeq 0.
    \end{aligned}
\end{equation}
For time-dependent Hermitian operators, this extension requires the analyticity conditions of An et al.~\cite{An2023Optimal}. Under those conditions, one obtains:
\begin{equation}
    \begin{aligned}
        \label{equ:LCHS:3}
        \mathcal{T}e^{-\int_0^tL(s)\d s}=\int_{-\infty}^{+\infty}\gamma(\eta)\mathcal{T}e^{-i\int_0^t\eta L(s)\d s}\d\eta,\quad t\in[0,T].
    \end{aligned}
\end{equation}
Finally, we can derive the general LCHS expression for a non-unitary operator $\mathcal{T}e^{-\int_0^tA(s)\d s}$. This is obtained by inserting Eq.~(\ref{equ:LCHS:3}) into the phase-driven APS from Theorem~\ref{theorem:A1}:
\begin{widetext}
\begin{equation}
    \begin{aligned}
        \label{equ:LCHS:4}
        \mathcal{T}e^{-\int_0^tA(s)\d s}
        &=\mathcal{T}e^{-i\int_0^tA_2(s)\d s}\cdot\mathcal{T}e^{-\int_0^t(\mathcal{T}e^{-i\int_0^s A_2(r)\d r})^\dagger A_1(s)\mathcal{T}e^{-i\int_0^s A_2(r)\d r}\d s}\\
        &=\int_{-\infty}^{+\infty}\gamma(\eta)\mathcal{T}e^{-i\int_0^tA_2(s)\d s}\cdot\mathcal{T}e^{-i\eta\int_0^t(\mathcal{T}e^{-i\int_0^s A_2(r)\d r})^\dagger A_1(s)\mathcal{T}e^{-i\int_0^s A_2(r)\d r}\d s}\d\eta\\
        &=\int_{-\infty}^{+\infty}\gamma(\eta)\mathcal{T}e^{-i\int_0^t(\eta A_1(s)+A_2(s))\d s}\d\eta.
    \end{aligned}
\end{equation}
\end{widetext}
Thus, LCHS is precisely phase-driven APS plus a Fourier-transform primitive. APS supplies the decomposition, while the Fourier transform supplies the approximation mechanism.

\subsubsection{Non-Optimality of LCHS}
\par LCHS remains near-optimal but not strictly additive in the non-unitary sense. The obstruction already appears in the scalar surrogate $e^{-\lambda t}=\int\gamma(\eta)e^{-i\eta\lambda t}\d\eta$: to reach error $\varepsilon$, the Fourier domain must be truncated at tolerance-dependent cutoffs. If those cutoffs stayed bounded as $\varepsilon\to0$, then $\gamma(\eta)$ would have compact support, and a Paley-Wiener argument would contradict the exponential decay of $e^{-\lambda t}$ along the negative real axis. Thus the multiplicative tolerance overhead is inherited from Fourier truncation rather than from the APS decomposition itself. A detailed proof and the corresponding numerical verification are given in Appendix~\ref{section:11-1}; see Fig.~\ref{figure:lchs-scaling}.
\subsection{Non-Diagonal Density Matrix Encoding (NDME)}
\label{section:6-2}

\subsubsection{Interaction Picture for Lindbladians}

\par We now switch to the Lindbladian side. The point is to show that phase-driven APS naturally induces the interaction picture for Lindbladians:
\begin{equation}
    \begin{aligned}
        \label{equ:lindbladian:1}
        \frac{\d \rho(t)}{\d t}&=-i[H(t),\rho(t)]\\
        &\quad+\sum\limits_{i=1}^K\left(F_i(t)\rho(t) F_i^\dagger(t)-\frac{1}{2}\left\{\rho(t),F_i^\dagger(t) F_i(t)\right\}\right),
    \end{aligned}
\end{equation}
where $\rho(t)=\sum\limits_{i=1}^N\sum\limits_{j=1}^N\rho_{ij}(t)\ket{i}\bra{j}$ is the density matrix, $H(t)$ is the Hamiltonian, and $F_i(t)$ are dissipative/jump operators. Following the same construction as in phase-driven APS, we define $\mathcal{U}_p(t)=\mathcal{T}e^{-i\int_0^tH(s)\d s}$, which satisfies the ODE $\frac{\d \mathcal{U}_p(t)}{\d t}=-iH(t)\mathcal{U}_p(t)$. We then define the interaction-picture state $\rho_p(t)=\mathcal{U}_p^\dagger(t)\rho(t)\mathcal{U}_p(t)$. Using the same calculation pattern as in Theorem~\ref{theorem:A1}, we obtain the following evolution equation:
\begin{widetext}
    \begin{align}
        \nonumber\frac{\d \rho_p(t)}{\d t}
        \nonumber&=\frac{\d\mathcal{U}_p^\dagger(t)}{\d t}\rho(t)\mathcal{U}_p(t)+\mathcal{U}_p^\dagger(t)\frac{\d\rho(t)}{\d t}\mathcal{U}_p(t)+\mathcal{U}_p^\dagger(t)\rho(t)\frac{\d\mathcal{U}_p(t)}{\d t}\\
        \nonumber&=\sum\limits_{i=1}^K\left(F_{pi}(t)\rho_p(t) F_{pi}^\dagger(t)-\frac{1}{2}\rho_p(t)F_{pi}^\dagger(t) F_{pi}(t)-\frac{1}{2}F_{pi}^\dagger(t) F_{pi}(t)\rho_p(t)\right)\\
        \label{equ:lindbladian:2}&=\sum\limits_{i=1}^K\left(F_{pi}(t)\rho_p(t) F_{pi}^\dagger(t)-\frac{1}{2}\left\{\rho_p(t),F_{pi}^\dagger(t) F_{pi}(t)\right\}\right),
    \end{align}
\end{widetext}
where $H_d(t)=\mathcal{U}_p^\dagger(t)H(t)\mathcal{U}_p(t)$ and $F_{pi}(t)=\mathcal{U}_p^\dagger(t)F_i(t)\mathcal{U}_p(t)$. Eq.~(\ref{equ:lindbladian:2}) shows that $\rho_p(t)$ satisfies a Lindblad equation with only time-dependent dissipative/jump terms, and it preserves the trace, i.e., $\text{tr}[\rho_p(t)] = \text{tr}[\rho(t)]$. Based on this, we recover $\rho(t)$ from $\rho_p(t)$:
\begin{equation}
    \begin{aligned}
        \label{equ:lindbladian:3}
        \rho(t)=\mathcal{U}_p(t)\rho_p(t)\mathcal{U}_p^\dagger(t).
    \end{aligned}
\end{equation}
This transforms the evolution of $\rho(t)$ into unitary conjugations together with a Lindblad equation containing only dissipative/jump terms, namely the interaction picture for Lindbladians. In APS language, it shifts the main difficulty from the original non-unitary operator to a purely dissipative block. The remaining weak point is again time dependence: nearly optimal algorithms are known mainly in the time-independent case~\cite{Shang2025ExpLindbladian}.

\begin{figure*}[htbp]
    \centering
    \includegraphics[width=0.75\linewidth]{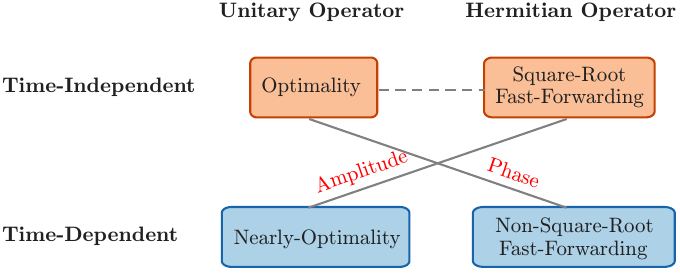}
    \caption{
    \label{figure:limitations-overview}\textbf{Schematic comparison of the current frontiers relevant to APS.} The diagram contrasts unitary versus Hermitian operators and time-independent versus time-dependent settings, indicating where phase-driven and amplitude-driven APS inherit current strengths and where the main bottlenecks remain.}
\end{figure*}

\subsubsection{Emergence of NDME = Phase-Driven APS + Interaction Picture for Lindbladians}

\par NDME can then be viewed as the block-matrix realization of the same APS-plus-interaction-picture route. To expose only the essential structure in the main text, we define the interaction-picture block variable
\begin{equation}
    \begin{aligned}
        \label{equ:lindbladian:5}
        \rho_p(t)=\begin{bmatrix}
            \tilde{\rho}_p(t) & u_p(t)\\
            u_p^\dagger(t) & 1
        \end{bmatrix},
    \end{aligned}
\end{equation}
where $u_p(t)$ is the phase-driven APS state and $\tilde{\rho}_p(t)$ is an auxiliary matrix with $\mathrm{tr}[\tilde{\rho}_p(t)]=0$. For the corresponding dissipative block operator $F_p(t)$, this variable satisfies
\begin{equation}
    \begin{aligned}
        \label{equ:lindbladian:6}
        \frac{\d\rho_p(t)}{\d t}
        &=F_p(t)\rho_p(t)F_p^\dagger(t)
        -\frac{1}{2}\{\rho_p(t),F_p^\dagger(t)F_p(t)\}.
    \end{aligned}
\end{equation}
This is the NDME-type dissipative equation after the coherent part has been peeled off by phase-driven APS. In this form, APS makes the bottleneck transparent: the remaining difficulty lies in simulating the time-dependent dissipative block rather than in the block encoding itself. The full block-matrix embedding and the inverse transformation back to the canonical NDME master equation are deferred to Appendix~\ref{section:11-2}.

\section{Summary and Outlook}
\label{section:7}
\subsection{Summary}
\label{section:7-1}
In this work, we introduce {\it Amplitude-Phase Separation} (APS) to separate coherent and dissipative structure in time-dependent non-unitary evolution. For time-independent simulation, phase-driven APS plus the shifted Dyson series achieves additive non-unitary-optimal scaling, meaning additive rather than multiplicative tolerance overhead, while amplitude-driven APS yields square-root fast-forwarding with respect to the dissipative scale in a broader non-unitary regime.

\par These gains are complementary rather than simultaneous. APS also clarifies that LCHS is phase-driven APS plus Fourier approximation, while NDME is phase-driven APS plus a Lindbladian interaction picture. The benchmarks confirm the predicted crossover between phase-driven and amplitude-driven advantages in both PDE and ODE settings. The main remaining weakness lies in the time-dependent dissipative components.

\subsection{Outlook}
\label{section:7-2}
The next targets are the time-dependent Hermitian and interaction-picture components behind Fig.~\ref{figure:limitations-overview}. On the phase-driven side, fast-forwardable time-dependent Hermitian simulation would broaden the range of dissipative speedups. On the amplitude-driven side, the interaction factor still carries a multiplicative $\log\varepsilon^{-1}$ overhead~\cite{Low2019Interaction,Cao2023QuantumSF}. Existing Lindbladian fast-forwarding results~\cite{Shang2025ExpLindbladian,Shang2025Forwarding} and our piecewise time-independent construction in Appendix~\ref{section:10} are natural starting points.

\par More broadly, APS may also help beyond simulation itself. In particular, the same separation may extend to quantum linear systems $Ax=b$~\cite{Harrow2009QuantumAF,Subasi2019PRL,Berry2022PRXQuantum} and related scientific-computing tasks in which dissipative and oscillatory effects coexist, while keeping the coherent and dissipative sources of complexity analytically distinct.

\section*{Acknowledgments}
SJ acknowledges the support of the NSFC grant No. 12341104, the Shanghai Pilot Program for Basic Research, the Science and Technology Commission of Shanghai Municipality (STCSM) grant no. 24LZ1401200, the Shanghai Jiao Tong University 2030 Initiative, and the Fundamental Research Funds for the Central Universities. QH thanks Xiaoyang He in Shanghai Jiao Tong University for useful suggestions on proofs.

\section*{Data Availability}
Data and code used in this research will be publicly available upon acceptance.

\bibliography{reference}
\appendix
\section{Additional Discussion of the Shifted Dyson Series}
\label{section:8}
\subsection{Detailed Proof for Theorem~\ref{theorem:B1}}
\label{section:8-1}
\par In this section, we present the Theorem~for approximating Hermitian operators using Riemann sums of the Dyson series.
\begin{lemma}
\label{lemma:B2}
\par Let $L(s)$ be a Hermitian matrix satisfying $L(s)\succeq 0$ for all $s\in[0,T]$, and define $L_{\max}=\sup\limits_{s\in[0, t]}\|L(s)\|$, $\tau=tL_{\max}$, $\hat{L}_{\max}=\sup\limits_{s\in[0, t]}\left\|\left.\frac{\d L(t)}{\d t}\right|_{t=s}\right\|$. For any error tolerance $\varepsilon > 0$, if the number of discrete intervals $N_m$ is chosen such that
\begin{equation}
    \begin{aligned}
        \label{equ:dyson:9}
        N_m\ge \frac{M(\hat{L}_{\max}t+L_{\max}(\tau+1))}{\varepsilon}.
    \end{aligned}
\end{equation}
Then the sum $\sum\limits_{k=0}^M(-1)^kP_k$ can be approximated by $\sum\limits_{k=0}^M(-1)^kQ_k$ within an error bound of $\varepsilon$:
\begin{equation*}
    \begin{aligned}
        \left\|\sum\limits_{k=0}^M(-1)^kQ_k-\sum\limits_{k=0}^M(-1)^kP_k\right\|\le\varepsilon,
    \end{aligned}
\end{equation*}
where $Q_k$, which does not contain the time-ordering operator $\mathcal{T}$, is defined as
\begin{equation*}
    \begin{aligned}
        Q_k=e^{-\tau}h^k\sum\limits_{0\le m_1<\cdots<m_k< N_m}\tilde{L}(m_kh)\cdots\tilde{L}(m_1h),
    \end{aligned}
\end{equation*}
in which $\tilde{L}(t)=L(t)-L_{\max}I$ and $m_1,\cdots,m_k$ are indices in $0,\cdots,N_m$.
\begin{proof}
\par We can rewrite $P_k$ in the following form:
\begin{equation}
    \begin{aligned}
        \label{equ:dyson:10}
        P_k=e^{-\tau}\int_0^t\d t_k\tilde{L}(t_k)
        \cdots\int_0^{t_2}\d t_1\tilde{L}(t_1).
    \end{aligned}
\end{equation}
The approximation of $P_k$ consists of two parts: the first part handles the time dependence of $L(t)$, and the second part deals with the integration $\int_0^{t_{j+1}}$.\\
    $\bullet$ {\bf Step 1: the time dependence of $L(t)$.} Based on the definition of $m_k$, we have $t_k=m_k h+{\Delta_k}$, where the remainder clearly satisfies ${\Delta_k}<h$. We then define the following $R_k$, {which approximates $P_k$ by replacing $t_k$ with $m_kh$}:
\begin{equation}
    \begin{aligned}
        \label{equ:dyson:11}
        R_k=e^{-\tau}\int_0^t\d t_k\tilde{L}(m_kh)
        \cdots\int_0^{t_2}\d t_1\tilde{L}(m_1h).
    \end{aligned}
\end{equation}
Since the main difference clearly lies between $\tilde{L}(t_j)$ and $\tilde{L}(m_j h)$, we can derive an upper bound for this difference as follows:
\begin{equation*}
    \begin{aligned}
        \|\tilde{L}(t_j)-\tilde{L}(m_jh)\|&=\|L(m_j{h}+{\Delta_j})-L(m_jh)\|\\
        &\le {\Delta_j}\hat{L}_{\max}<h\hat{L}_{\max},
    \end{aligned}
\end{equation*}
where $\hat{L}_{\max}$ is the Lipschitz coefficient, defined as $\sup\limits_{s\in[0, t]} \left\|\frac{d L(s)}{ds} \right\|$. Therefore, using this upper bound estimate, we can obtain the error between the sum of $P_k$ in Eq.~(\ref{equ:dyson:10}) and the sum of $R_k$ in Eq.~(\ref{equ:dyson:11}) as follows by using the fact $\|\tilde{L}(t)\|=\|L(t)-L_{\max} I\| \le L_{\max}$ and setting $t_{k+1}=t$:
\begin{widetext}
    \begin{align}
        \nonumber&\left\|\sum\limits_{k=0}^M(-1)^kP_k-\sum\limits_{k=0}^M(-1)^kR_k\right\|\le\sum\limits_{k=1}^M\|P_k-R_k\|\\
        \nonumber\le& e^{-\tau}\sum\limits_{k=1}^M\left(\prod_{p=k}^2\int_0^{t_{p+1}}\d t_p\|\tilde{L}(t_p)\|\cdot\int_0^{t_2}\d t_1\|\tilde{L}(t_1)-\tilde{L}(m_1h)\|\right.\\
        \nonumber&\qquad\qquad+\sum\limits_{j=2}^{k-1}\big(\prod_{p=k}^{j+1}\int_0^{t_{p+1}}\d t_p\|\tilde{L}(t_p)\|\cdot\int_0^{t_{j+1}}\d t_j\|\tilde{L}(t_j)-\tilde{L}(m_jh)\|\cdot\prod_{q=j-1}^1\int_0^{t_{q+1}}\d t_{q}\|\tilde{L}(m_{q}h)\|\big)\\
        \nonumber&\qquad\qquad+\left.\int_0^{t_{k+1}}\d t_k\|\tilde{L}(t_k)-\tilde{L}(m_kh)\|\cdot\prod_{q=k-1}^1\int_0^{t_{q+1}}\d t_{q}\|\tilde{L}(m_{q}h)\|\right)\\
        \nonumber\le& h\hat{L}_{\max}e^{-\tau}\sum\limits_{k=1}^ML_{\max}^{k-1}\sum\limits_{j=1}^k\left(\int_0^t\d t_k\cdots\int_0^{t_{j+1}}\d t_j\cdots\int_0^{t_2}\d t_1\right)\\
        \label{equ:dyson:12}=&ht\hat{L}_{\max}\sum\limits_{k=1}^M\frac{e^{-\tau}\tau^{k-1}}{(k-1)!}<ht\hat{L}_{\max}.
    \end{align}
\end{widetext}
Note that the ordered product $\prod$ is evaluated from larger to smaller indices, and the matrix with a larger (smaller) index is placed on the left (right). This satisfies the requirement of the time-ordering operator.\\
$\bullet$ {\bf Step 2: the integration $\int_0^{t_{j+1}}$.} We need to compute the error introduced by the integration region. First, we can express $Q_k$ in the following split format:
\begin{equation*}
    \begin{aligned}
        Q_k&=e^{-\tau}h^k\sum\limits_{m_k=0}^{N_m-1}\tilde{L}(m_kh)
        \cdots\sum\limits_{m_1=0}^{m_2-1}\tilde{L}(m_1h)
    \end{aligned}
\end{equation*}
and in the form according to the definition of the integral:
\begin{equation}
    \begin{aligned}
        \label{equ:dyson:13}
        Q_k&=e^{-\tau}\int_0^t\d t_k\tilde{L}(m_kh)
        \cdots\int_0^{m_2h}\d t_1\tilde{L}(m_1h),
    \end{aligned}
\end{equation}
where evidently there is at most an $O(h)$ discrepancy in each integration region. We first consider the difference over any single time segment, which yields the following inequality:
\begin{equation*}
    \begin{aligned}
        \int_{m_{j+1}h}^{t_{j+1}}\|\tilde{L}(m_jh)\|\d t_j\le L_{\max}\Delta_{j+1}< L_{\max}h,
    \end{aligned}
\end{equation*}
where we use the property that $\|\tilde{L}(t)\| \le L_{\max}$. By setting $t_{k+1}=t$ and $m_{k+1}=\lfloor\frac{t}{h}\rfloor$, the difference between the sums of $R_k$ in Eq.~(\ref{equ:dyson:11}) and $Q_k$ in Eq.~(\ref{equ:dyson:13}) is calculated as follows:
\begin{widetext}
    \begin{align}
        \nonumber&\left\|\sum\limits_{k=0}^M(-1)^kR_k-\sum\limits_{k=0}^M(-1)^kQ_k\right\|\le\sum\limits_{k=1}^M\left\|R_k-Q_k\right\|\\
        \nonumber\le& e^{-\tau}\sum\limits_{k=1}^M\left(\prod_{p=k}^2\int_0^{t_{p+1}}\d t_p\|\tilde{L}(m_ph)\|\cdot\int_{m_2h}^{t_2}\d t_1\|\tilde{L}(m_1h)\|\right.\\
        \nonumber&\qquad\qquad+\sum\limits_{j=2}^{k-1}\big(\prod_{p=k}^{j+1}\int_0^{t_{p+1}}\d t_p\|\tilde{L}(m_ph)\|\cdot\int_{m_{j+1}h}^{t_{j+1}}\d t_j\|\tilde{L}(m_jh)\|\cdot\prod_{q=j-1}^1\int_0^{m_{q+1}h}\d t_{q}\|\tilde{L}(m_{q}h)\|\big)\\
        \nonumber&\qquad\qquad+\left.\int_{m_{k+1}h}^{t_{k+1}}\d t_k\|\tilde{L}(m_kh)\|\cdot\prod_{q=k-1}^1\int_0^{m_{q+1}h}\d t_{q}\|\tilde{L}(m_{q}h)\|\right)\\
        \nonumber\le& e^{-\tau}\sum\limits_{k=1}^M\left(\prod_{p=k}^2\int_0^{t_{p+1}}\d t_p\|\tilde{L}(m_ph)\|\cdot\int_{m_2h}^{t_2}\d t_1\|\tilde{L}(m_1h)\|\right.\\
        \nonumber&\qquad\qquad+\sum\limits_{j=2}^{k-1}\big(\prod_{p=k}^{j+1}\int_0^{t_{p+1}}\d t_p\|\tilde{L}(m_ph)\|\cdot\int_{m_{j+1}h}^{t_{j+1}}\d t_j\|\tilde{L}(m_jh)\|\cdot\int_0^{t_{j+1}}\d t_{j-1}\|\tilde{L}(m_{j-1}h)\|\cdot\prod_{q=j-2}^1\int_0^{m_{q+1}h}\d t_{q}\|\tilde{L}(m_{q}h)\|\big)\\
        \nonumber&\qquad\qquad+\left.\int_{m_{k+1}h}^{t_{k+1}}\d t_k\|\tilde{L}(m_kh)\|\cdot\int_0^{t_{k+1}}\d t_{k-1}\|\tilde{L}(m_{k-1}h)\|\cdot\prod_{q=k-2}^1\int_0^{m_{q+1}h}\d t_{q}\|\tilde{L}(m_{q}h)\|\right)\\
        \nonumber\le& e^{-\tau}\sum\limits_{k=1}^M hL_{\max}^k\left(\prod_{p=k}^3\int_0^{t_{p+1}}\d t_p\cdot\int_0^{t_3}\d t_2+\int_{0}^{t_{k+1}}\d t_{k-1}\cdot\prod_{q=k-2}^1\int_0^{m_{q+1}h}\d t_{q}\right.\\
        \nonumber&\qquad\qquad\qquad\quad\ \left.+\sum\limits_{j=2}^{k-1}\big(\prod_{p=k}^{j}\int_0^{t_{p+1}}\d t_p\cdot\int_0^{t_{j+2}}\d t_{j+1}\cdot\int_0^{t_{j+1}}\d t_{j-1}\cdot\prod_{q=j-2}^1\int_0^{m_{q+1}h}\d t_{q}\big)\right)\\
        \label{equ:dyson:14}\le& he^{-\tau}\sum\limits_{k=1}^ML_{\max}^k\frac{kt^{k-1}}{(k-1)!}\le hL_{\max}(\tau+1).
    \end{align}
\end{widetext}
\par Combining the results from Eq.~(\ref{equ:dyson:12}) and (\ref{equ:dyson:14}), one can obtain an upper bound for the difference between the sums of $P_k$ and $Q_k$:
\begin{equation*}
    \begin{aligned}
        \left\|\sum\limits_{k=0}^M(-1)^kP_k-\sum\limits_{k=0}^M(-1)^kQ_k\right\|
        \le& h(\hat{L}_{\max}t+L_{\max}(\tau+1)).
    \end{aligned}
\end{equation*}
To ensure this upper bound is controlled by the error tolerance $\varepsilon$, $N_m$ must satisfy the following condition:
\begin{equation*}
    \begin{aligned}
        N_m\ge \frac{M(\hat{L}_{\max}t+L_{\max}(\tau+1))}{\varepsilon}.
    \end{aligned}
\end{equation*}
This completes the proof.
\end{proof}
\end{lemma}

\subsection{Necessity and Insufficiency of the Dissipative Process for the Shift Technique}
\label{section:8-2}
\par A natural question arises: why has the shift technique not been used for Hamiltonian simulation, while it has been studied for Hermitian operators? In this section, we use theoretical arguments to address this issue by illustrating the necessity of dissipative dynamics and why dissipation alone is insufficient for the shift technique.
\par Although we provide a series approximation for the Hermitian operator in Theorem~\ref{theorem:B1}, designing the actual quantum algorithm requires encoding $\tilde{L}(t)$, which inherently introduces errors. We want to ensure that these errors do not amplify exponentially within the quantum circuit. Therefore, as highlighted in Remark~\ref{remark:B3}, the norm of the series $c$ must not be too large. At the very least, it should remain independent of $t$.
\par For necessity, consider the special case $L(t)=A$. Let the shift point be $\beta\in \mathbb{C}$. Then we have
\begin{equation*}
    \begin{aligned}
        e^{-At}&=e^{-\beta t}e^{-(A-\beta I)t}\\
        &=e^{-\beta t}\sum\limits_{k=0}^\infty \frac{1}{k!}\int_0^t\cdots\int_0^t(\beta I-A)^k\d^k t,
    \end{aligned}
\end{equation*}
which means that the degree of amplitude amplification can be characterized as
\begin{equation}
    \begin{aligned}
        \label{equ:dyson:8}
        c=&e^{-\text{Re}[\beta] t}\sum\limits_{k=0}^\infty \frac{1}{k!}\int_0^t\cdots\int_0^t\|\beta I-A\|^k\d^k t\\
        =&e^{(-\text{Re}[\beta]+\|\beta I-A\|)t}\ge e^{(-\text{Re}[\beta]+\rho(\beta I-A))t},
    \end{aligned}
\end{equation}
where we use the inequality $\|\beta I-A\|\ge\rho(\beta I-A)$. If $-\text{Re}[\beta] + \rho(\beta I - A) > 0$, the norm $c$ will inevitably increase exponentially with $t$. Moreover, computing the spectral radius $\rho(\beta I-A)$ is equivalent to solving an optimization problem for a fixed $\beta$: $\max\limits_{\lambda_i \in \Lambda(A)} |\beta-\lambda_i|$, where $\Lambda(A)$ denotes the set of eigenvalues of $A$. Therefore, to keep the norm $c$ constant, the following condition is necessary:
\begin{equation}
    \begin{aligned}
        -\text{Re}[\beta]+\max\limits_{\lambda_i\in\Lambda(A)}|\beta-\lambda_i|<0.
    \end{aligned}
\end{equation}
First, this requires $\text{Re}[\lambda_i]>0$ for all $\lambda_i\in \Lambda(A)$. Second, we must satisfy $\max\limits_{\lambda_i\in\Lambda(A)}\left|\text{Re}[\beta]-\text{Re}[\lambda_i]\right|<\text{Re}[\beta]$. This error analysis clearly demonstrates the necessity of the dissipative process (positive real eigenvalues) for the shift technique to succeed.
\par To show that dissipation is not sufficient, it suffices to provide a counterexample. Suppose we select $L(t)=A=\begin{bmatrix}0&1\\0&0\end{bmatrix}$. The matrix $\frac{A+A^\dagger}{2}$ is manifestly positive semidefinite. Utilizing the norm equality given by Eq.~(\ref{equ:dyson:8}), we have $\|\beta I-A\|=\frac{1+\sqrt{1+4|\beta|^2}}{2}\ge\text{Re}[\beta]$. Consequently, the shift technique fails.

\subsection{Further Improvements on the Shifted Dyson Series}
\label{section:8-3}

\begin{figure}[tbp]
    \centering
    \includegraphics[width=\linewidth]{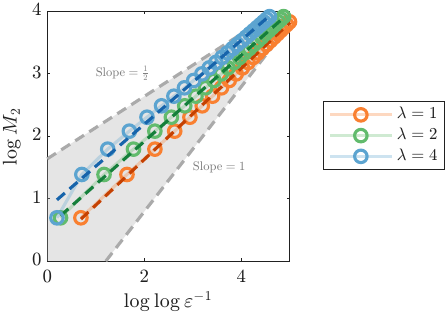}
    \caption{
    \label{figure:shifted-dyson-scaling}\textbf{Shifted Dyson truncation scaling.} For $\lambda=1,2,4$, the fitted slopes of $\log M_2$ versus $\log\log\varepsilon^{-1}$ lie between $\frac{1}{2}$ and $1$, consistent with the regime-dependent analysis.}
\end{figure}

\par This subSection~supplies the derivation behind the sharper regime-dependent estimate quoted in Section~\ref{section:3-1}, together with the numerical verification shown in Fig.~\ref{figure:shifted-dyson-scaling}. We notice that the complexity of constructing quantum circuits using the shifted Dyson series does not yet reach the theoretical bound in Eq.~(\ref{equ:2}). This discrepancy likely stems either from the inherent properties of the Dyson series or from our error bounding techniques. However, we found that the proof of Lemma~\ref{lemma:B1} can be optimized further when $M>2\tau$, hinting at potential improvements under specific conditions. Utilizing the result in Eq.~(\ref{equ:dyson:6}), we adopt a new analytical approach. Ensuring $e^{-\tau}\left(\frac{e\tau}{M}\right)^M < \varepsilon$ is equivalent to satisfying the following inequality:
\begin{equation*}
    \begin{aligned}
        -\tau+M\left(1+\log\left(\frac{\tau}{M}\right)\right) < \log\varepsilon.
    \end{aligned}
\end{equation*}
If $M \ge 2\tau$ holds, we can apply the inequality $\ln(1+x) \le x - \frac{1}{2}x^2$ for $x \in (-1,0]$. This simplifies the condition on $M$ to:
\begin{equation}
    \begin{aligned}
        \label{equ:dyson:15}
        \frac{(\tau-M)^2}{M}>2\log\varepsilon^{-1},
    \end{aligned}
\end{equation}
which under the condition $M>2\tau$ evaluates to:
\begin{equation*}
    \begin{aligned}
        M>\max\{2\tau,\tau+\log\varepsilon^{-1}+\sqrt{(\log\varepsilon^{-1})^2+2\tau\log\varepsilon^{-1}}\}.
    \end{aligned}
\end{equation*}
We observe here that as $\tau \to 0$, $M \sim \log \varepsilon^{-1}$ exhibits a linear dependence. Interestingly, as $\tau \to \infty$, $M \sim (\log \varepsilon^{-1})^{\frac{1}{2}}$ follows a square-root dependence. This suggests that the dependence of the general shifted Dyson series on $\log \varepsilon^{-1}$ theoretically lies in the interval $[\frac{1}{2}, 1]$, outperforming previous methods~\cite{An2023Optimal,Jin2025Optimal,Low2025Optimal}.

\par To verify this result, we perform a series expansion on the function $e^{-\lambda t}$ over the interval $t \in [0,1]$. We plot the scaling of the truncation term $\log M$ against the error scaling $\log \log \varepsilon^{-1}$. In Fig.~\ref{figure:shifted-dyson-scaling}, the slope of the fitted line lies precisely within $[\frac{1}{2}, 1]$. Furthermore, the slope decreases as $\lambda$ increases. This observation is perfectly consistent with our theoretical derivation.

\section{Proofs and Extensions of Phase-Driven APS}
\label{section:9}

\subsection{Non-Unitary Optimality = Phase-Driven APS + Shifted Dyson Series}
\label{section:9-1}
\par In this section, we consider the query complexity and gate complexity analysis for time-independent non-unitary dynamics. Based on the conclusion of Theorem~\ref{theorem:A1}, we need to address a time-independent unitary operator and a time-dependent Hermitian operator, where the former requires the application of the following classical quantum singular value transform (QSVT) algorithm~\cite{Gilyen2018singular}.
\begin{lemma}
\label{lemma:B3}
\textbf{(QSVT)~\cite{Gilyen2018singular}} 
Let $H$ be an $s$-sparse Hamiltonian acting on $m_H$ qubits. To simulate its evolution within error $\delta>0$, the query complexity scales as 
\begin{equation*}
    \begin{aligned}
        \mathcal{Q}_H=\mathcal{O}\left(\chi+\frac{\log\delta^{-1}}{\log\log\delta^{-1}}\right),
    \end{aligned}
\end{equation*}
and the gate complexity scales as
\begin{equation*}
    \begin{aligned}
        \mathcal{C}_H=\mathcal{O}\left(\chi\cdot\polylog(\chi,\delta^{-1})\right),
    \end{aligned}
\end{equation*}
where $\chi=H_{\max} t$ with $H_{\max}=\|H\|$, and $t$ denotes the total evolution time.
\end{lemma}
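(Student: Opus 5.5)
The plan is to reconstruct the QSVT Hamiltonian-simulation bound of~\cite{Gilyen2018singular} in three stages: a block-encoding of $H$, a polynomial approximation of $e^{-iHt}$ whose degree sets $\mathcal{Q}_H$, and a QSVT circuit that realizes that polynomial with one query per degree. Throughout, $\chi=H_{\max}t$ as in the statement.

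\textbf{Block-encoding.} From the $s$-sparse oracles (row/column index oracle and entry oracle) I would build the standard block-encoding $U_H$ of $H/\alpha$. This uses $\mathcal{O}(1)$ oracle queries and $\mathcal{O}(m_H+\polylog(\delta^{-1}))$ extra gates. The normalization $\alpha$ is absorbed into $H_{\max}$, following the convention the paper uses when it writes $\chi=H_{\max}t$.

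\textbf{Polynomial approximation.} I would split $e^{-iHt}=\cos(Ht)-i\sin(Ht)$ and use the Jacobi--Anger expansions in $x=H/\alpha$:
\begin{equation*}
\cos(\chi x)=J_0(\chi)+2\sum_{k\ge1}(-1)^kJ_{2k}(\chi)T_{2k}(x),\qquad \sin(\chi x)=2\sum_{k\ge0}(-1)^kJ_{2k+1}(\chi)T_{2k+1}(x).
\end{equation*}
These are, respectively, an even and an odd Chebyshev series with $|T_k(x)|\le1$ on $[-1,1]$.

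\textbf{Truncation degree (the crux).} Truncating both series at degree $R$ leaves a tail bounded by $2\sum_{k>R}|J_k(\chi)|\le 2\sum_{k>R}\frac{(\chi/2)^k}{k!}$. This is exactly the Poisson-type tail already handled in Lemma~\ref{lemma:B1}, with $\tau$ replaced by $\chi/2$ and without the $e^{-\tau}$ prefactor.

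I would split into two regimes, as in that lemma:
\begin{itemize}[leftmargin=*]
\item If $R\ge e\chi$, the ratio test bounds the tail by $(e\chi/2R)^R$, which is already below $\delta$ once $R\gtrsim\chi$ and $\log\delta^{-1}\lesssim\chi$.
\item If $\chi$ is small compared to $\log\delta^{-1}$, I would solve $(e\chi/2R)^R\le\delta$ via the Lambert function. Using $\mathcal{W}(x)\ge\frac{\log x+1}{2}$ and the same algebraic manipulation as in the proof of Lemma~\ref{lemma:B1}, this gives $R=\mathcal{O}\big(\log\delta^{-1}/\log\log\delta^{-1}\big)$.
\end{itemize}
Together the two regimes give $R=\mathcal{O}\big(\chi+\frac{\log\delta^{-1}}{\log\log\delta^{-1}}\big)$. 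This step is where the additive rather than multiplicative form is won, and it is the part I expect to need the most care.

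\textbf{QSVT realization and amplification.} The truncated polynomials must be rescaled by a factor of $1/2$ so that each has sup-norm at most $1$ on $[-1,1]$. They are then of definite parity and bounded, so the QSVT theorem applies: each one is implemented by a circuit using $R$ applications of $U_H$ and $U_H^\dagger$, interleaved with single-qubit phase rotations on one ancilla. The phase angles are computed classically.

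The two parts are combined by a one-ancilla LCU into a block-encoding of $\frac12 e^{-iHt}$, up to error $\mathcal{O}(\delta)$. One round of robust oblivious amplitude amplification then boosts this to $e^{-iHt}$ within $\delta$, costing only a constant factor in queries. This gives the query bound $\mathcal{Q}_H=\mathcal{O}(R)$.

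\textbf{Gate complexity.} Each of the $\mathcal{O}(R)$ queries carries $\mathcal{O}(m_H+\polylog)$ gates for the sparse-access arithmetic (entry computation to precision $\delta/R$, which is justified by the stability argument of Remark~\ref{remark:B3}). Each phase rotation costs $\mathcal{O}(1)$ gates. Summing over $R$ gives $\mathcal{C}_H=\mathcal{O}(\chi\,\polylog(\chi,\delta^{-1}))$, with the $\delta$-dependent additive part of $R$ absorbed into the polylog.
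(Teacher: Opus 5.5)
The paper gives no argument for Lemma~\ref{lemma:B3}; it simply imports the result of Gily\'en et al. Your outline is the construction of that source: sparse-access block-encoding, a Jacobi--Anger split into an even $\cos$ and an odd $\sin$ series, QSVT on each, an LCU producing $\frac{1}{2}e^{-iHt}$, and one round of robust oblivious amplitude amplification. Those steps are fine.

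\textbf{The gap is in the truncation-degree step}, exactly where you say the additive form is won. Solving $(e\chi/2R)^R=\delta$ exactly gives $R=\log\delta^{-1}/\mathcal{W}\bigl(2\log\delta^{-1}/(e\chi)\bigr)$. The bound $\mathcal{W}(x)\ge\frac{\log x+1}{2}$ then yields $R\lesssim\log\delta^{-1}/\log\bigl(\log\delta^{-1}/\chi\bigr)$. The logarithm is of $\log\delta^{-1}/\chi$, not of $\log\delta^{-1}$.

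This is $\mathcal{O}(\log\delta^{-1}/\log\log\delta^{-1})$ only when $\chi\le(\log\delta^{-1})^{1-c}$ for a fixed $c>0$. Your first bullet covers only $\chi\gtrsim\log\delta^{-1}$, so the band in between is not handled. It cannot be handled: at $\chi=\log\delta^{-1}/\log\log\delta^{-1}$ the required degree is $\Theta(\log\delta^{-1}/\log\log\log\delta^{-1})$, which is not $\mathcal{O}(\chi+\log\delta^{-1}/\log\log\delta^{-1})$.

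This is not an artifact of your tail estimate. Comparing Chebyshev coefficients, every degree-$R$ polynomial is at least $|J_{R+1}(\chi)|$ away from $e^{-i\chi x}$ in sup-norm on $[-1,1]$. By Debye's asymptotics, $\log(1/|J_R(\chi)|)=(1+o(1))R\log\frac{2R}{e\chi}$ as $R/\chi\to\infty$. So no argument that goes through a polynomial in $H/\alpha$, QSVT in particular, gives the stated bound in that band.

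Borrowing the manipulation from Lemma~\ref{lemma:B1} does not help either. It passes from $\log(\log\varepsilon^{-1}/\tau-1)$ to $\log\log\varepsilon^{-1}$ via $\frac{a+b}{c+b}\ge\frac{a}{c}$. That inequality holds only when $c\ge a$, but there $c=\log(\log\varepsilon^{-1}/\tau-1)$ is far smaller than $a=\log\varepsilon^{-1}-\tau$.

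What you can prove is the tight form of Gily\'en et al., $R=\mathcal{O}\bigl(\chi+\frac{\log\delta^{-1}}{\log(e+\log\delta^{-1}/\chi)}\bigr)$. To get it, set $\chi'=e\chi/2$ and note that the solution of $(\chi'/R)^R=\delta$ satisfies $R<e^{q}\chi'+\log\delta^{-1}/q$ for every $q>0$. Then take $q=\frac{1}{2}\log(e+\log\delta^{-1}/\chi')$. This bound is still additive in the tolerance and never worse than $\mathcal{O}(\chi+\log\delta^{-1})$. It coincides with the stated bound only in the two extreme regimes, so the lemma should be quoted in this form or with that restriction.

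Two smaller points:
\begin{enumerate}
\item With $s$-sparse oracles the block-encoding normalization is $\alpha=s\max_{j,k}|H_{jk}|$, which can exceed $\|H\|$ by up to a factor $s$. You cannot absorb it into $H_{\max}=\|H\|$. What you actually prove has $\chi=\alpha t$, unless a block-encoding with $\alpha=\mathcal{O}(\|H\|)$ is assumed.
\item Your gate count silently drops the per-query $m_H$ term and the additive $\log\delta^{-1}/\log\log\delta^{-1}$ part of $R$. That additive part cannot be absorbed into $\chi\,\polylog(\chi,\delta^{-1})$ when $\chi\lesssim 1$.
\end{enumerate}
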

\par Based on this lemma, we obtain the following complexity analysis Theorem~for time-independent non-unitary dynamics.
\begin{theorem}
\label{theorem:B2}
\par For estimating the operator $e^{-At}$ with $A_1=\frac{A+A^\dagger}{2}\succeq 0$ over the time range $t\in[0,T]$, we can achieve a gate complexity that is logarithmic in $\varepsilon$, and the number of queries to $\mathrm{HAM}_P$ and $A$ is
\begin{equation}
    \begin{aligned}
        \label{equ:independent:1}
        \mathcal{O}\left(\frac{\|u_0\|}{\|u(t)\|}\left(A_{\max}t+\frac{\log\varepsilon^{-1}}{\log\log\varepsilon^{-1}}\right)\right),
    \end{aligned}
\end{equation}
and the number of queries to $u_0$ equals the number of repetitions, i.e.,
\begin{equation}
    \begin{aligned}
        \label{equ:independent:2}
        \mathcal{O}\left(\frac{\|u_0\|}{\|u(t)\|}\right).
    \end{aligned}
\end{equation}
Here $\mathrm{HAM}_P$ is defined from
\begin{equation}
    \begin{aligned}
        \label{equ:independent:3}
        (\bra{0}_a\otimes I_s)&\mathrm{HAM}_P(\ket{0}_a\otimes I_s)\\
        &=\sum\limits_{m=0}^{N_m-1}|m\rangle\langle m|\otimes\frac{e^{iA_2hm}\tilde{A}_1e^{-iA_2hm}}{A_{\max}},
    \end{aligned}
\end{equation}
where $\tilde{A}_1=A_1-A_{\max}I$ with $A_{\max}=\|A\|$, $\tau=tA_{\max}$, and the requirements for $M$, $N_m$ and $h$ satisfy the conditions stated in Theorem~\ref{theorem:B1}.
\begin{proof}
\par The proofs for the gate complexity and the per-run query complexity are provided in Lemma~\ref{lemma:B4}, while the calculation of the repetition count is presented in Lemma~\ref{lemma:B5}.
\end{proof}
\end{theorem}
\noindent This Theorem~is based on the following two lemmas, namely the per-run complexity analysis and the repetition count estimation.
\begin{lemma}
\label{lemma:B4}
\par For the following phase-driven APS of time-independent non-unitary dynamics
\begin{equation*}
    \begin{aligned}
        e^{-At}=e^{-iA_2t}\cdot e^{-\int_0^tA_p(s)\d s},
    \end{aligned}
\end{equation*}
where $A_p(t)=e^{iA_2t}A_1e^{-iA_2t}$ with $A_1=\frac{A+A^\dagger}{2}\succeq 0$. We can use the following Dyson series to approximate it with an error tolerance $\varepsilon$:
\begin{equation}
    \begin{aligned}
        \label{equ:independent:4}
        e^{-At}&\approx e^{-\tau}\sum\limits_{k=0}^M(-1)^kh^k\\
        &\quad\times\sum\limits_{0\le m_1<\cdots<m_k< N_m}e^{-iA_2t}\prod_{j=k}^1e^{iA_2m_jh}\tilde{A}_1e^{-iA_2m_jh},
    \end{aligned}
\end{equation}
where $\tilde{A}_1=A_1-A_{\max}I$ with $A_{\max}=\|A\|$, $\tau=tA_{\max}$, and the requirements for $M$, $N_m$ and $h$ satisfy the conditions stated in Theorem~\ref{theorem:B1}. Furthermore, for the specific quantum implementation, the query and gate complexity can be estimated as follows:\\
    $\bullet$ Qubits: $\mathcal{O}\left(\log\varepsilon^{-1}\right)$;\\
    $\bullet$ Preparation Queries to $A$: $\mathcal{O}\left(A_{\max}t+\frac{\log\varepsilon^{-1}}{\log\log\varepsilon^{-1}}\right)$;\\
    $\bullet$ Queries to $\mathrm{HAM}_P$: $\mathcal{O}\left(A_{\max}t+\frac{\log\varepsilon^{-1}}{\log\log\varepsilon^{-1}}\right)$;\\
    $\bullet$ Gates: $\mathcal{O}\left(A_{\max}t\cdot\polylog(A_{\max},t,\log \varepsilon^{-1},\varepsilon^{-1})\right)$,\\
where $j=1,\cdots,\lfloor\log N_m\rfloor$, and we use the fact $\|\tilde{A}_1\|,\|A_1\|,\|A_2\|\le A_{\max}$.
\begin{proof}
\par First, we provide an explanation for the construction of $\mathrm{HAM}_P$, which can be expressed in the following form~\cite{Low2019Interaction}:
\begin{equation}
    \begin{aligned}
        \label{equ:independent:5}
        \mathrm{HAM}_P&=\left(\sum\limits_{m=0}^{N_m-1}\ket{m}\bra{m}_d\otimes\mathbf{1}_a\otimes e^{iA_2hm}\right)\\
        &\quad\cdot(\mathbf{1}_d\otimes O_{A_1})\left(\sum\limits_{m=0}^{N_m-1}\ket{m}\bra{m}_d\otimes\mathbf{1}_a\otimes e^{-iA_2hm}\right),
    \end{aligned}
\end{equation}
where we assume the block-encoding of $A_1$ is fixed and easy to handle. For processing $e^{-iA_2hm}$ ($m=0,\cdots,N_m-1$), we can adopt a method similar to binary representation in computers, using the QSVT approach to implement $e^{-iA_2 h 2^j}$ ($j=1,\cdots,\lfloor\log N_m\rfloor$) quantumly~\cite{Low2019Interaction}, the query complexity to prepare $e^{-iA_2 h 2^j}$ is at most
\begin{equation}
    \begin{aligned}
    \mathcal{Q}=\mathcal{O}\left(A_{\max}t+\frac{\log\varepsilon^{-1}}{\log\log\varepsilon^{-1}}\right).
    \end{aligned}
\end{equation}
Since each $e^{-iA_2 h m}$ can be represented by at most $\lceil\log N_m\rceil$ existing operators with coefficient $1$, using the result  in Remark~\ref{remark:B3}, to achieve an error of $\varepsilon^\prime=\frac{\varepsilon}{\log N_m}$, the required overall gate complexity is at most
\begin{equation}
    \begin{aligned}
        \mathcal{C}&=\mathcal{O}\left(A_{\max}t\cdot\log N_m\cdot\polylog(A_{\max},t,\log N_m,\varepsilon^{-1})\right)\\
        &=\mathcal{O}\left(A_{\max}t\cdot\polylog(A_{\max},t,\log\varepsilon^{-1},\varepsilon^{-1})\right),
    \end{aligned}
\end{equation}
where we used the lower bound of $N_m$ from Theorem~\ref{theorem:B1}, i.e.
\begin{equation}
    \begin{aligned}
        \label{equ:independent:6}
        N_m\ge \frac{M(2A_{\max}^2t+A_{\max}(\tau+1))}{\varepsilon}=\mathcal{O}(MA_{\max}^2t\varepsilon^{-1}),
    \end{aligned}
\end{equation}
with $\hat{L}_{\max}=\|[A_1,A_2]\|\le2A_{\max}^2$. This implies that the gate complexity is logarithmic with respect to the error tolerance $\varepsilon$, which ensures the feasibility of the quantum algorithm. \par Furthermore, using the conclusions from Theorem~\ref{theorem:B1}, one can use a Dyson series to approximate $e^{-\int_0^tA_p(s)\d s}$ and obtain
\begin{equation}
    \begin{aligned}
        \label{equ:independent:7}
        e^{-At}&\approx e^{-\tau}\sum\limits_{k=0}^M(-1)^kh^k\\
        &\quad\sum\limits_{0\le m_1<\cdots<m_k< N_m}e^{-iA_2t}\prod_{j=k}^1e^{iA_2m_jh}\tilde{A}_1e^{-iA_2m_jh}.
    \end{aligned}
\end{equation}
For the number of qubits, using the calculation for $N_m$ in Eq.~(\ref{equ:independent:6}), we obtain
\begin{equation}
    \begin{aligned}
        \mathcal{B}=\mathcal{O}(\log N_m)=\mathcal{O}(\log\varepsilon^{-1}).
    \end{aligned}
\end{equation}
For the query and gate complexity for $e^{-\int_0^t A_p(s) \d s}$ regarding $\mathrm{HAM}_P$, one first needs to evaluate the norm of the coefficient for $Q_k$. It is straightforward to see that
\begin{equation}
    \begin{aligned}
        \label{equ:independent:8}
        \sum\limits_{k=0}^Me^{-\tau}h^k\sum\limits_{0\le m_1<\cdots<m_k< N_m}&A_{\max}^k\le \sum\limits_{k=0}^Me^{-\tau}\frac{\tau^k}{k!} = 1.
    \end{aligned}
\end{equation}
Applying the result from Theorem~\ref{theorem:B1} and Remark~\ref{remark:B3} and setting $\varepsilon^{\prime\prime}=\frac{\varepsilon}{M}$, we conclude that the number of required queries is at most $M$, and queries to $\mathrm{HAM}_P$ is at most
\begin{equation}
    \begin{aligned}
        \mathcal{Q}=\mathcal{O}\left(A_{\max}t+\frac{\log\varepsilon^{-1}}{\log\log\varepsilon^{-1}}\right).
    \end{aligned}
\end{equation}
This completes the proof.
\end{proof}
\end{lemma}

\begin{lemma}
\label{lemma:B5}
\par Using the oracle in Eq.~ (\ref{equ:independent:3}) to implement the quantum algorithm in Eq.~(\ref{equ:independent:4}) for time range $t\in[0,T]$, the repetition count scales as
\begin{equation}
    \begin{aligned}
        \label{equ:independent:9}
        g=\mathcal{O}\left(\frac{\|u_0\|}{\|u(t)\|}\right).
    \end{aligned}
\end{equation}
\begin{proof}
\par The success probability can be estimated as~\cite{An2023QuantumAF}
\begin{equation*}
    \begin{aligned}
        \textbf{Pr}&\sim\left(e^{-\tau}\sum\limits_{k=0}^Mh^k\sum\limits_{0\le m_1<\cdots<m_k< N_m}A_{\max}^k\right)^{-2}\|e^{-At}\|^2\\
        &\ge \left(\frac{\|u(t)\|}{\|u_0\|}\right)^2\left(\sum\limits_{k=0}^\infty e^{-\tau}\frac{\tau^k}{k!}\right)^{-2}\ge \left(\frac{\|u(t)\|}{\|u_0\|}\right)^2,
    \end{aligned}
\end{equation*}
where we used the inequality $\|u(t)\|\le \|e^{-At}\|\cdot\|u_0\|$. Using the amplitude amplification method, one can obtain an estimate of the number of repetitions as
\begin{equation}
    \begin{aligned}
        g=\mathcal{O}\left(\frac{\|u_0\|}{\|u(t)\|}\right).
    \end{aligned}
\end{equation}
This completes the proof.
\end{proof}
\end{lemma}
\subsection{Improved Query Complexity for Time-Dependent Case}
\label{section:9-2}
\par In this section, we further discuss the time-dependent non-unitary dynamics. For the analysis of the unitary operator, we rely on the truncated Dyson series proposed by Low et al.~\cite{Low2019Interaction}, with the specific content as follows:
\begin{lemma}
\label{lemma:B6}
\textbf{(Truncated Dyson Series)~\cite{Low2019Interaction}}
Let $H(s)$ be a time-dependent Hamiltonian, sufficiently smooth in $t$. To simulate its evolution within error $\delta>0$, the query complexity scales as
\begin{equation*}
    \begin{aligned}
        \mathcal{Q}_H=\mathcal{O}\left(\chi\frac{\log(\chi\delta^{-1})}{\log\log(\chi\delta^{-1})}\right),
    \end{aligned}
\end{equation*}
and the gate complexity scales as
\begin{equation*}
    \begin{aligned}
        \mathcal{C}_H=\mathcal{O}\left(\chi\cdot\polylog(\chi,\delta^{-1})\right),
    \end{aligned}
\end{equation*}
in which $\chi = H_{\max}t$ with $H_{\max}=\max\limits_{s\in[0,T]}\|H(s)\|_{\max}$, and $t$ denotes total evolution time.
\end{lemma}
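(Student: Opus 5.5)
Since Lemma~\ref{lemma:B6} is the truncated Dyson series result of Low and Wiebe~\cite{Low2019Interaction}, the plan is to reproduce their argument in the notation of Lemma~\ref{lemma:B1} and Lemma~\ref{lemma:B2}, without the shift. The three ingredients are time segmentation, Dyson truncation with Riemann-sum discretization, and linear combination of unitaries (LCU) combined with robust oblivious amplitude amplification.

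\emph{Segmentation and truncation.} First, split $[0,t]$ into $r=\lceil 2\chi/\ln 2\rceil=\mathcal{O}(\chi)$ segments of length $\Delta t$ with $H_{\max}\Delta t\le \ln 2/2$. On each segment, expand $\mathcal{T}e^{-i\int H(s)\d s}$ as an unshifted Dyson series. The $k$-th term is bounded by $(H_{\max}\Delta t)^k/k!$, exactly as in the bound on $\|P_k\|$ in Lemma~\ref{lemma:B1} with $e^{-\tau}$ removed. Truncate at order
\begin{equation*}
K=\mathcal{O}\left(\frac{\log(r/\delta)}{\log\log(r/\delta)}\right),
\end{equation*}
so that the tail is at most $\delta/(3r)$; this is the Lambert-function estimate already used in Eq.~(\ref{equ:dyson:7}). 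Next, replace each time-ordered integral by the ordered Riemann sum $Q_k$ on $N_m$ grid points, following Lemma~\ref{lemma:B2}. The smoothness assumption bounds $\|\d H/\d s\|$, so $N_m=\mathrm{poly}(\chi,\|\dot H\|,\delta^{-1})$ keeps the per-segment discretization error below $\delta/(3r)$. Because $N_m$ enters only through $\log N_m$ clock qubits, it affects gate counts only polylogarithmically.

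\emph{Per-segment implementation.} Within one segment, realize the truncated, discretized series as an LCU.
\begin{itemize}
\item The select operator applies $K$ controlled queries to the block encoding $\mathrm{HAM}$-$\mathrm{T}=\sum_m\ket{m}\bra{m}\otimes H(m\Delta t/N_m)/H_{\max}$.
\item The prepare operator creates the weights $(-i\Delta t H_{\max})^k/k!$ on a $k$-register.
\item For each $k$, it also creates a uniform superposition over ordered tuples $m_1<\cdots<m_k$. I would do this by preparing $K$ independent uniform clock registers, applying a reversible sorting network, and discarding the sorting permutation. The $1/k!$ counting factor then matches the $1/k!$ in the Dyson coefficients.
\item The coefficient $1$-norm is at most $e^{H_{\max}\Delta t}\le\sqrt2$, roughly $2$ after accounting for the block-encoding normalization, so a single round of robust oblivious amplitude amplification yields a unitary within $\mathcal{O}(\delta/r)$ of the segment propagator.
\end{itemize}
Each segment uses $\mathcal{O}(K)$ queries.

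\emph{Assembly and main obstacle.} Compose the $r$ segments; errors add by unitarity, giving total error $\delta$ and total query count
\begin{equation*}
\mathcal{O}(rK)=\mathcal{O}\left(\chi\frac{\log(\chi\delta^{-1})}{\log\log(\chi\delta^{-1})}\right).
\end{equation*}
The gate count per segment is $\mathcal{O}(K\log N_m)$ for the sorting network plus the cost of preparing the $k$-weights. This gives $\mathcal{C}_H=\mathcal{O}(\chi\,\mathrm{polylog}(\chi,\delta^{-1}))$.

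I expect the main obstacle to be the error accounting in the amplification step. Robust oblivious amplitude amplification needs the implemented operator to be close to unitary. The truncation and discretization errors must therefore be shown not to be amplified beyond a constant factor. I would handle this with the standard robust oblivious amplitude amplification bound (error scales by at most a factor $3$), together with the observation from Remark~\ref{remark:B3} that block-encoding errors propagate linearly in $K$. This linear propagation costs only an extra $\log K$ inside the logarithms.
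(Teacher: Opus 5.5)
The paper gives no proof of Lemma~\ref{lemma:B6}; it imports the result from \cite{Low2019Interaction}. Your outline faithfully reconstructs that cited argument: segments with $H_{\max}\Delta t=\mathcal{O}(1)$, unshifted Dyson truncation at $K=\mathcal{O}(\log(\chi/\delta)/\log\log(\chi/\delta))$, Riemann sums with sorting-network time ordering, LCU plus robust oblivious amplitude amplification, and errors adding over $r=\mathcal{O}(\chi)$ segments. It yields the right $\mathcal{O}(rK)$ query count. Your bound $(H_{\max}\Delta t)^k/k!$ on the $k$-th Dyson term reads $H_{\max}$ as the block-encoding normalization, i.e.\ a spectral-norm bound. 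That is what the Low--Wiebe bound uses; with a literal entrywise max-norm a sparsity factor would enter.

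Three steps need repair.

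\emph{Amplification step.} As written, it fails. One round of oblivious amplitude amplification sends the success amplitude $\sin\theta=1/s$ ($s$ the LCU $1$-norm) to $\sin 3\theta$. This has modulus $1$ only for $s=2$, or the trivial $s=1$. Your choice $H_{\max}\Delta t\le \ln 2/2$ gives $s\le\sqrt2$. At $s=\sqrt2$ one round returns amplitude $\sin(3\pi/4)=1/\sqrt2$, so there is no gain at all. ``Roughly $2$'' is not enough; you must make $s$ exactly $2$. Either add an ancilla rotation that multiplies the success amplitude by $s/2$, or take $H_{\max}\Delta t$ near $\ln 2$ and tune it so the truncated $1$-norm equals $2$. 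Robust oblivious amplitude amplification then gives the $\mathcal{O}(\delta/r)$ per-segment error you want.

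\emph{Time ordering.} Sorting $k$ independent uniform clock registers does not produce a uniform superposition over strictly increasing tuples. The permutation register also cannot be discarded: $\mathrm{SELECT}$ must act trivially on it and $\mathrm{PREP}^\dagger$ uncomputes it. With it kept, the block-encoded $k$-th term is $\frac{h^k}{k!}\sum_{\text{all }N_m^k\text{ tuples}}\mathcal{T}[H\cdots H]$ with $h=\Delta t/N_m$, which is exactly the Low--Wiebe discretization. This differs from the strictly ordered sum $Q_k$ of Lemma~\ref{lemma:B2} (without the $e^{-\tau}$ factor) by collision tuples. Their total norm is at most $\binom{k}{2}N_m^{-1}(H_{\max}\Delta t)^k/k!$, which is harmless after enlarging $N_m$ polynomially. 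So your claim that the $1/k!$ ``matches'' should be replaced by this estimate.

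\emph{Gate count.} A sorting network on $K$ registers needs $\mathcal{O}(K\log K)$ comparators of $\mathcal{O}(\log N_m)$ gates each, not $\mathcal{O}(K\log N_m)$ gates in total. This leaves the $\mathcal{O}(\chi\,\polylog(\chi,\delta^{-1}))$ gate bound intact.
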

\par Based on this lemma, we design a quantum algorithm for time-dependent non-unitary dynamics according to Theorem~\ref{theorem:A1}.
\begin{theorem}
\label{theorem:B3}
\par For estimating the operator $e^{-\int_0^tA(s)\d s}$ with $A_1(t)=\frac{A(t)+A^\dagger(t)}{2}\succeq 0$ over the time range $t\in[0,T]$, we can achieve a gate complexity that is logarithmic in $\varepsilon$, and the number of overall queries to $\mathrm{HAM}_{A_1}$ and $\mathrm{HAM}_{A_2}$ is
\begin{equation}
    \begin{aligned}
        \label{equ:dependent:1}
        \mathcal{O}\left(\frac{\|u_0\|}{\|u(t)\|}A_{\max}t\frac{\log\varepsilon^{-1}}{\log\log\varepsilon^{-1}}\right),
    \end{aligned}
\end{equation}
and the number of queries to $u_0$ equals the number of repetitions, i.e.,
\begin{equation}
    \begin{aligned}
        \label{equ:dependent:2}
        \mathcal{O}\left(\frac{\|u_0\|}{\|u(t)\|}\right).
    \end{aligned}
\end{equation}
The detailed definitions of $\mathrm{HAM}_{A_1}$ and $\mathrm{HAM}_{A_2}$ are provided below.
\begin{gather}
    \nonumber (\bra{0}_a\otimes I_s)\mathrm{HAM}_{A_1}(\ket{0}_a\otimes I_s)=\sum\limits_{m=0}^{N_m-1}\ket{m}\bra{m}\otimes\frac{A_1(hm)}{A_{\max}},\\
    \label{equ:dependent:3}(\bra{0}_a\otimes I_s)\mathrm{HAM}_{A_2}(\ket{0}_a\otimes I_s)=\sum\limits_{m=0}^{N_m-1}\ket{m}\bra{m}\otimes\frac{A_2(hm)}{A_{\max}},
\end{gather}
where $\tilde{A}_1=A_1-A_{\max}I$ with $A_{\max}=\|A\|$, $\tau=tA_{\max}$, $\mathcal{U}_p(t_i,t_j)=\mathcal{T}e^{-i\int_{t_i}^{t_j}H(s)\d s}$ with $t_{k+1}=t$, and the requirements for $M$, $N_m$ and $h$ satisfy the conditions stated in Theorem~\ref{theorem:B1}.
\begin{proof}
\par The proofs for the gate complexity and the per-run query complexity are provided in Lemma~\ref{lemma:B7}, while the calculation of the repetition count is presented in Lemma~\ref{lemma:B8}.
\end{proof}
\end{theorem}
Similarly, we require the following two lemmas, namely the per-run complexity analysis and the repetition count estimation.
\begin{lemma}
\label{lemma:B7}
\par For the following APS decomposition of time-dependent non-unitary dynamics
\begin{equation}
    \begin{aligned}
        \label{equ:dependent:4}
        \mathcal{T}e^{-\int_0^tA(s)\d s}=\mathcal{T}e^{-i\int_0^tA_2(s)\d s}\cdot\mathcal{T}e^{-\int_0^tA_p(s)\d s},
    \end{aligned}
\end{equation}
where $A_p(t)=\mathcal{U}_p^\dagger(t) A_1(t)\mathcal{U}_p(t)$ with $A_1(t)=\frac{A(t)+A^\dagger(t)}{2}\succeq 0$ and $\mathcal{U}_p(t)=\mathcal{T}e^{-i\int_0^t A_2(s)\d s}$. One can use the following Dyson series to approximate it with an error tolerance $\varepsilon$:
\begin{equation}
    \begin{aligned}
        \label{equ:dependent:5}
        &\quad\mathcal{T}e^{-\int_0^tA(s)\d s}\approx e^{-\tau}\sum\limits_{k=0}^M(-1)^kh^k\\
        \times&\hspace{-1.5em}\sum\limits_{0\le m_1<\cdots<m_k< N_m}\prod_{j=k}^1\left[\mathcal{U}_p(m_jh,m_{j+1}h)\cdot\tilde{L}(m_jh)\right]\cdot\mathcal{U}_p(m_1h),
    \end{aligned}
\end{equation}
where $\tilde{A}_1=A_1-A_{\max}I$ with $A_{\max}=\|A\|$, $\tau=tA_{\max}$, $\mathcal{U}_p(t_i,t_j)=\mathcal{T}e^{-i\int_{t_i}^{t_j}A_2(s)\d s}$ with $t_{k+1}=t$, and the requirements for $M$, $N_m$ and $h$ satisfy the conditions stated in Theorem~\ref{theorem:B1}. Furthermore, for the specific quantum implementation, the query and gate complexity can be estimated as follows:\\
    $\bullet$ Qubits: $\mathcal{O}\left(\log\varepsilon^{-1}\right)$;\\
    $\bullet$ Queries to $\mathrm{HAM}_{A_1}$: $\mathcal{O}\left(A_{\max}t+\frac{\log\varepsilon^{-1}}{\log\log\varepsilon^{-1}}\right)$;\\
    $\bullet$ Queries to $\mathrm{HAM}_{A_2}$: $\mathcal{O}\left(A_{\max}t\frac{\log\varepsilon^{-1}}{\log\log\varepsilon^{-1}}\right)$;\\
    $\bullet$ Gates: $\mathcal{O}\left((A_{\max}t)^2\cdot\polylog(A_{\max},t,\log\varepsilon^{-1},\varepsilon^{-1})\right)$,\\
where we use the fact $\|\tilde{A}_1\|,\|A_1\|,\|A_2\|\le A_{\max}$.
\begin{proof}
\par First, we provide the quantum implementation of $e^{-\int_0^t A(s) ds}$ by applying Theorem~\ref{theorem:B1}.
\begin{widetext}
\begin{equation}
    \begin{aligned}
        \mathcal{T}e^{-\int_0^tA(s)\d s}&\approx e^{-\tau}\sum\limits_{k=0}^M(-1)^kh^k\sum\limits_{0\le m_1<\cdots<m_k< N_m}\mathcal{U}_p(t)[A_p(m_kh)+A_{\max} I]\cdots[A_p(m_1h)+A_{\max} I]\\
        &=e^{-\tau}\sum\limits_{k=0}^M(-1)^kh^k\sum\limits_{0\le m_1<\cdots<m_k< N_m}\mathcal{U}_p(t)\prod_{j=k}^1\mathcal{U}_p^\dagger(m_jh)\tilde{L}(m_jh)\mathcal{U}_p(m_jh)\\
        &= e^{-\tau}\sum\limits_{k=0}^M(-1)^kh^k\sum\limits_{0\le m_1<\cdots<m_k< N_m}\prod_{j=k}^1\left[\mathcal{U}_p(m_jh,m_{j+1}h)\cdot\tilde{L}(m_jh)\right]\cdot\mathcal{U}_p(m_1h).
    \end{aligned}
\end{equation}
\end{widetext}
For the estimation for the number of qubits, one needs to calculate $\hat{L}_{\max}$:
\begin{equation*}
    \begin{aligned}
        \hat{L}_{\max}
        &\le \sup\limits_{s\in[0,T]}2\|A_1(s)\|\|A_2(s)\|+\left\|\left.\frac{\d L(t)}{\d t}\right|_{t=s}\right\|\\
        &\le 2A_{\max}^2+\sup\limits_{s\in[0,T]}\left\|\left.\frac{\d A_1(t)}{\d t}\right|_{t=s}\right\|,
    \end{aligned}
\end{equation*}
and the number of discrete intervals $N_m$ should be chosen as
\begin{equation}
    \begin{aligned}
        N_m&\ge \frac{M(2A_{\max}^2t+\sup\limits_{s\in[0,T]}\left\|\left.\frac{\d A_1(t)}{\d t}\right|_{t=s}\right\|+A_{\max}(\tau+1))}{\varepsilon}\\
        &=\mathcal{O}(\varepsilon^{-1}),
    \end{aligned}.
\end{equation}
Then, we can obtain
\begin{equation}
    \begin{aligned}
        \mathcal{B}=\mathcal{O}(\log N_m)=\mathcal{O}(\log\varepsilon^{-1}).
    \end{aligned}
\end{equation}
For query complexity analysis, based on the conclusions from Eq.~(\ref{equ:independent:8}), Lemma~\ref{lemma:B6}, and Remark~\ref{remark:B3}, we can set $\varepsilon^\prime = \frac{\varepsilon}{M}$ and find that the number of queries to $\mathrm{HAM}_{L}$ is
\begin{equation}
    \begin{aligned}
        \mathcal{Q}=\mathcal{O}(M)=\mathcal{O}\left(A_{\max}t+\frac{\log\varepsilon^{-1}}{\log\log\varepsilon^{-1}}\right).
    \end{aligned}
\end{equation}
However, for $\mathcal{U}_p(t_j,t_{j+1})$ and $\mathcal{U}_p(t_j)$, one cannot apply the method from Theorem~\ref{theorem:B2} since $A(t)$ is time-dependent. In this case, one can only directly query the oracle for $A$ and set $\varepsilon^{\prime\prime}=\frac{\varepsilon}{M}$, resulting in a query complexity to $\mathrm{HAM}_{H}$:
\begin{equation}
    \begin{aligned}
        \mathcal{Q}&=\sum\limits_{j=1}^k\mathcal{O}\left(A_{\max}(m_{j+1}-m_j)h\frac{\log\varepsilon^{-1}}{\log\log\varepsilon^{-1}}\right)\\
        &=\mathcal{O}\left(A_{\max}t\frac{\log\varepsilon^{-1}}{\log\log\varepsilon^{-1}}\right),
    \end{aligned}
\end{equation}
for any $k\in1,\cdots,M$. Finally, for the gate complexity analysis, since we need to construct $M$ unitary operators, the gate complexity is
\begin{equation}
    \begin{aligned}
        \mathcal{C}&=\mathcal{O}\left(A_{\max}t\cdot M\cdot\polylog(A_{\max},t,\log N_m,\varepsilon^{-1})\right)\\
        &=\mathcal{O}\left((A_{\max}t)^2\cdot\polylog(A_{\max},t,\log\varepsilon^{-1},\varepsilon^{-1})\right).
    \end{aligned}
\end{equation}
This completes the proof.
\end{proof}
\end{lemma}
\begin{lemma}
\label{lemma:B8}
\par Using the oracle in Eq.~(\ref{equ:dependent:3}) to implement the quantum algorithm shown in Eq.~(\ref{equ:dependent:5}) over the time interval $t\in[0,T]$, the repetition count scales as
\begin{equation}
    \begin{aligned}
        g=\mathcal{O}\left(\frac{\|u_0\|}{\|u(t)\|}\right).
    \end{aligned}
\end{equation}
\begin{proof}
\par The success probability can be estimated as~\cite{An2023QuantumAF}
\begin{equation*}
    \begin{aligned}
        \textbf{Pr}&\sim\left(e^{-\tau}\sum\limits_{k=0}^Mh^k\sum\limits_{0\le m_1<\cdots<m_k< N_m}A_{\max}^k\right)^{-2}\|\mathcal{T}e^{-\int_0^tA(s)\d s}\|^2\\
        &\ge \left(\frac{\|u(t)\|}{\|u_0\|}\right)^2\left(\sum\limits_{k=0}^\infty e^{-\tau}\frac{\tau^k}{k!}\right)^{-2}\ge \left(\frac{\|u(t)\|}{\|u_0\|}\right)^2,
    \end{aligned}
\end{equation*}
where we use the inequality $\|u(t)\|\le \|e^{-\int_0^tL(s)\d s}\|\cdot\|u_0\|$. Using the amplitude amplification method, we can obtain an estimate of the number of repetitions as
\begin{equation*}
    \begin{aligned}
        g=\mathcal{O}\left(\frac{\|u_0\|}{\|u(t)\|}\right).
    \end{aligned}
\end{equation*}
\end{proof}
\end{lemma}

\subsection{Ill-Posed Problems}
\label{section:9-3}
\par Throughout the proof of Theorem~\ref{theorem:main:1}, we assumed $A_1=\frac{A+A^\dagger}{2}\succeq 0$, because Theorem~\ref{theorem:B1} strictly requires positive semi-definiteness. However, in practice, this condition often fails for problems with internal instabilities~\cite{Jin2024IllPosed}. Let $\lambda_{\min}^-(A)=\sup\limits_{s\in [0,t]}\lambda_{\min}(A_1(s))<0$. Next, we will relax the conditions of Theorem~\ref{theorem:B1} by examining cases where $\lambda_{\min}^-(A)t=\mathcal{O}(1)$. We need to readjust the parameters accordingly. 
\par Let the shifted matrix be $\tilde{A}_1(t)=A_1(t)-\frac{A_{\max}+\lambda_{\min}^-(A)}{2} I$, and let $\tau=\frac{A_{\max}+\lambda_{\min}^-(A)}{2}t$. We then bound the shifted matrix norm by $\tilde{L}_{\min}=\sup\limits_{s\in[0,t]}\|\tilde{A}_1(s)\|\le \frac{A_{\max}-\lambda_{\min}^-(A)}{2}$.
\par The necessary modifications target two primary factors. The first is the polynomial degree derived in Theorem~\ref{theorem:B1}, which dictates the per-run query complexity computed in Theorem~\ref{theorem:B2}. The second is the number of repetitions evaluated in Lemma~\ref{lemma:B5}. Both of these factors hinge completely on the following 1-norm estimation:
\begin{equation}
    \begin{aligned}
        \label{equ:extend:1}
        \sum\limits_{k=1}^M\|Q_k\|&\le e^{-\tau}h^k\sum\limits_{0\le m_1<\cdots<m_k< N_m}\prod_{j=1}^k\|\tilde{A}_1(m_kh)\|\\
        &\le e^{-\tau}\sum\limits_{k=1}^M\frac{(\frac{A_{\max}-\lambda_{\min}^-(A)}{2})^k}{k!}\le e^{\lambda_{\min}^-(A)t}.
    \end{aligned}
\end{equation}
Compared to the derivation in Lemma~\ref{lemma:B1}, we clearly observe that the sole difference stems from the multiplicative coefficient $e^{\lambda_{\min}^-(A)t}$. Therefore, one simply sets $\varepsilon^\prime = \varepsilon e^{-\lambda_{\min}^-(A)t}$ and substitutes it back into Lemma~\ref{lemma:B1}'s conclusion. This directly yields the requisite polynomial order $M$ as:
\begin{equation}
    \begin{aligned}
        \label{equ:extend:2}
        M&=\mathcal{O}\left(\tau+\lambda_{\min}^-(A)t\frac{\log\varepsilon^{-1}}{\log\log\varepsilon^{-1}}\right)\\
        &=\mathcal{O}\left(\frac{A_{\max}+\lambda_{\min}^-(A)}{2}T+\lambda_{\min}^-(A)t\frac{\log\varepsilon^{-1}}{\log\log\varepsilon^{-1}}\right)\\
        &=\mathcal{O}\left(A_{\max}t+\frac{\log\varepsilon^{-1}}{\log\log\varepsilon^{-1}}\right).
    \end{aligned}
\end{equation}
Regarding the number of repetitions, its value scales directly with the norm magnitude established in Eq.~(\ref{equ:extend:1}). Under the assumption that $\lambda_{\min}^-(A)t=\mathcal{O}(1)$, one can successfully apply amplitude amplification to bound the repetition count at:
\begin{equation}
    \begin{aligned}
        \label{equ:extend:3}
        g=\mathcal{O}\left(e^{\lambda_{\min}^-(A)t}\frac{\|u_0\|}{\|u(t)\|}\right).
    \end{aligned}
\end{equation}
Combining Eqs.~(\ref{equ:extend:2}) and (\ref{equ:extend:3}), we deduce that the total query complexity under the limit $\lambda_{\min}^-(A)t=\mathcal{O}(1)$ becomes:
\begin{equation}
    \begin{aligned}
        \label{equ:extend:4}
        \mathcal{Q}&=\mathcal{O}\left(e^{\lambda_{\min}^-(A)t}\frac{\|u_0\|}{\|u(t)\|}\left(A_{\max}t+\frac{\log\varepsilon^{-1}}{\log\log\varepsilon^{-1}}\right)\right)\\
        &=\mathcal{O}\left(\frac{\|u_0\|}{\|u(t)\|}\left(A_{\max}t+\frac{\log\varepsilon^{-1}}{\log\log\varepsilon^{-1}}\right)\right).
    \end{aligned}
\end{equation}
This outcome strongly indicates that our conclusion regarding non-unitary optimal scaling holds even for mildly ill-posed problems.

\subsection{Inhomogeneous Term}
\label{section:9-4}
\par For systems involving inhomogeneous terms, we employ the dimension expansion method proposed by Jin et al.~\cite{Jin2025LinearSystems}. By introducing a new solution variable $\tilde{u}(t)=[u(t); K\mathbf{1}]$, we express the expanded ODE and the new initial condition as:
\begin{equation}
    \begin{aligned}
        \label{equ:extend:5}
        \frac{\d\tilde{u}(t)}{\d t}=\begin{bmatrix}
            A(t) & \frac{\text{diag}[b(t)]}{K}\\
            O & 0
        \end{bmatrix}\tilde{u}(t):=\tilde{A}(t)\tilde{u}(t),\quad \tilde{u}(0)=\begin{bmatrix}
            u_0\\
            K\mathbf{1}
        
        \end{bmatrix},
    \end{aligned}
\end{equation}
where $K$ is a free parameter. To solve this updated ODE in Eq.~(\ref{equ:extend:5}), we treat it as an expanded-state case that can be slightly ill-posed. Provided that $\lambda_{\min}^-(\tilde{A}_1)t=\mathcal{O}(1)$, we can implement an optimal-query quantum algorithm precisely as outlined earlier. Hence, the goal is to choose a parameter $K$ that satisfies this condition. As a sufficient bound, we require that the following constraint uniformly holds:
\begin{equation*}
    \begin{aligned}
        \lambda_{\min}(\tilde{A}_1)t\ge \lambda_{\min}(A_1)t-\frac{b_{\max}t}{NK}=\mathcal{O}(1),
    \end{aligned}
\end{equation*}
where $b_{\max}=N\sup\limits_{s\in[0,t]}\max\limits_{i=1}^N|b_i(t)|$. Effectively, if we simply select $K=\frac{b_{\max}t}{N}$, we can successfully bridge the gap and directly apply the formal conclusions developed for ill-posed problems.

\section{Additional Discussion of Hermitian Fast-Forwarding}
\label{section:10}

\subsection{Piecewise Time-Independent Case}
\label{section:10-1}
\par For general time-dependent Hermitian operators, does Eq.~(\ref{equ:fast:2}) still hold? Unfortunately, directly generalizing this conclusion is difficult. However, we note that Eq.~(\ref{equ:fast:2}) exhibits a specific regularity. Treating $\eta$ as a random variable $\xi \sim \mathcal{N}(0,2)$ with mean $\mu=0$ and variance $\sigma^2=2$, the time-independent expectation of $\mathcal{T}e^{i\xi\sqrt{Lt}}$ evaluates to
\begin{equation*}
    \begin{aligned}
        \mathbb{E}_{\eta}\left[e^{-i\eta\sqrt{Lt}}\right]=\frac{1}{2\sqrt{\pi}}\int_{-\infty}^{+\infty}e^{-\frac{\eta^2}{4}}e^{-i\eta\sqrt{Lt}}\d\eta=e^{-Lt}.
    \end{aligned}
\end{equation*}
Remarkably, this perfectly matches Eq.~(\ref{equ:fast:2}). From a probabilistic perspective, this exactly recovers the time-independent result. We expect the same idea to extend to LCHS, although we leave a detailed treatment for future work.
\par We extend this result to the piecewise time-independent case. Partition the time-dependent matrix $L(t)$ on $[0,t]$ into $N_t$ segments $[s_jt,s_{j+1}t]$ ($j=0,\cdots,N_t-1$). Let $L(st)=L_j$ for $s\in[s_j,s_{j+1}]$, where $L_j$ denotes the time-independent matrix on the $j$th segment. We handle this using stochastic methods. Let $\eta(t)$ be piecewise constant, taking independent Gaussian values $\xi_j$ on each segment, with $\xi_j\sim\mathcal{N}\!\left(0,\frac{2}{s_{j+1}-s_j}\right)$. We derive the following properties:

\begin{equation}
    \begin{aligned}
        \label{equ:fast:4}
        E[\eta(t)]=0,\quad E[\eta(t_1)\eta(t_2)]=\left\{\begin{array}{l}
             \frac{2}{s_{j+1}-s_j},t_1,t_2\in [s_j,s_{j+1}],\\
             0,\text{else.}
        \end{array}\right.
    \end{aligned}
\end{equation}
Hence, we can decompose the operator $\mathcal{T}e^{-i\int_0^1\eta(s^\prime)\sqrt{tL(s^\prime t)}\,\mathrm{d}s^\prime}$ into a product formula based on the time segments. Using the time-independent result from Eq.~(\ref{equ:fast:2}), we obtain the conclusion for the piecewise time-independent case:
\begin{equation}
    \begin{aligned}
        \label{equ:fast:9}
        &\mathcal{T}e^{-\int_0^tL(s)\d s}=\mathcal{T}e^{-\int_0^1tL(ts^\prime)\d s^\prime}=\prod\limits_{j=0}^{N_t-1}e^{-(s_{j+1}-s_j)tL_j}\\
        =&\prod\limits_{j=0}^{N_t-1}E_\eta\left[e^{-i\xi_j(s_{j+1}-s_j)\sqrt{tL_j}}\right]
        =E_\eta\left[\mathcal{T}e^{-i\int_0^1\eta(s^\prime)\sqrt{tL(s^\prime t)}\d s^\prime}\right].
    \end{aligned}
\end{equation}
We derived a representation for a special time-dependent case through a stochastic construction. The same approach may also extend to other quantum algorithmic constructions.
\par For the Hamiltonian simulation in Eq.~(\ref{equ:fast:9}), the variable $\xi_j$ follows an unbounded Gaussian distribution. We adopt the exact truncation technique from Eq.~(\ref{equ:fast:3}) by setting a threshold $M$. Define the truncated random variable $\overline{\xi}_j$ by setting $\overline{\xi}_j=\xi_j$ when $|\xi_j|\le M$ and $\overline{\xi}_j=0$ otherwise. We must pick $M$ such that the truncation error remains strictly within the tolerance $\varepsilon$. Applying the piecewise method of Eq.~(\ref{equ:fast:9}), we set $\Delta s = \min\limits_{j=0}^{N_t-1}(s_{j+1}-s_j) = \mathcal{O}(1)$. This yields:
\begin{widetext}
\begin{equation}
    \begin{aligned}
        \label{equ:fast:10}
        \left\|E_\eta\left[\mathcal{T}e^{-i\int_0^1\eta(s^\prime)\sqrt{tL(s^\prime t)}\d s^\prime}-\mathcal{T}e^{-i\int_0^1\overline{\eta}(s^\prime)\sqrt{tL(s^\prime t)}\d s^\prime}\right]\right\|
        \le& N_t\left\|E_{\xi_j}\left[e^{-i\xi_j(s_{j+1}-s_j)\sqrt{tL_j}}-e^{-i\overline{\xi}_j(s_{j+1}-s_j)\sqrt{tL_j}}\right]\right\|\\
        \le&2N_t\int_M^{+\infty}e^{-\frac{\xi_j^2(s_{j+1}-s_j)}{4}}|s_{j+1}-s_j|\sqrt{tL_{\max}}\d\xi_j\\
        \le&2N_t\int_M^{+\infty}e^{-\frac{\xi_j^2(s_{j+1}-s_j)}{4}}\sqrt{tL_{\max}}\d\xi_j,
    \end{aligned}
\end{equation}
\end{widetext}
We restricted the analysis to the range $|\xi_j|>M$. We also applied the inequality $\left\|\prod\limits_{j=1}^qA_j-\prod\limits_{j=1}^qB_j\right\|\le\sum\limits_{k=1}^q\left(\prod\limits_{j=1}^{k-1}\|A_j\|\right)\|A_k-B_k\|\left(\prod\limits_{j=k+1}^q\|B_j\|\right)$ mapped with the bounds $\max\limits_{j=0}^{N_t-1}(s_{j+1}-s_j) \le 1$. For $M>1$, we use a Gaussian tail bound to derive an upper limit for Eq.~(\ref{equ:fast:10}):

\begin{equation*}
    \begin{aligned}
        2N_t\int_M^{+\infty}&e^{-\frac{\xi_j^2(s_{j+1}-s_j)}{4}}\sqrt{tL_{\max}}\d\xi_j\\
        &\le \frac{4N_t\sqrt{t L_{\max}}}{s_{j+1}-s_j}e^{-\frac{(s_{j+1}-s_j) M^2}{4}}\le \varepsilon.
    \end{aligned}
\end{equation*}
Assuming $\Delta s = \mathcal{O}(1)$ and $N_t=\mathcal{O}(1)$, we can select $M$ fulfilling the required condition to guarantee precise error bounds:
\begin{equation}
    \begin{aligned}
        \label{equ:fast:11}
        M=\mathcal{O}\left(\sqrt{\log\varepsilon^{-1}}\right).
    \end{aligned}
\end{equation}
This effectively limits the independent query complexity per run to $\sqrt{L_{\max}t\log\varepsilon^{-1}}$.
\par Furthermore, the success probability of simulating the unitary operator $\mathcal{T}e^{-i\int_0^1\eta(s^\prime)\sqrt{tL(s^\prime t)}\d s^\prime}$ is roughly $1$, and the averaging weight is normalized. The required number of repetitions is thus $\mathcal{O}\big(\|u_0\|/\|u(t)\|\big)$. Bringing this all together, the overall query complexity for the piecewise time-independent case becomes:
\begin{equation}
    \begin{aligned}
        \label{equ:fast:12}
        \mathcal{Q}=\mathcal{O}\left(\frac{\|u_0\|}{\|u(t)\|}\sqrt{L_{\max}t\log\varepsilon^{-1}}\right).
    \end{aligned}
\end{equation}

\subsection{Non-Unitary Fast-Forwarding = Amplitude-Driven APS + Dyson Series}
\label{section:10-2}
\par Although existing research has developed truncated Dyson-series methods for time-dependent unitary operators, in the present $\mathcal{U}_a^{-1}$-based amplitude-driven APS the factor $\mathcal{T}e^{-i\int_0^tA_a(s)ds}$ is generally non-unitary. Therefore, those methods are not directly applicable to the operator $e^{-A_1t}\cdot\mathcal{T}e^{-i\int_0^tA_a(s)ds}$ without adaptation. Here, we provide an approximation and a quantum implementation tailored to this operator.
\begin{theorem}
\label{theorem:C1}
\par Let $A_1$ be a Hermitian matrix satisfying $A_1\succeq 0$, and define $A_{1,\max}=\|A_1\|$, $A_{2,\max}=\|A_2\|$, $\tau_2=tA_{2,\max}$. For any error tolerance $0<\varepsilon<2^{-e}$, there exists a quantum algorithm for simulating the evolution in Eq.~(\ref{equ:interaction:1}) whose query complexity achieves the fast-forwarding bound with respect to both matrix norm and evolution time:
\begin{equation}
    \begin{aligned}
        \label{equ:fast:13}
        \mathcal{Q}=\mathcal{O}\left(\frac{\|u_0\|}{\|u(t)\|}\left(\sqrt{A_{1,\max}T\log\varepsilon^{-1}}+A_{2,\max}T\frac{\log\varepsilon^{-1}}{\log\log\varepsilon^{-1}}\right)\right),
    \end{aligned}
\end{equation}
and the number of queries to $u_0$ equals the number of repetitions, i.e.,
\begin{equation}
    \begin{aligned}
        \label{equ:fast:14}
        \mathcal{O}\left(\frac{\|u_0\|}{\|u(t)\|}\right).
    \end{aligned}
\end{equation}
\begin{proof}
\par First, the number of oracles required to construct the quantum circuit is given in Lemma~\ref{lemma:C1}. For the success probability, the part involving $e^{-A_1t}$ requires $\mathcal{O}\left(\frac{\|u_0\|}{\|u(t)\|}\right)$ repetitions, as shown in Section~\ref{section:4-1}. Meanwhile, the factor $\mathcal{T}e^{-i\int_0^tA_a(s)\d s}$ is treated as a general operator and is implemented within the per-run query bound in Lemma~\ref{lemma:C1}; in the stated model, it does not introduce an additional asymptotic repetition factor. Thus, the total number of repetitions remains $\mathcal{O}\left(\frac{\|u_0\|}{\|u(t)\|}\right)$. This completes the proof.
\end{proof}
\end{theorem}

\begin{lemma}
\label{lemma:C1}
\par Let $A_1$ be a Hermitian matrix satisfying $A_1\succeq 0$, and define $A_{1,\max}=\|A_1\|$, $A_{2,\max}=\|A_2\|$, $\tau_2=tA_{2,\max}$. For any error tolerance $0<\varepsilon<2^{-e}$, decompose $[0,t]$ into $N_t=\mathcal{O}(\tau_2)$ equidistant subintervals $[t_j,t_{j+1}]$ ($j=1,\cdots,N_t-1$) with $t_0=0$ and $t_{N_t}=t$, each with an equal length smaller than $\log 2$. Then we insert $N_m$
points into each time subinterval $[t_j,t_{j+1}]$ and define $h=\frac{t}{N_m}$. If the truncation order $M$ and the number of discrete intervals $N_m$ for each time interval are chosen such that
\begin{equation}
    \begin{aligned}
        \label{equ:fast:15}
        M&=\mathcal{O}\left(\frac{\log\varepsilon^{-1}}{\log\log\varepsilon^{-1}}\right),\\ N_m&\ge \frac{Me^{\tau_2}(A_{1,\max}A_{2,\max}t+A_{2,\max}(\tau_2+1))}{\varepsilon},
    \end{aligned}
\end{equation}
then the operator $e^{-At}$ can be approximated by the following series with error at most $\varepsilon$:
\begin{equation*}
    \begin{aligned}
        \left\|e^{-At}-\left[\sum\limits_{k=0}^M(-i)^kY_k\right]^{N_t}\right\|\le\varepsilon,
    \end{aligned}
\end{equation*}
where $Y_k$ is defined as
\begin{equation*}
    \begin{aligned}
        Y_k=h^k\sum\limits_{0\le m_1<\cdots<m_k< N_m}\left(\overleftarrow{\prod_{j=1}^k}e^{-A_1(m_{j+1}-m_j)h}A_2\right)e^{-A_1m_1h},
    \end{aligned}
\end{equation*}
in which $m_1,\cdots,m_k$ are indices in $0,\cdots,N_m$ and $m_{k+1}=N_m$. Furthermore, for the specific quantum implementation, the query and gate complexity can be estimated as follows:\\
    $\bullet$ Preparation Queries to $A_1$: $\mathcal{O}\left(\sqrt{A_{1,\max}t\log\varepsilon^{-1}}\right)$;\\
    $\bullet$ Queries to $\mathrm{HAM}_A$: $\mathcal{O}\left(A_{2,\max}t\frac{\log\varepsilon^{-1}}{\log\log\varepsilon^{-1}}\right)$;\\
    $\bullet$ Queries to $A_2$: $\mathcal{O}\left(A_{2,\max}t\frac{\log\varepsilon^{-1}}{\log\log\varepsilon^{-1}}\right)$,\\
where the oracle $\mathrm{HAM}_A$ is defined as
\begin{equation}
    \begin{aligned}
        \label{equ:fast:16}
        (\bra{0}_a\otimes I_s)\mathrm{HAM}_A&(\ket{0}_a\otimes I_s)
        =\sum\limits_{m=0}^{N_m-1}|m\rangle\langle m|\otimes e^{-\frac{A_1hm}{N_t}}.
    \end{aligned}
\end{equation}
\begin{proof}
\par As indicated in Lemma~\ref{lemma:C2}, we need to set $\tau_2<\log 2$. Therefore, we need to add $N_t$ points to decompose the time interval $[0,t]$ into segments of length less than $\log 2$. Thus, we have $e^{-At}=\left[e^{-\frac{At}{N_t}}\right]^{N_t}$, and the error estimation is as follows:
\begin{equation}
    \begin{aligned}
        &\left\|\left[e^{-\frac{At}{N_t}}\right]^{N_t}-\left[\sum\limits_{k=0}^M(-i)^kY_k\right]^{N_t}\right\|\\
        \le& \sum\limits_{k=0}^{N_t-1}\left\|e^{-\frac{At}{N_t}}\right\|^k\left\|e^{-\frac{At}{N_t}}-\sum\limits_{k=0}^M(-i)^kY_k\right\|\left\|\sum\limits_{k=0}^M(-i)^kY_k\right\|^{N_t-k-1}\\
        \le& N_t\left\|e^{-\frac{At}{N_t}}-\sum\limits_{k=0}^M(-i)^kY_k\right\|,
    \end{aligned}
\end{equation}
where we use the fact $\left\|\sum\limits_{k=0}^M(-i)^kY_k\right\|\le 1$. To apply the conclusions from Lemmas \ref{lemma:C2} and \ref{lemma:C3}, we need to set $\varepsilon^\prime = \frac{1}{N_t}\varepsilon$ and $t^\prime=\frac{t}{N_t}$, and derive the requirements for $M$ and $N_t$ in each segment as:
\begin{equation}
    \begin{aligned}
        \label{equ:fast:17}
        M&=\mathcal{O}\left(\frac{\log\varepsilon^{-1}}{\log\log\varepsilon^{-1}}\right),\\
        N_m&\ge \frac{Me^{\tau_2}(A_{1,\max}A_{2,\max}t+A_{2,\max}(\tau_2+1))}{\varepsilon}.
    \end{aligned}
\end{equation}
\par Furthermore, we consider the overall query complexity of the approximation scheme. Since the term $A_p(t)$ in $Y_k$ involves the construction of $e^{-A_1 t}$, we decompose it into the following form:
\begin{equation*}
    \begin{aligned}
        Y_k&=h^k\sum\limits_{0\le m_1<\cdots<m_k< N_m}e^{-\frac{A_1(t-m_kh)}{N_t}}A_2e^{-\frac{A_1(m_k-m_{k-1})h}{N_t}}\\
        &\quad\times\cdots e^{-\frac{A_1(m_2-m_1)h}{N_t}}A_2e^{-\frac{m_1h}{N_t}}.
    \end{aligned}
\end{equation*}
It can be seen that the key lies in the construction of $e^{-\frac{A_1hm}{N_t}}$ (where $m=1,\cdots, N_m$). Therefore, the $\mathrm{HAM}_A$ we construct addresses this issue. Using the same setup as in Section~\ref{section:3}, we only need to design circuits for $e^{-\frac{A_1h2^j}{N_t}}$ (where $j=1,\cdots,\lceil\log N_m\rceil$). The remaining operators can be implemented with a gate complexity of at most $\mathcal{O}(\log N_m)$. Thus, the query complexity for preparing $\mathrm{HAM}_A$ can be obtained from the conclusions in Section~\ref{section:4-1}, i.e., $\mathcal{O}\left(\sqrt{A_{1,\max}t\log\varepsilon^{-1}}\right)$. On the other hand, the number of oracle calls for $\mathrm{HAM}_A$ and $A_2$ is determined by $N_tM$, which is $\mathcal{O}\left(A_{2,\max} t \frac{\log \varepsilon^{-1}}{\log \log \varepsilon^{-1}}\right)$. This completes the proof.
\end{proof}
\end{lemma}
\begin{lemma}
\label{lemma:C2}
\par Let $A_1$ be a Hermitian matrix satisfying $A_1\succeq 0$, and define $A_{1,\max}=\|A_1\|$, $A_{2,\max}=\|A_2\|$, $\tau_2=tA_{2,\max}<\log 2$. For any error tolerance $0<\varepsilon<2^{-e}$, if the truncation order $M$ is chosen such that 
\begin{equation}
    \begin{aligned}
        \label{equ:unitary:1}
        M=\mathcal{O}\left(\frac{\log\varepsilon^{-1}}{\log\log\varepsilon^{-1}}\right),
    \end{aligned}
\end{equation}
with the implied multiplicative constant at most $e^2$, then the amplitude-driven APS $e^{-A_1t}\cdot\mathcal{T}e^{-i\int_0^t A_a(s)\d s}$ can be approximated by a $M$-term series with error at most $\varepsilon$:
\begin{equation*}
    \begin{aligned}
        \left\|e^{-A_1t}\cdot\mathcal{T}e^{-i\int_0^tA_a(s)\d s}-\sum\limits_{k=0}^M(-i)^kX_k\right\|\le\varepsilon,
    \end{aligned}
\end{equation*}
where $X_k$ is the $k-$th term in the Dyson series, given by:
\begin{equation}
    \begin{aligned}
        \label{equ:unitary:2}
        X_k=\frac{1}{k!}\int_0^t\cdots\int_0^te^{-A_1t}\mathcal{T}\left[A_a(t_1)\cdots A_a(t_k)\right]\d^k t.
    \end{aligned}
\end{equation}
\begin{proof}
\par It should be noted that the definition of $X_k$ in Eq.~(\ref{equ:unitary:2}) admits the following equivalent representation:
\begin{equation*}
    \begin{aligned}
        X_k&=e^{-A_1t}\int_0^t\d t_kA_a(t_k)\cdots\int_0^{t_2}dt_1A_a(t_1)\\
        =&\int_0^t\cdots \int_0^{t_2}e^{-A_1(t-t_k)}A_2e^{-A_1(t_k-t_{k-1})}\\
        &\qquad\times\cdots e^{-A_1(t_2-t_1)}A_2e^{-A_1t_1}\d t_1\cdots \d t_k,
    \end{aligned}
\end{equation*}
Based on this equivalent formula and the fact that $A_1$ is a positive semidefinite matrix, we can derive the following upper bound estimate for $\|X_k\|$:
\begin{equation*}
    \begin{aligned}
        \|X_k\|&\le\int_0^t\cdots\int_0^{t_2}\prod_{j=1}^k\left\|A_2\right\|\d t_1\cdots \d t_k\le \frac{\tau_2^k}{k!}.
    \end{aligned}
\end{equation*}
Therefore, under the condition $M>2\tau_2$, we obtain the following bound:
\begin{equation}
    \begin{aligned}
        \label{equ:unitary:3}
        &\left\|e^{-A_1t}\cdot\mathcal{T}e^{-i\int_0^tA_a(s)\d s}\right.-\left.\sum\limits_{k=0}^M(-i)^kX_k\right\|\\
        \le&\sum\limits_{k=M+1}^\infty\left\|X_k\right\|
        \le \sum\limits_{k=M}^\infty\frac{\tau_2^k}{k!}\le \frac{\tau_2^k}{M!}\sum\limits_{k=M}^\infty\frac{1}{2^{k-M}}\le \left(\frac{e\tau_2}{M}\right)^M,
    \end{aligned}
\end{equation}
where this equation can also be viewed as a transformation involving the Lambert function $\mathcal{W}(x)$, and we can derive a lower bound for $M$:
\begin{equation}
    \begin{aligned}
        \label{equ:unitary:4}
        M\ge\frac{\log\varepsilon^{-1}}{\mathcal{W}(\frac{\log\varepsilon^{-1}}{e\tau})},
    \end{aligned}
\end{equation}
where we have set $\tau_2\le\min\{\log\varepsilon^{-1},e\log 2\}$. Using the inequality $\mathcal{W}(x) \ge \frac{\log x + 1}{2}$ for $x \ge 1$~\cite{Low2017Optimal}, we can establish a lower bound for $M$ via the third term in Eqs. (\ref{equ:unitary:4}):
    \begin{equation}
        \begin{aligned}
            \label{equ:unitary:5}
            M\ge \left\{\frac{2\log\varepsilon^{-1}}{1+\log\log\varepsilon^{-1}},2\tau_2\right\}=\mathcal{O}\left(\frac{\log\varepsilon^{-1}}{\log\log\varepsilon^{-1}}\right).
        \end{aligned}
    \end{equation}
This completes the proof.
\end{proof}
\end{lemma}
\begin{lemma}
\label{lemma:C3}
\par Let $A_1$ be a Hermitian matrix satisfying $A_1\succeq 0$, and define $A_{1,\max}=\|A_1\|$, $A_{2,\max}=\|A_2\|$, $\tau_2=tA_{2,\max}<\log 2$. For any error tolerance $0<\varepsilon<2^{-e}$, define $m_k = \left\lfloor \frac{t_k}{h} \right\rfloor$. If the number of discrete intervals $N_m$ is chosen such that
\begin{equation}
    \begin{aligned}
        \label{equ:unitary:6}
        N_m\ge \frac{Me^{\tau_2}(A_{1,\max}A_{2,\max}t+A_{2,\max}(\tau_2+1))}{\varepsilon},
    \end{aligned}
\end{equation}
then the sum $\sum\limits_{k=0}^M(-i)^kX_k$ can be approximated by $\sum\limits_{k=0}^M(-i)^kY_k$ within an error bound of $\varepsilon$:
\begin{equation*}
    \begin{aligned}
        \left\|\sum\limits_{k=0}^M(-i)^kY_k-\sum\limits_{k=0}^M(-i)^kX_k\right\|\le\varepsilon,
    \end{aligned}
\end{equation*}
where $Y_k$, which does not contain the time-ordering operator $\mathcal{T}$, is defined as
\begin{equation*}
    \begin{aligned}
        Y_k=e^{-A_1t}h^k\sum\limits_{0\le m_1<\cdots<m_k< N_m}A_a(m_kh)\cdots A_a(m_1h),
    \end{aligned}
\end{equation*}
in which $m_1,\cdots,m_k$ are indices in $0,\cdots,N_m$.
\begin{proof}
\par According to the proof in Lemma~\ref{lemma:B3}, the approximation of $X_k$ consists of two parts: the first part handles the time dependence of $A_a(t)$, and the second part deals with the integration $\int_0^{t_{j+1}}$.\\
    $\bullet$ {\bf Step 1: the time dependence of $A_a(t)$.} In the proof of Step 1 of Lemma~\ref{lemma:B3}, we need to introduce the following intermediate variable $Z_k$:
\begin{equation}
    \begin{aligned}
        \label{equ:unitary:7}
        Z_k&=e^{-A_1t}\int_0^t\d t_kA_a(m_kh)
        \cdots\int_0^{t_2}\d t_1A_a(m_1h)\\
        &=\int_0^t\cdots\int_0^{t_2}e^{-A_1(t-m_kh)}A_2e^{-A_1(m_k-m_{k-1})h}\\
        &\qquad\times\cdots e^{-A_1(m_2-m_1)h}A_2e^{-A_1m_1h}\d t_1\cdots\d t_k,
    \end{aligned}
\end{equation}
and the key point lies in the magnitude of $\|X_k-Z_k\|$, with an upper bound
\begin{widetext}
    \begin{align*}
        \|X_k-Z_k\|&\le \sum\limits_{k=1}^M\left(\int_0^{t_{k+1}}\cdots\int_0^{t_2}\|e^{-A_1t}A_a(t_p)\cdots A_a(t_2)e^{A_1t_2}\|\cdot\|e^{-A_1t_2}(A_a(t_1)-A_a(m_1h))\|\d t_1\cdots \d t_k\right.\\
        &\qquad\qquad+\sum\limits_{j=2}^{k-1}\int_0^{t_{k+1}}\cdots\int_0^{t_2}\|e^{-A_1t}A_a(t_p)\cdots A_a(t_{j+1})e^{A_1t_j}\|\\
        &\qquad\qquad\qquad\cdot\|e^{-A_1t_{j+1}}(A_a(t_j)-A_a(m_jh))e^{A_1m_{j-1}h}\|\cdot\|e^{-A_1m_{j-1}h}A_a(m_{j-1}h)\cdots A_a(m_1h)\|\d t_1\cdots \d t_k\\
        &\qquad\qquad+\left.\int_0^{t_{k+1}}\cdots\int_0^{t_2}\|e^{-A_1t}(A_a(t_k)-A_a(m_kh))e^{A_1m_{k-1}h}\|\cdot\|e^{-A_1m_{k-1}h}A_a(m_{k-1}h)\cdots A_a(m_1h)\|\d t_1\cdots \d t_k\right)\\
        &\le hA_{1,\max}A_{2,\max}^k\sum\limits_{j=1}^k\left(\int_0^t\d t_k\cdots\int_0^{t_{j+1}}\d t_j\cdots\int_0^{t_2}\d t_1\right)\\
        &\le hA_{1,\max}\frac{\tau_2^k}{(k-1)!},
    \end{align*}
\end{widetext}
where we use the fact $\frac{\d}{\d x}e^{A_1x}A_2e^{-A_1x}=e^{A_1x}[A_1,A_2]e^{-A_1x}$. Therefore, using this upper bound estimate, we can obtain the error between the sums of $X_k$ and $Z_k$ as follows:
\begin{equation}
    \begin{aligned}
        \label{equ:unitary:8}
        &\left\|\sum\limits_{k=0}^M(-i)^kX_k-\sum\limits_{k=0}^M(-i)^kZ_k\right\|\le\sum\limits_{k=1}^M\|X_k-Z_k\|\\
        \le& hA_{1,\max}\sum\limits_{k=1}^M\frac{\tau_2^k}{(k-1)!}<hte^{\tau_2}A_{1,\max}A_{2,\max}.
    \end{aligned}
\end{equation}
$\bullet$ {\bf Step 2: the integration $\int_0^{t_{j+1}}$.} For the treatment of the upper limit of integration, we also adopt an approach similar to Lemma~\ref{lemma:B3}. First, we rewrite $Y_k$ in the following form:
\begin{equation}
    \begin{aligned}
        \label{equ:unitary:9}
        Y_k&=e^{-A_1t}\int_0^t\d t_kA_a(m_kh)
        \cdots\int_0^{m_2h}\d t_1A_a(m_1h)\\
        &=\int_0^t\cdots\int_0^{m_2h}e^{-A_1(t-m_kh)}A_2e^{-A_1(m_k-m_{k-1})h}\\
        &\qquad\times\cdots e^{-A_1(m_2-m_1)h}A_2e^{-A_1m_1h}\d t_1\cdots\d t_k,
    \end{aligned}
\end{equation}
and the key to this proof also lies in estimating the upper bound of $\|Z_k-Y_k\|$:
\begin{widetext}
    \begin{align*}
        \|Z_k-Y_k\|&\le \sum\limits_{k=1}^M\left(\int_0^{t_{k+1}}\cdots\int_0^{t_3}\|e^{-A_1t}A_a(m_kh)\cdots A_a(m_2h)e^{A_1m_2h}\|\d t_2\cdots \d t_k\cdot\int_{m_2h}^{t_2}\|e^{-A_1m_2h}A_a(m_1h)\|\d t_1\right.\\
        &\quad\qquad+\sum\limits_{j=2}^{k-1}\int_0^{t_{k+1}}\cdots\int_0^{t_{j+2}}\int_0^{t_{j+1}}\cdots\int_0^{t_2}\|e^{-A_1t}A_a(m_kh)\cdots A_a(m_{j+1}h)e^{A_1m_{j+1}h}\|\\
        &\quad\qquad\qquad\cdot\|e^{-A_1m_{j-1}h}A_a(m_{j-1}h)\cdots A_a(m_1h)\|\d t_1\cdots \d t_{j+1}\d t_{j-1} \d t_k\int_{m_{j+1}h}^{t_{j+1}}\|e^{-A_1m_{j+1}h}A_a(m_jh)e^{A_1m_{j-1}h}\|\d t_j\\
        &\quad\qquad+\left.\int_{m_{k+1}h}^{t_{k+1}}\|e^{-A_1t}A_a(m_kh)e^{A_1m_{k-1}h}\|\d t_k\cdot\int_0^{t_{k+1}}\cdots\int_0^{t_2}\cdot\|e^{-A_1m_{k-1}h}A_a(m_{k-1}h)\cdots A_a(m_1h)\|\d t_1\cdots \d t_{k-1}\right)\\
        \le& hA_{2,\max}^k\frac{kt^{k-1}}{(k-1)!}.
    \end{align*}
\end{widetext}
By setting $t_{k+1}=t$ and $m_{k+1}=\lfloor\frac{t}{h}\rfloor$, the difference between the sums of $Z_k$ and $Y_k$ is calculated as follows:

\begin{equation}
    \begin{aligned}
        \label{equ:unitary:10}
        &\left\|\sum\limits_{k=0}^M(-1)^kZ_k-\sum\limits_{k=0}^M(-1)^kY_k\right\|\\
        \le&\sum\limits_{k=1}^M\left\|Z_k-Y_k\right\|
        \le h\sum\limits_{k=1}^MA_{2,\max}^k\frac{kt^{k-1}}{(k-1)!}\\
        \le& he^{\tau_2}A_{2,\max}(\tau_2+1).
    \end{aligned}
\end{equation}
\par Combining the results from Eqs. (\ref{equ:unitary:8}) and (\ref{equ:unitary:10}), we can obtain an upper bound for the difference between the sums of $X_k$ and $Y_k$:
\begin{equation*}
    \begin{aligned}
        &\left\|\sum\limits_{k=0}^M(-i)^kX_k-\sum\limits_{k=0}^M(-i)^kY_k\right\|\\
        &\qquad\le he^{\tau_2}(A_{1,\max}A_{2,\max}t+A_{2,\max}(\tau_2+1)).
    \end{aligned}
\end{equation*}
To ensure this upper bound is controlled by the error tolerance $\varepsilon$, $N_m$ must satisfy the following condition:
\begin{equation*}
    \begin{aligned}
        N_m\ge \frac{Me^{\tau_2}(A_{1,\max}A_{2,\max}t+A_{2,\max}(\tau_2+1))}{\varepsilon}.
    \end{aligned}
\end{equation*}
This completes the proof.
\end{proof}
\end{lemma}

\begin{remark}
\label{lemma:C4}
\par For the approximation of $e^{-At}$:
\begin{equation}
    \begin{aligned}
        \label{equ:unitary:11}
        e^{-At}&\approx\left[\sum\limits_{k=0}^M(-i)^k\left(\frac{h}{N_t}\right)^k\right.\\
        &\left.\sum\limits_{0\le m_1<\cdots<m_k< N_m}\left(\overleftarrow{\prod_{j=1}^k}e^{-\frac{A_1(m_{j+1}-m_j)h}{N_t}}A_2\right)\cdot e^{-\frac{A_1m_1h}{N_t}}\right]^{N_t},
    \end{aligned}
\end{equation}
where $m_{k+1}h=t$, we still need to concretely implement $e^{-A_1 \frac{m_jh}{N_t}}$. To simplify notation, set $r_j = \frac{m_jh}{N_t}$.  
Using the fast-forwarding of Hermitian operators from Section~\ref{section:4-1}, we have
\begin{equation}
    \begin{aligned}
        \label{equ:unitary:12}
        e^{-A_1 r_j}
        &= \frac{1}{2\sqrt{\pi}}\int_{-\infty}^{+\infty} e^{-\frac{\eta^2}{4}} e^{i\eta \sqrt{A_1 r_j}}d\eta \\
        &= \frac{1}{2\sqrt{\pi}}\int_{-R}^{R} e^{-\frac{\eta^2}{4}} e^{i\eta \sqrt{A_1 r_j}}d\eta + \mathcal{O}(\varepsilon) \\
        &= \frac{1}{2\sqrt{\pi}}h_a\sum\limits_{k=0}^{N_A-1} e^{-\frac{\eta_k^2}{4}} e^{i\eta_k\sqrt{A_1 r_j}} + \mathcal{O}(\varepsilon),
    \end{aligned}
\end{equation}
where $R$ is the integration cut off, $N_A = \frac{2R}{h_a}$, $\eta_k=-R+h_a k$ and $h_a$ is the step size. Because $e^{i\eta \sqrt{A_1 r_j}}$ is unitary with norm always equal to $1$, the integrand is bounded in norm by the Gaussian factor alone; hence the choice of the cut off $R$ and the Riemann sum discretisation can be made independent of the parameter $r_j$, a very convenient property.

\par By examining the structure of Eq.~(\ref{equ:unitary:11}), each product chain requires error control on at most $N_tM$ occurrences of $e^{-A_1 r_j}$. To guarantee an overall error smaller than $\varepsilon$, each individual $e^{-A_1 r_j}$ must be approximated with error at most $\frac{\varepsilon}{N_tM}$. Using the results of Section~\ref{section:4-1}, this translates to a cut off at most $\mathcal{O}\bigl(\sqrt{A_1 r_j \log\frac{\varepsilon}{N_tM}}\bigr) = \mathcal{O}\bigl(\sqrt{A_1 t \log\varepsilon}\bigr)$. Mimicking the proof of Lemma~\ref{lemma:C3} for the Riemann sum error, one may choose $N_A = \mathcal{O}\bigl(\frac{1}{\varepsilon}\bigr)$; we omit the details here.

\par We can now build a super oracle that provides all the required block-encodings of $e^{-A_1 r}$ in a single unified circuit.  
Define the PREPARE operator (acting on an ancilla register $a$) as
\begin{equation}
    \begin{aligned}
        \label{equ:unitary:13}
        P\ket{0}_a=\frac{1}{\sqrt{\lambda}} \sum_{k=0}^{N_A-1} e^{-\frac{\eta_k^2}{8}} \ket{k}_a,
        \quad 
        \lambda = \sum_{k=0}^{N_A-1} e^{-\frac{\eta_k^2}{4}},
    \end{aligned}
\end{equation}
and let the controlled SELECT operator be
\begin{equation}
    \begin{aligned}
        \label{equ:unitary:14}
        \mathrm{SEL} = \sum_{j=0}^{N_m-1} \ket{j}\bra{j}_c \otimes 
        \left(\sum_{k=0}^{N_A-1} \ket{k}\bra{k}_a \otimes 
        e^{i\eta_k \sqrt{A_1 r_j}} \right),
    \end{aligned}
\end{equation}
where the control register $c$ encodes the desired time parameter $r_j$.  The overall super oracle
\begin{equation}
    \begin{aligned}
        \label{equ:unitary:15}
        \textrm{HAM}_A = (I_c \otimes P^\dagger \otimes I_s) \mathrm{SEL} (I_c \otimes P \otimes I_s)
    \end{aligned}
\end{equation}
then satisfies, after post selecting the ancilla on $\ket{0}_a$,
\begin{equation}
    \begin{aligned}
        \label{equ:unitary:16}
        &(\bra{0}_a \otimes I_c \otimes I_s)\textrm{HAM}_A(\ket{0}_a \otimes I_c \otimes I_s)\\
        =& \sum\limits_{j=0}^{N_m-1}|j\rangle\langle j|_c\otimes\frac{1}{\lambda} \sum_{k=0}^{N_A-1} e^{-\frac{\eta_k^2}{4}} e^{i\eta_k \sqrt{A_1 r_j}}\\
        \approx& \sum\limits_{j=0}^{N_m-1}|j\rangle\langle j|_c\otimes e^{-A_1 r_j}.
    \end{aligned}
\end{equation}
Thus a single call to $\textrm{HAM}_A$, with the control register set to the appropriate $\ket{j}$, yields a block encoding of $e^{-A_1 r_j}$ for any of the required time parameters appearing in Eq.~(\ref{equ:unitary:11}). All further combinations that build the bracket operator $[\cdots]^{N_t}$ can be expressed solely in terms of queries to this super oracle (together with a block encoding of $A_2$), making the query complexity directly the number of invocations of $\textrm{HAM}_A$.
\end{remark}

\subsection{Beyond Square-Root Fast-Forwarding}
\label{section:10-3}
\par In Eq.~(\ref{equ:fast:2}), we presented a square-root fast-forwarding algorithm intended for time-independent purely dissipative (Hermitian) operators. Its core logic centers on the Fourier transform of $e^{-x^2}$ given in Eq.~(\ref{equ:fast:1}). Looking back to Eq.~(\ref{equ:LCHS:1}), LCHS also hinges fundamentally on the Fourier transform of $e^{-x} (x \ge 0)$. This prompts a natural question: Can we generalize the Fourier transform of $e^{-x^\alpha} (x \ge 0)$, modeled as
\begin{equation}
    \begin{aligned}
        e^{-x^\alpha} = \int_{-\infty}^{+\infty} \gamma(\eta) e^{-i \eta x} \d\eta,
    \end{aligned}
\end{equation}
where $\alpha \ge 1$, to construct fast-forwarding protocols reaching beyond the square-root acceleration limit? 
\begin{figure}[tbp]
    \centering
    \includegraphics[width=\linewidth]{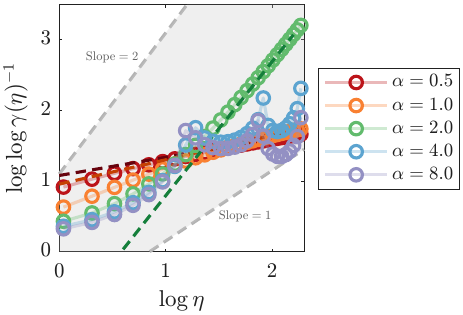}
    \caption{
    \label{figure:fourier-coefficient-decay}\textbf{Decay of Fourier coefficients for $e^{-|x|^\alpha}$.} The curves compare $\log\log\gamma(\eta)^{-1}$ against $\log\eta$ for $\alpha=0.5,1,2,4,8$. The Gaussian case $\alpha=2$ is the steepest stable one, while $\alpha>2$ becomes oscillatory and does not improve the asymptotic decay.}
\end{figure}

\begin{figure*}[htbp]
    \centering
    \includegraphics[width=0.75\linewidth]{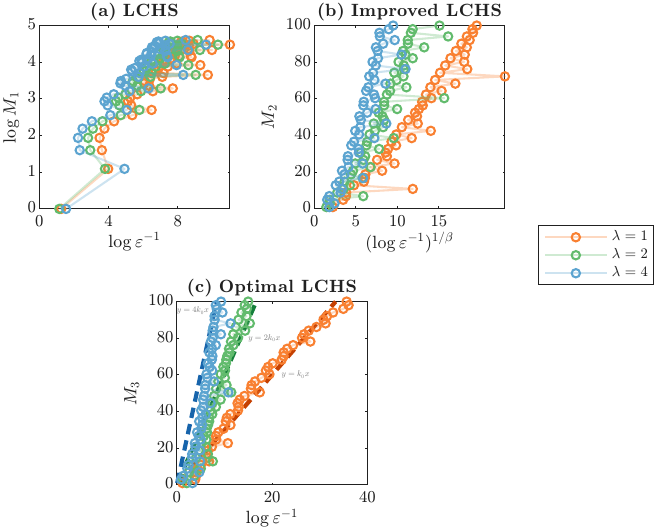}
    \caption{
    \label{figure:lchs-scaling}\textbf{Tolerance dependence of LCHS variants.} Panels (a)--(c) show the original, improved, and optimal LCHS scalings for $\lambda=1,2,4$. Even after optimization, the truncation cost still depends multiplicatively on the target tolerance.}
\end{figure*}

\par To verify this, we perform an inverse Fourier transform on $e^{-|x|^\alpha}$ to examine the properties of $\gamma(\eta)$:
\begin{equation}
    \begin{aligned}
        \gamma(\eta) = \int_{-\infty}^{+\infty} e^{-|x|^\alpha} e^{-i \eta x} \d x.
    \end{aligned}
\end{equation}
When $\alpha = 1$, $\gamma$ represents the probability density function of the Cauchy distribution~\cite{An2023QuantumAF}. When $\alpha = 2$, it gives the Gaussian distribution~\cite{Jin2026Transmutation,Kharazi2026Sublinear}. For $1 < \alpha < 2$, it forms the L\'evy distribution~\cite{An2023Optimal}. However, when $\alpha > 2$, $\gamma$ is no longer real-valued and nonnegative, which prevents its use as a standard probability density function on the real line. In Fig.~\ref{figure:fourier-coefficient-decay}, we plot the real part of $\log \log \gamma^{-1}$ against $\log \eta$. A steeper slope indicates a faster decay rate. We observe that the decay rate uniquely maximizes at $\alpha = 2$. Furthermore, the curves for $\alpha > 2$ exhibit severe numerical oscillations. This experiment demonstrates that simply generalizing the Fourier transform for $e^{-x^\alpha} (x \ge 0)$ cannot surpass existing limits.

\section{Further Details on Existing Methods}
\label{section:11}

\subsection{Why LCHS Is Not Strictly Additive}
\label{section:11-1}
\par We now give the Paley-Wiener argument behind the discussion in Section~\ref{section:6-1}. The key point is that the Fourier cutoffs in LCHS must grow with the target tolerance, so the method cannot achieve a strictly additive non-unitary overhead.
\begin{lemma}
\label{lemma:D1}
\par For $\sigma > 0$, suppose a function $f(t)\in L^2(-\infty,+\infty)$ has support in $[-\sigma, \sigma]$. Its Fourier transform $F(x)$ can be expanded into an entire function $F(z)$. Moreover, there exist constants $C,N,R$ such that
\begin{equation}
    \begin{aligned}
        \label{equ:LCHS:5}
        |F(z)|\le C(1+|z|)^Ne^{R|\mathrm{Im}z|}.
    \end{aligned}
\end{equation}
\qed
\end{lemma}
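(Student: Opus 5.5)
The plan is the classical Paley--Wiener estimate: write $F$ as an integral over the compact support and bound it directly for complex arguments. Define
\begin{equation*}
F(z)=\int_{-\sigma}^{\sigma} f(t)e^{-izt}\,\d t,\qquad z\in\mathbb{C},
\end{equation*}
which agrees with the Fourier transform on the real axis because $f$ vanishes outside $[-\sigma,\sigma]$. Since $f\in L^2$ and the interval is bounded, Cauchy--Schwarz gives $f\in L^1[-\sigma,\sigma]$ with $\|f\|_1\le\sqrt{2\sigma}\,\|f\|_2$. This integrability is the only analytic input the argument needs.

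First, entireness. For each fixed $t$, the integrand $f(t)e^{-izt}$ is entire in $z$. On any compact set $|z|\le K$ it is dominated by $|f(t)|e^{K\sigma}$, which is integrable. I would then either differentiate under the integral sign, using dominated convergence on difference quotients, or apply Morera's theorem with Fubini. Both routes show $F$ is holomorphic on all of $\mathbb{C}$. Equivalently, one can expand $e^{-izt}=\sum_n(-izt)^n/n!$ and integrate termwise, obtaining a power series with coefficients bounded by $\|f\|_1\sigma^n/n!$ and hence infinite radius of convergence.

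Second, the growth bound. Write $z=x+iy$, so $|e^{-izt}|=e^{yt}\le e^{\sigma|y|}$ for $|t|\le\sigma$. This gives
\begin{equation*}
|F(z)|\le \|f\|_1 e^{\sigma|\mathrm{Im} z|}.
\end{equation*}
Hence Eq.~(\ref{equ:LCHS:5}) holds with $C=\|f\|_1$, $N=0$ and $R=\sigma$. Polynomial factors $(1+|z|)^N$ with $N>0$ would only be needed if $f$ were a distribution of finite order, and this case does not arise here.

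No step is a real obstacle. The only point needing care is justifying the interchange of limit and integral when proving holomorphy, and that follows from the integrability of $f$ on the compact support. For the use in Section~\ref{section:6-1}, the key consequence is the exponential-type bound $e^{R|\mathrm{Im} z|}$. It rules out a compactly supported $\gamma$ reproducing $e^{-\lambda t}$ for $t\ge0$: such an identity would extend analytically to an entire function agreeing with $e^{-\lambda t}$, which grows like $e^{\lambda|t|}$ on the negative real axis and so contradicts the stated bound.
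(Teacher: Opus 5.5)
Your argument is correct. The paper states Lemma~\ref{lemma:D1} without proof, as the classical Paley--Wiener(--Schwartz) theorem, so your direct estimate supplies the argument that the paper only cites.

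The two differ in generality and sharpness. The form with the factor $(1+|z|)^N$ in Eq.~(\ref{equ:LCHS:5}) is the version for compactly supported distributions of finite order, and the citation covers that case. Your elementary computation handles exactly the stated $L^2$ hypothesis. It also gives sharper explicit constants: $C=\|f\|_1\le\sqrt{2\sigma}\,\|f\|_2$, $N=0$, $R=\sigma$.

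For the application in Appendix~\ref{section:11-1}, your version is all that is needed. The contradiction uses only two facts: $F$ is entire, so the identity theorem forces $F(z)=e^{-\lambda z}$ everywhere, and $F$ is at most polynomially growing (indeed bounded) on the real axis. This is incompatible with the exponential growth of $e^{-\lambda t}$ for $t<0$, as you note in your final remark.
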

\noindent Consider the scalar case $L=\lambda I$ with $\lambda>0$, where LCHS assumes
\begin{equation}
    \begin{aligned}
        \label{equ:LCHS:6}
        e^{-\lambda t}=\int_{-\infty}^{+\infty}\gamma(\eta)e^{-i\eta \lambda t}\d\eta.
    \end{aligned}
\end{equation}
To achieve error tolerance $\varepsilon$, one truncates the integral at $-M_1$ and $M_2$ so that
\begin{equation}
    \begin{aligned}
        \label{equ:LCHS:7}
        \int_{-\infty}^{-M_1}\gamma(\eta)e^{-i\eta\lambda t}\d\eta+\int_{M_2}^{+\infty} \gamma(\eta) e^{-i\eta\lambda t}\d\eta<\varepsilon.
    \end{aligned}
\end{equation}
If $M_1$ and $M_2$ were bounded independently of $\varepsilon$, then $\gamma(\eta)$ would have compact support. Lemma~\ref{lemma:D1} would then imply that its Fourier transform extends to an entire function with growth bounded by Eq.~(\ref{equ:LCHS:5}). But along the path $z=-t$ with $t>0$, the left-hand side becomes $e^{\lambda t}$, whose exponential growth is incompatible with that bound. Therefore the truncation boundaries must depend on $\varepsilon$, and implementing the extremal phases $e^{iM_1\lambda t}$ and $e^{-iM_2\lambda t}$ inevitably induces a multiplicative tolerance-dependent overhead. This is the precise sense in which LCHS is near-optimal but not strictly additive. Fig.~\ref{figure:lchs-scaling} illustrates the same obstruction numerically.

\par Fig.~\ref{figure:lchs-scaling} illustrates the same obstruction numerically across representative LCHS variants.

\subsection{Full Block-Matrix Embedding for NDME}
\label{section:11-2}
\par Here we give the full block-matrix realization omitted from Section~\ref{section:6-2}. Based on phase-driven APS and the interaction picture for Lindbladians, define
\begin{equation}
    \begin{aligned}
        u(t)=\mathcal{U}_p^\dagger(t)u_p(t),\quad \tilde{\rho}(t)=\mathcal{U}_p^\dagger(t)\tilde{\rho}_p(t)\mathcal{U}_p(t),
    \end{aligned}
\end{equation}
where $u_p(t)$ and $\tilde{\rho}_p(t)$ are the corresponding interaction-picture variables and $\mathcal{U}_p(t)$ is defined as in Section~\ref{section:2}. The corresponding Lindblad model for $\tilde{\rho}_p(t)$ uses a single jump operator $\tilde{F}_p(t)=\mathcal{U}_p^\dagger(t)\tilde{F}(t)\mathcal{U}_p(t)$. This suggests combining $\tilde{\rho}(t)$ and $u(t)$ into the block matrix
\begin{equation}
    \begin{aligned}
        \rho(t)
        &=\begin{bmatrix}
            \tilde{\rho}(t) & u(t)\\
            u^\dagger(t) & 1
        \end{bmatrix}
        =\begin{bmatrix}
            \mathcal{U}_p^\dagger(t) & O\\
            O & 1
        \end{bmatrix}\rho_p(t)\begin{bmatrix}
            \mathcal{U}_p(t) & O\\
            O & 1
        \end{bmatrix},
    \end{aligned}
\end{equation}
where
\begin{equation}
    \begin{aligned}
        \rho_p(t)=\begin{bmatrix}
            \tilde{\rho}_p(t) & u_p(t)\\
            u_p^\dagger(t) & 1
        \end{bmatrix}.
    \end{aligned}
\end{equation}
This is the interaction picture generated by $H(t)=\begin{bmatrix}A_2(t) \\ & 0\end{bmatrix}$ together with the constraint $\mathrm{tr}[\tilde{\rho}(t)]=0$. Its intrinsic evolution equation is
\begin{equation*}
    \begin{aligned}
        &\frac{\d\rho_p(t)}{\d t}
        =\begin{bmatrix}
            \tilde{F}_p(t) & O\\
            O & 0
        \end{bmatrix}\rho_p(t)
        \begin{bmatrix}
            \tilde{F}_p^\dagger(t) & O\\
            O & 0
        \end{bmatrix}
        +\rho_p(t)\begin{bmatrix}
            -A_p(t) & O\\
            O & 0
        \end{bmatrix}\\
        +&\begin{bmatrix}
            -A_p(t) & O\\
            O & 0
        \end{bmatrix}\rho_p(t)
        +\begin{bmatrix}
            \left\{\rho_p(t),-\frac{1}{2}\tilde{F}_p^\dagger(t) \tilde{F}_p(t)+A_p(t)\right\} & O\\
            O & 0
        \end{bmatrix},
    \end{aligned}
\end{equation*}
where $A_p(t)=\frac{1}{2}\tilde{F}_p^\dagger(t)\tilde{F}_p(t)$. Defining
\begin{equation}
    \begin{aligned}
        F_p(t)=\begin{bmatrix}\tilde{F}_p(t) & O \\ O & 0 \end{bmatrix},\quad F(t)=\begin{bmatrix}\sqrt{2A_1(t)} & O \\ O & 0 \end{bmatrix},
    \end{aligned}
\end{equation}
this becomes the compact Lindbladian equation quoted in Eq.~(\ref{equ:lindbladian:6}):
\begin{equation*}
    \begin{aligned}
        \frac{\d\rho_p(t)}{\d t}
        &=F_p(t)\rho_p(t)F_p^\dagger(t)-\frac{1}{2}\{\rho_p(t),F_p^\dagger(t)F_p(t)\}.
    \end{aligned}
\end{equation*}
Finally, applying the inverse transformation from Eq.~(\ref{equ:lindbladian:3}) embeds Eq.~(\ref{equ:1}) into the canonical NDME master equation with
\begin{equation}
    \begin{aligned}
        H(t)=\begin{bmatrix} A_2(t) \\ & 0 \end{bmatrix},\quad F(t)=\begin{bmatrix}\sqrt{2A_1(t)} & O \\ O & 0 \end{bmatrix}.
    \end{aligned}
\end{equation}
Under APS, NDME is therefore not an independent conceptual mechanism; it is the block-matrix form taken by the dissipative part after the coherent contribution has been separated out.
\end{document}